\newcommand{\ud}{\,\mathrm{d}}
\newcommand{\RR}{\mathbb{R}}
\newcommand{\ZZ}{\mathbb{Z}}
\newcommand{\NN}{\mathbb{N}}
\newtheorem{thm}{Theorem}[section]
\newtheorem{cor}[thm]{Corollary}
\newtheorem{lem}[thm]{Lemma}
\newtheorem{defn}[thm]{Definition}
\newtheorem{assumption}[thm]{Assumption}
\newtheorem{example}[thm]{Example}
\newtheorem{rem}[thm]{Remark}
\newcommand{\argmax}{\operatornamewithlimits{argmax}}
\title{Wave-shape function analysis -- when cepstrum meets time-frequency analysis}
\author[C.-Y.~Lin]{Chen-Yun~Lin}
\address{Department of Mathematics, University of Toronto}
\email{cylin@math.toronto.edu}
\author[L.~Su]{Li~Su}
\address{Research Center for Information Technology Innovation, Academia Sinica}
\email{li.sowaterking@gmail.com}
\author[H.-T.~Wu]{Hau-Tieng~Wu}
\address{Department of Mathematics, University of Toronto}
\email{hauwu@math.toronto.edu}
\begin{document}

\begin{abstract}
We propose to combine cepstrum and nonlinear time-frequency (TF) analysis to study {multiple} component oscillatory signals with time-varying frequency and amplitude and with time-varying non-sinusoidal oscillatory pattern.
The concept of cepstrum is applied to eliminate the wave-shape function influence on the TF analysis, and we propose a new algorithm, named {\em de-shape synchrosqueezing transform (de-shape SST)}.
The mathematical model, {\em adaptive non-harmonic model}, is introduced and the de-shape SST algorithm is theoretically analyzed.
In addition to simulated signals, several different physiological, musical and biological signals are analyzed to illustrate the proposed algorithm.
\end{abstract}

\maketitle

\section{Introduction}

Time series is a ubiquitous datatype in our life, ranging from finance, medicine, geology, etc.
It is clear that different problems depend on different interpretation and processing of the observed time series. In some situations, the information can be easily read from the signal, for example, the cardiac arrest could be easily read from the electrocardiogram (ECG) signal; in others, it is less accessible, for example, the heart rate variability (HRV) hidden inside the ECG signal; in yet others, the information might be masked and cannot be read directly from the observed time series. %
This comes from the fact that while the time series encodes the temporal dynamics of the system under observation, most of {the} time the dynamical information we could perceive is masked or deformed due to the observation process and the nature of the physiology.
When the information is masked or deformed but exists in the observed time series, we might need more sophisticated approaches to extract the information relevant to the situation we have interest {in}. In general, inferring the dynamical information from the time series is challenging.

We could view the challenge in two parts. First, we need to choose a model to quantify the recorded signal, which captures the features or information about the underlying dynamical system we have interest {in}. This model could come from the field background knowledge, or in some cases it could be relatively blind. Second, we need to design an associated algorithm to extract the desired features from the recorded signal. With the acquired features, we could proceed to study the dynamical problem we have interest {in}.

We take physiological signals to illustrate the challenge of the modeling issue. Note that the procedure could be applied to other suitable fields. It is well known that how the signal oscillates contains plenty of information about a person's health condition.
Based on the oscillatory behavior and the widely studied Fourier analysis, common features we could discuss are the frequency, which represents how fast the signal oscillates, and the amplitude, which represents how strongly the signal oscillates at that frequency. However, these features have been found limited when the signal is not stationary, which is a property shared by most physiological signals. Indeed, { these signals mostly oscillate} with time-varying frequency and amplitude. To capture this property, we could consider the {\em adaptive harmonic model} encoding the features {\em instantaneous frequency (IF)} and {\em amplitude modulation (AM)} \cite{Daubechies_Lu_Wu:2011,Chen_Cheng_Wu:2014}; that is, the signal is modeled as
\begin{equation}\label{Introduction:AHM}
f_0(t)=A(t)\cos(2\pi\phi(t)),
\end{equation}
where $A$ is a smooth positive function and $\phi$ is a smooth monotonically increasing function. In other words, at time $t$, the signal $f$ repeats itself as a sinusoidal function within about $1/\phi'(t)$ seconds, and the oscillation is modulated by the AM function $A(t)$. These features have been proved useful and could well represent the physiological dynamics and health status, and have been applied to different problems \cite{Lin_Hseu_Yien_Tsao:2011,Lin_Wu_Tsao_Yien_Hseu:2014,Wu_Talmon_Lo:2015}.

There are actually more detailed features embedded in the oscillatory signals that cannot be captured by (\ref{Introduction:AHM}). One particular feature is the non-sinusoidal oscillatory pattern.
For example, respiratory flow signals usually do not oscillate like the sinusoidal function, since the inspiration is normally shorter than the expiration, and this difference is intrinsic to the respiratory system \cite{Benchetrit:2000}.
These observations lead us to consider the following model  \cite{Wu:2013,Yang:2014,Hou_Shi:2016},
\begin{equation}\label{Introduction:ANHM0}
f_1(t)=A(t)s(\phi(t)),
\end{equation}
where $A(t)$ and $\phi(t)$ are the same as those of (\ref{Introduction:AHM}), and $s$ is a real 1-periodic function {with the unitary $L^2$ norm}, that is $s(t+1)=s(t)$ for all $t$, so that the first Fourier coefficient $\hat{s}(1)\neq 0$, which could be different from the cosine function.
We call the periodic signal $s(t)$ the {\em wave-shape function}, $\phi(t)$ the phase function, the derivative $\phi'(t)$ the IF, and $A(t)$ the AM of $f_1(t)$.  Note that when $s$ is smooth enough, (\ref{Introduction:ANHM0}) could be expanded pointwisely by the Fourier series as
\begin{equation}\label{Introduction:ANHM1}
f_1(t)=\sum_{k=0}^\infty A(t)a_k\cos(2\pi k\phi(t)+\alpha_k),
\end{equation}
where $a_k\geq0$, $k\in\NN\cup\{0\}$ are associated with the Fourier coefficients of $s$, $\alpha_0=0$ and $\alpha_k\in[0,2\pi)$, $k\in\NN${, and $a_0^2+2\sum_{k=1}^\infty a_k^2=1$}. Note that we could have two different aspects of the same signal $f_1$. First, we could view it as an oscillatory signal with one oscillatory component with non-sinusoidal oscillation (\ref{Introduction:ANHM0}). Second, we also could view it as an oscillatory signal with multiple oscillatory components with the cosine oscillatory pattern (\ref{Introduction:ANHM1}); in this case, we call the first oscillatory component $A(t)a_1\cos(2\pi \phi(t)+\alpha_1)$ the {\em fundamental component} and $A(t)a_k\cos(2\pi k\phi(t)+\alpha_k)$, $k\geq 2$, the {\em $k$-th multiple} of the fundamental component. Clearly, the IF of the $k$-th multiple is $k$-times that of the fundamental component. Note that $A(t)a_0$ could be viewed as the {trend coming from the DC (direct current) or zero-frequency} term of the wave-shape function. While the second viewpoint (\ref{Introduction:ANHM1}) is better for the theoretical analysis, the first viewpoint (\ref{Introduction:ANHM0}) is more physical in several applications.

{ Let us take} the ECG signal as an example, where the IF, AM and the wave-shape function have their own physiological meanings.
The oscillatory morphology of the ECG signal, the wave-shape function, reflects not only the electrical pathway inside the heart and how the sensor detects the electrophysiological dynamics, but also the respiration as well as the heart anatomy. Several clinical diseases are diagnosed by reading the oscillatory morphology. With these physiological understanding, it is better to consider model (\ref{Introduction:ANHM0}) to study the ECG signal and view IF, AM and wave-shape function as separate features.
As for IF and AM, it is well known that while the rate of the pacemaker is constant, the heart rate generally is not constant. The discrepancy comes from neural and neuro-chemical influences on the pathway from the pacemaker to the ventricle. This non-constant heart beat rate could be modeled as the IF of the ECG signal.
The AM of the ECG signal is directly related to the respiration via the variation of thoracic impedance. Indeed, when the lung is full of air, the thoracic impedance increases and hence and ECG amplitude decreases, and vice versa. Note that IF and AM could be captured by both (\ref{Introduction:ANHM0}) and (\ref{Introduction:ANHM1}).

Several algorithms were proposed to extract IF and AM from a given oscillatory signal in the past decade, like empirical mode decomposition \cite{Huang_Shen_Long_Wu_Shih_Zheng_Yen_Tung_Liu:1998}, reassignment method (RM) \cite{Auger_Flandrin:1995}, synchrosqueezing transform (SST) \cite{Daubechies_Lu_Wu:2011}, concentration of frequency and time \cite{Daubechies_Wang_Wu:2016}, Blaschke decomposition \cite{Coifman_Steinerberger:2015}, iterative filtering \cite{Cicone_Liu_Zhou:2014}, sparsification approach \cite{Hou_Shi:2013a}, approximation approach \cite{Chui_Mhaskar:2016}, convex optimization \cite{Kowalski_Meynard_Wu:2015}, Gabor transform based on different selection criteria \cite{Balazs_Dorfler_Jaillet_Holighaus_Velasco:2011,Ricaud_Stempfel_Torresani:2014}, etc. In general, we could view these methods as a nonlinear time-frequency (TF) analysis.
However, to capture the wave-shape function, an extra step is needed  -- we could fit a non-sinusoidal periodic function to the signal after/while extracting the IF by, for example, applying the functional regression \cite[Section 4.7]{Chui_Lin_Wu:2015}, designing a dictionary \cite{Hou_Shi:2016} or unwrapping the phase \cite{Yang:2014}. The obtained features have been used to study field problems, such as the sleep stage prediction \cite{Wu_Talmon_Lo:2015}, the blood pressure analysis \cite{Wu_Chang_Wu_Wang_Yang_Wu:2015}. See \cite{Daubechies_Wang_Wu:2016} for a review of the applications.

As useful as the above-mentioned model and algorithms { are} to extract dynamical features from time series, there are{, however,} several unsolved limitations.
First, for most physiological signals, the wave-shape function {varies from time to} time. The time-varying wave-shape function might { prevent} the current available methods from extracting the wave-shape function. We will provide physiological details in Section \ref{Section:Limitation}.
Second, there might be more than one oscillatory component in a signal, and each oscillatory component has its own wave-shape function. See Figure \ref{fig:Introduction:Example0sig} for an photoplethysmogram signal (PPG) as an example. In this PPG signal, there are two oscillatory components, hemodynamic rhythm and respiratory rhythm.
Third, although we could obtain reasonable information about IF and AM from the above-mentioned approaches, when the signal has multiple oscillatory components with non-sinusoidal waves, these methods are limited. In particular, the multiples of different fundamental components will interfere with each other.
Furthermore, an automatic determination of the number of oscillatory components becomes more difficult when each component oscillates with a non-sinusoidal wave.
Hence, modifications are needed.

\begin{figure}[h!]
\begin{centering}
\includegraphics[width= \textwidth]{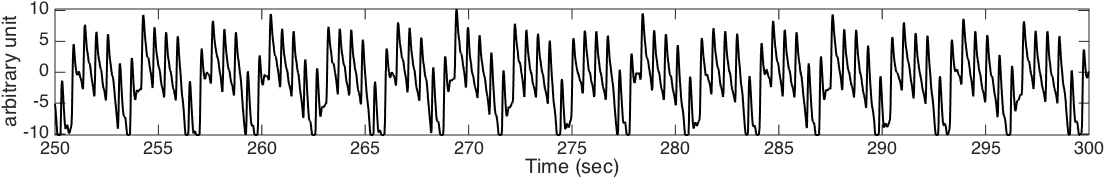}
\end{centering}
\caption{\label{fig:Introduction:Example0sig} A photoplethysmogram signal. It is visually clear that there are two rhythms inside the signal. The faster one is associated with the heart rate, which beats about 100 times per minute; the slower one is associated with the respiration, which is about 18 times per minute.}
\end{figure}

In this paper, we resolve these limitations. We introduce the adaptive non-harmonic model to model oscillatory signals with multiple components and time-varying wave-shape functions. Motivated by cepstrum, we introduce an algorithm called {\em de-shape SST} to alleviate the influence caused by non-sinusoidal wave-shape functions in the TF analysis. Hence, we provide an enhanced TF representation in the following sense -- the de-shape SST would provide a TF representation with only IF and AM information without the influence of non-sinusoidal wave-shape functions.

We illustrate the effectiveness of de-shape SST by showing results on a simulated signal. In this example, the clean signal $f(t)$ is composed of two oscillatory components $f_{1}$ and $f_{2}$, where $f_{1}(t)=A_1(t)(\sum_{k\in\ZZ}\delta_k\star h)(t)\chi_{[0,60]}(t)$, $A_{1}(t)=1.5e^{-(\frac{t-20}{100})^2}$, $h(t)=e^{-18t^2}$, $\delta_k$ is the Dirac delta measure supported at $k$, $\chi_{I}$ is the indicator function supported on $I\subset \RR$ and $f_{2}(t)=A_{2}(t)\text{mod}(\phi_{2}(t),1)$, where $A_{2}(t)>0$ and $\phi_{2}'(t)>0$ are two non-constant smooth function and $\text{mod}(x,1):=x-\lfloor x\rfloor$ for $x\in\RR$ and $\lfloor x\rfloor$ means the largest integer less than or equal to $x$. Clearly, $f_{1}$ oscillates at the fixed frequency $\phi'_1(t)=1$ with a non-sinusoidal wave-shape function -- the wave-shape function of $f_1$ looks like a Gaussian function; $f_{2}$ oscillates with a time-varying frequency with the non-sinusoidal wave, which behaves like a sawtooth wave. This signal is sampled at rate $100$Hz, from $t=0$ to $t=100$ seconds.
Figure \ref{fig:Introduction:Example1sig} shows the two constituents of the total signal $f(t)=f_{1}(t)+f_{2}(t)$, as well as $A_{2}(t)$ and $\phi_{2}'(t)$. Note that $f_1$ ``lives'' during only part of the full time observation time interval.
The panels in Figure \ref{fig:Introduction:Example1} show the results of short-time Fourier transform of $f(t)$, the SST of $f(t)$ and the de-shape SST of $f(t)$. It is clear that compared with the TF representation provided by STFT or SST, the TF representation provided by the de-shape SST contains only the fundamental frequency information of the two oscillatory components, even when the wave-shape function is far from the sinusoidal wave. More discussions will be provided in Section \ref{Section:Numerics}, including how $A_{2}(t)$ and $\phi_{2}(t)$ are generated.

\begin{figure}[h!]
\begin{centering}
\includegraphics[width= \textwidth]{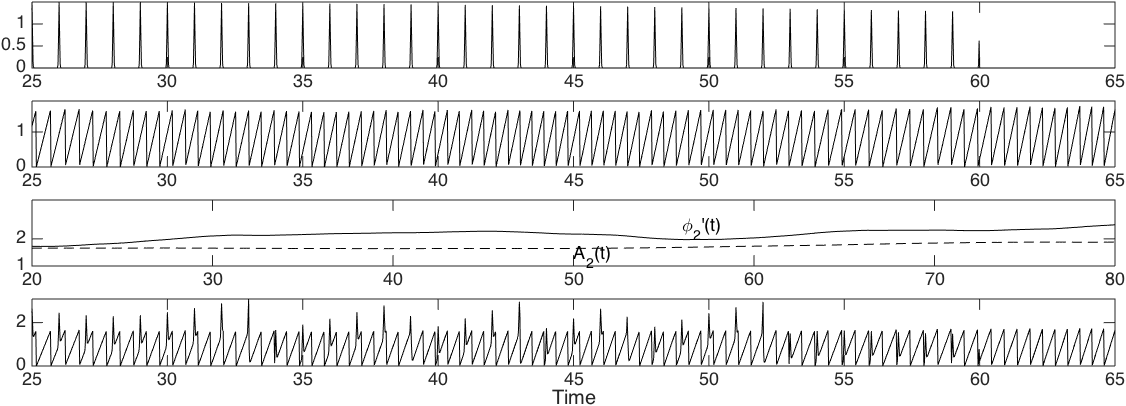}
\end{centering}
\caption{\label{fig:Introduction:Example1sig}Top panel: $f_1(t)$; second panel: $f_2(t)$; third panel: the $A_2(t)$ (dashed curve) and $\phi_2'(t)$ (solid curve) of $f_2(t)$; {bottom panel: $f(t)$.} To enhance the visibility, we only show the signal from the 25-th second to the 65-th second.}
\end{figure}

\begin{figure}[h!]
\begin{centering}
\includegraphics[width= \textwidth]{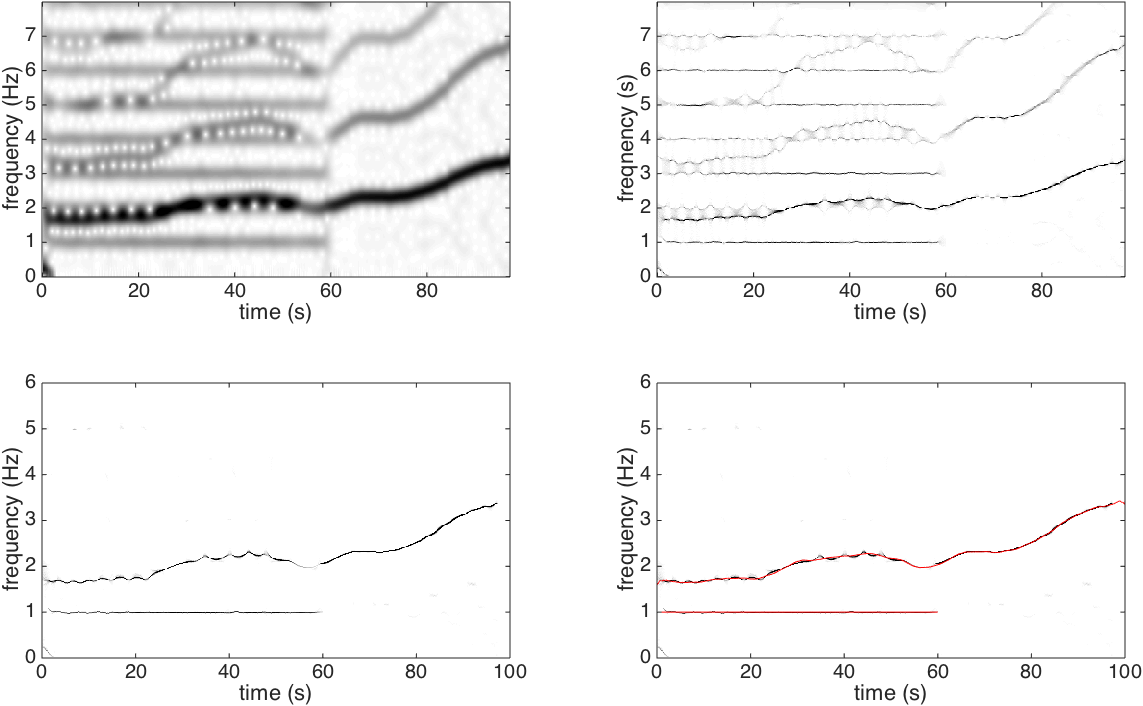}
\end{centering}
\caption{\label{fig:Introduction:Example1}Upper left: the short time Fourier transform (STFT) of $f(t)$; upper right: the synchrosqueezed STFT of $f(t)$; lower left: the de-shape SST of $f(t)$. It is clear that the de-shape SST provides only the fundamental frequency information of the two oscillatory components, even when the wave-shape function is far from the sinusoidal wave; lower right: the de-shape SST of $f(t)$ superimposed with the ground truth IF's of both components in red. To enhance the visibility, we show the de-shape SST only up to 6 Hz in the frequency axis.}
\end{figure}

The paper is organized in the following way. In Section \ref{Section:Model}, we discuss the limitation of model (\ref{Introduction:ANHM0}), and provide a modified model, the adaptive non-harmonic model. In Section \ref{Section:Cepstrum}, the existing cepstrum algorithms in the engineering field are reviewed, and the new algorithm de-shape SST is introduced. The theoretical justification of the de-shape SST is postponed to Appendix \ref{Appendix:Proof}. Section \ref{Section:Numerics} shows the numerical results of de-shape SST on several different simulated, medical, musical, and biological signals. Section \ref{Section:NumericalIssues} discusses numerical issues of the de-shape SST algorithm. Section \ref{Section:Conclusions} summarizes the paper.

\section{Adaptive Non-harmonic Model}\label{Section:Model}

In this section we first review the phenomenological model based on the wave-shape function (\ref{Introduction:ANHM0}) fixed over time. Then, we discuss the relationship between the wave-shape function and several commonly encountered physiological signals, and discuss limitations. This discussion leads us to introduce the adaptive non-harmonic model.

We start from introducing some notations. The Schwartz space is denoted as $\mathcal{S}$; the tempered distribution space, which is the dual space of the Schwartz space, is denoted as $\mathcal{S}'$; $\ell^p$, where $p>0${,} indicates the sequence space including all sequences $x:\NN\to \RR$ so that $\sum_{n\in\NN} |x(n)|^p<\infty$, where $x(n)$ is the $n$-th element of the sequence $x$. For each $k\in\NN\cup\{\infty\}$, $C^k$ indicates the space of continuous functions with all the derivatives continuous, up to the $k$-th derivates, and $C_c^k$ indicates the space of compactly supported continuous functions with all the derivatives continuous, up to the $k$-th derivates. For each $p\in\NN$, $L^p$ includes all measurable functions $f$ so that $\int_{-\infty}^\infty|f(x)|^pdx<\infty$; $L^\infty$ includes all measurable functions which are bounded almost surely. For $f\in\mathcal{S}'$ and $g\in \mathcal{E}'$, where $\mathcal{E}'$ is the set of compactly supported distribution{s}, denote $f\star g$ to be the convolution. We will interchangeably use $\mathcal{F}f$ or $\hat{f}$ to denote the Fourier transform of the function $f\in \mathcal{S}'$. When $f\in L^1(\RR)$, the Fourier transform is equally defined as $f(\xi)=\int_{-\infty}^\infty f(t)e^{-i2\pi \xi t}dt$; when $f\in \mathcal{E}'$, we know that $\hat{f}(\xi)=\langle f,e^{-i2\pi \xi\cdot}\rangle$, where $\langle\cdot,\cdot\rangle$ indicates the evaluation of the {distributions} $f$ at the $C^\infty$ function $e^{-i2\pi \xi t}$.
For a periodic function $s$, denote $\hat{s}(k)$, $k\in\ZZ$, to be its Fourier series coefficients. For each $N\in\NN$, denote the Dirichlet kernel $D_N(x):=\sum_{\ell=-N}^N e^{i2\pi \ell x}$.

\subsection{Review of the wave-shape function}

We continue the discussion of the model (\ref{Introduction:ANHM0})
\begin{equation}\label{model:linearRelationship}
f(t)=A(t)s(\phi(t)),
\end{equation}
where $A\in C^1(\RR)$ is strictly positive, $\phi\in C^2(\RR)$ is strictly monotonically increasing, and $s\in C^{1,\alpha}$, $\alpha>1/2$, is a 1-periodic function {with the unitary $L^2$ norm} so that its Fourier series coefficients satisfy $|\hat{s}(1)|>0$, $|\hat{s}(k)|\leq \delta |\hat{s}(1)|$ for some $\delta\geq0$ and $\sum_{k=N+1}^\infty |k\hat{s}(k)|\leq \theta$ for some $\theta\geq 0$ and $N\in\NN$.
We need more conditions for {the} analysis. Take $0\leq \epsilon\ll1$, we require
$|\phi''(t)|\leq\epsilon \phi'(t)$ and $|A'(t)|\leq \epsilon \phi'(t)$ for all $t$. This means that we allow the IF and AM to vary in time, as long as the variations are slight from one period to the next.

\subsection{Limitations in modeling physiological signals}\label{Section:Limitation}

While many physiological signals are oscillatory and have ``similar'' patterns, at first glance they could be well modeled by (\ref{model:linearRelationship}) and the analysis could proceed. However, it is not always possible to do so. In this section we provide examples to discuss limitations.

\subsubsection{Electrocardiographic signal} \label{Section:ECGdiscussion}

{The ECG signal, which provides information of the electrical activity of the heart, is ubiquitous in healthcare setting now.}
It not only contains a wealth of information {regarding} the cardiac/cardiovascular health but also provides a unique non-invasive portal to physiological dynamical states of the human body, via for example the HRV assessment. While the HRV{, the non-constant heart rate,} could be studied by evaluating the IF of the ECG signal and well estimated by the ``R peak detection algorithm'', in several cases a modern TF analysis could help improve the estimation accuracy \cite{Herry_Frasch_Wu:2015}.

We now discuss the limitation of modeling the ECG signal by (\ref{model:linearRelationship}).
Take the relationship between the RR and QT intervals of the lead II  ECG signal $E_{\texttt{II}}(t)$ as an example\footnote{{The P, Q, R, S, and T are significant landmarks of the ECG signal. The P wave represents atrial depolarization. The Q wave is any downward deflection after the P wave. The R wave follows as an upward deflection, which is spiky, and the S wave is any downward deflection after the R wave. The Q wave, R wave, and S wave form the QRS complex, which corresponds to the ventricular depolarization. The T wave follows the S wave, which represents the ventricular repolarization.} The QT interval (respectively RR interval) is the length of the time interval between the start of the Q wave and the end of the T wave of one heart beat (respectively two R landmarks of two consecutive heart beats). We could view the R peak as a surrogate of the cardiac cycle, and hence the RR interval could be viewed as a surrogate of the inverse of the heart rate. {See Figure \ref{fig:ECG} for an example of the P, Q, R, S, and T landmarks and the RR and QT intervals. For more information about ECG signal, we refer the readers to \cite{ecg}.}}.
The nonlinearity relationship between the QT interval and the RR interval has been well accepted -- for example, the Fridericia's formula (QT interval is proportional to the cubic root of RR interval) \cite{Fridericia:1920} or a fully nonlinear depiction {\cite[Figure 3]{Fossa_Zhou:2010}}.

If we model $E_{\texttt{II}}(t)$ by the model (\ref{model:linearRelationship}), and have
\begin{align}
E_{\texttt{II}}(t)&=A_{\texttt{II}}(t)s_{\texttt{II}}(\phi_{\text{II}}(t))=\sum_{\ell=1}^\infty A_{\texttt{II}}(t)c_{\texttt{II}}(\ell)\cos(2\pi \ell\phi_{\text{II}}(t)+\alpha_{\texttt{II},\ell}),\label{model:ECG0}
\end{align}
where $\alpha_{\texttt{II},0}=0$, $\alpha_{\texttt{II},\ell}\in [0,2\pi)$ when $\ell\in\NN$, $c_{\texttt{II}}(1)>0$ and $c_{\texttt{II}}(\ell)\geq 0$ for $\ell\neq 1$ are related to Fourier series coefficients of the wave-shape function $s_{\texttt{II}}$. Here, $s_{\texttt{II}}$ models the oscillation in the lead II ECG signal, which is non-sinusoidal.
Note that under this model, the QT interval has to be ``almost'' linearly related to the RR interval. To see this, suppose there is a 1-periodic function $s_{\text{II}}$ for the lead II ECG signal, where the R peak happens at time $0$, and a monotonically increasing function $\phi_{\text{II}}(t)$ so that the ECG signal could be modeled as $s_{\text{II}}(\phi_{\text{II}}(t))$; that is, the wave-shape function is fixed all the time. Suppose that the $k$-th R peak happens at time $t_k$, where $k\in\ZZ$; that is, $t_k=\phi_{\text{II}}^{-1}(k)$. By the mean value theorem, in this model we have the following relationship for the ECG signal at time $t\in[t_k,t_{k+1}]$ :
\begin{align}
&s_{\text{II}}(\phi_{\text{II}}(t)){=\sum_{k\in\ZZ}s_{\text{II}}(\phi_{\text{II}}(t))\chi_{[t_k,t_{k+1})}(t)}\nonumber\\
=&\,\sum_{k\in\ZZ}s_{\text{II}}(\phi_{\text{II}}(t_k)+(t-t_k)\phi'_{\text{II}}(\tilde{t}_k))\chi_{[t_k,t_{k+1})}(t)=\sum_{k\in\ZZ}s_{\text{II}}\left(\frac{t-t_k}{1/\phi'_{\text{II}}(\tilde{t}_k)}\right)\chi_{[t_k,t_{k+1})}(t),\label{Model:LinearRelationship}
\end{align}
where $\tilde{t}_k\in[t_k,t_{k+1}]$, $\chi_{[t_k,t_{k+1})}$ is a indicator function defined on $[t_k,t_{k+1})$, and the second equality holds since $\phi_{\text{II}}(t_k)=k$ and $s$ is $1$-periodic. While the RR interval between the $k$-th and the $(k+1)$-th R peaks is proportional to $1/\phi'_{\text{II}}(\tilde{t}_k)$ up to order $O(\epsilon)$ by the slowly varying IF assumption of $\phi_{\text{II}}$, we know that the wave-shape function is approximately linearly dilated according to $1/\phi'_{\text{II}}(\tilde{t}_k)$. If the the wave-shape function is linearly dilated according to the RR interval, then the QT interval should be linearly related with the RR interval and hence the claim. Clearly, this model contradicts the physiological finding that the QT interval should be nonlinearly related to the RR interval, so we need a modified model to better quantify the ECG signal.

Furthermore, note that since the cardiac axis varies from time to time due to respiration, physical activity and so on, even if the RR interval is fixed all the time and we focus on the lead II ECG signal, we cannot find a fixed wave-shape function to exactly model the ECG signal. Note that the wave-shape function variation caused by respiration could be applied to extract the respiratory information from the ECG signal \cite{Chui_Lin_Wu:2015}.

\subsubsection{Respiratory signal}

Oscillation is a typical pattern in breathing in normal subjects. It is well known that there is a rhythmic controller in the Pre-B\"otzinger complex in the brain stem which regularly oscillates. In a normal subject the respiratory period is about 5 seconds per cycle. Note that when we are awake, we could also control our respiration by our will, but to simplify the discussion, we do not take this into account.
The existence of breathing pattern variability has been well known \cite{Benchetrit:2000}. For example, the period of each respiratory cycle for a normal subject under normal status varies according to time. The ratio between the length of inspiration period and the length of expiration period is not linearly related to the instantaneous respiratory rate, and its variability also contains plenty of physiological information \cite{Benchetrit:2000}. In other words, the wave-shape function associated with the respiration is not fixed all the time.
By the same argument as that for the ECG signal, this nonlinear relationship between the instantaneous respiratory rate and the wave-shape function could not be fully captured by (\ref{model:linearRelationship}).

The same argument holds for the other physiological signals, like the photoplethysmography signal that reflects the hemodynamics information, the capnogram signal that monitors inhaled and exhaled concentration or partial pressure of carbon dioxide and is a surrogate of the oscillatory dynamics of the respiratory system, and so on.

\subsubsection{Natural vibration of stiff strings}\label{Section:StiffStringModel}

In this section we discuss the signal commonly encountered in music, in particular the sound generated by the string musical instrument. The acoustic signal generated by the string musical instrument could be well modeled by the transversal vibration behavior of an ideal string. For an ideal string of length $L>0$ placed on  $[0,L]$ with both ends fixed ideally, when the string has {\em stiffness}; that is, there is a restoring force proportional to the displacement (or more generally the bending angle), we could consider the following differential equation for $y\in \RR^+\times [0,L]$ satisfying \cite{fletcher1964normal,fletcher2010physics}
\begin{equation}\label{Equation:Music:Stiffness}
\mu \frac{\partial^2 y}{\partial t^2} = T \frac{\partial^2 y}{\partial x^2} - ESK^2 \frac{\partial^4 y}{\partial x^4}\,,
\end{equation}
where $\mu>0$ is mass per unit length, $T\geq0$ is tension, $E\geq0$ is Young's modulus of the string, $S\geq0$ is the cross-sectional areas of the string, and $K\geq0$ is the radius of gyration, with the initial condition $y(0,x)=0$ for all $x\in[0,L]$ and the boundary condition $y(t,0)=0$ and $y(t,L)=0$ for all time $t\geq 0$.

Consider the case of a {\em pinned string}, that is, $y(t,0)=y(t,L)=0$ and $\frac{\partial^2 y}{\partial x^2}(t,0)=\frac{\partial^2 y}{\partial x^2}(t,L)=0$ for all $t$. The solution $y(t,x)$ is the transversal displacement of the string point $x$ at time $t$ \cite{fletcher1964normal,fletcher2010physics}, which is the linear combination of the {\em normal modes} represented by $y_n(t,x)=\sin(2\pi k_nx)\sin(2\pi {\xi_n} t)$ with $k_n=\frac{n}{2L}$, $\xi_1=\frac{1}{2L}\sqrt{\frac{T}{\mu}}$ and
\begin{equation}
\xi_n = n\xi_1\sqrt{1+\beta n^2}\,,
\end{equation}
where $\beta=\frac{\pi^2ESK^2}{TL^2}$; that is, the $n$-th component with the $n$-th lowest frequency is deviated from $n\xi_1$ in a nonlinear way.
In other words, the sound associated with the solution oscillates with a non-sinusoidal wave and the fundamental frequency is $\frac{1}{2L}\sqrt{\frac{T}{\mu}}$ with several multiples. Clearly, when $E=0$, (\ref{Equation:Music:Stiffness}) is reduced to the wave equation, and the solution is well known.

In music signal processing, this phenomenon is well known as {\em inharmonicity}, which appears in instruments, like piano and guitar. In these instruments, natural vibration appears after the excitation (i.e., plucking or pressing the keyboard) of the modes. For the piano, $\beta$ is in the ranges from around $10^{-4}$ to $10^{-3}$. Obviously, the sound with inharmonicity does not well fit (\ref{model:linearRelationship}).

\subsection{Time-varying wave-shape function}

The above discussions indicate that we need a model with time-varying wave-shape functions.
Thus, we wish to generalize (\ref{model:linearRelationship}). To achieve this goal, we will directly generalize the equivalent expression (\ref{Introduction:ANHM1}) to capture an oscillatory signal with the ``time-varying wave-shape function''.

\begin{defn}[Adaptive non-harmonic function]\label{Definition:ANHFunction}
Take $\epsilon { > 0}$, a non-negative { $\ell^1$} sequence $c =\{c(\ell)\}_{\ell=0}^\infty$, $0<C<\infty$, and $N\in\NN$. The set $\mathcal{D}_{\epsilon}^{c,C,N}\subset C^1(\RR)\cap L^\infty(\RR)$ of {\it adaptive non-harmonic (ANH) functions} is defined as the set consisting of functions
\begin{align}
f(t)=\frac{1}{2}B_0(t)+\sum_{\ell=1}^\infty B_\ell(t)\cos(2\pi \phi_\ell(t))\label{model:nonlinearRelationship}
\end{align}
satisfying the following conditions:
\begin{itemize}

\item the {\em regularity condition} :
\begin{align}
&B_\ell\in C^1(\RR)\cap L^\infty(\RR),\, { \mbox{ for } \ell = 0 ,\ldots \infty,}\\
&\phi_\ell\in C^2(\RR), \, { \mbox{ for } \ell= {1} ,\ldots \infty.}
\end{align}
 For all $t\in\RR$, $B_\ell(t)\geq 0$ for all {$\ell=0,1,2,\ldots,\infty$} and $\phi'_\ell(t)>0$ for all $\ell=1,\ldots,\infty$.

\item the {\em time-varying wave-shape} condition: for all $t\in\RR$,
\begin{equation}\label{Condition:ANH:phi_ell}
\left|\phi'_\ell(t)-\ell\phi'_1(t)\right|\leq \epsilon \phi'_1(t)
\end{equation}
for all $\ell=1,\ldots,\infty$,
\begin{equation}\label{Condition:ANH:B_ell}
{B_\ell(t)\leq c(\ell)B_1(t)}
\end{equation}
for all {$\ell=0,1,\ldots,\infty$, 
\begin{equation}\label{Condition:ANH:B_elltail1}
\sum_{\ell=N+1}^\infty B_\ell(t)\leq \epsilon \sqrt{\frac{1}{4}B_0(t)^2+\frac{1}{2}\sum_{\ell=1}^\infty B_\ell(t)^2},
\end{equation}
and 
\begin{equation}\label{Condition:ANH:B_elltail2}
\sum_{\ell=1}^\infty \ell B_\ell(t)\leq C \sqrt{\frac{1}{4}B_0(t)^2+\frac{1}{2}\sum_{\ell=1}^\infty B_\ell(t)^2}.
\end{equation}}

\item the {\em slowly varying} condition: for all $t\in\RR$,
\begin{align}
\label{def:slow_varying}
&|B_\ell'(t)|\leq \epsilon c(\ell)\phi_1'(t),\, { \mbox{ for } \ell = 0 ,\ldots \infty,}\\
&|\phi_\ell''(t)|\leq \epsilon \ell\phi_1'(t),\, { \mbox{ for } \ell= {1} ,\ldots \infty,}
\end{align}
 and $\|\phi_1'(t)\|_{L^\infty}<\infty$.

\end{itemize}

\end{defn}

The adjective {\em adaptive} in ANH function indicates that the frequency and amplitude are time-varying, and the adjective {\em non-harmonic} indicates that the oscillation might be non-sinusoidal. When $\frac{B_{\ell}(t)}{B_1(t)}$ are constants for all $\ell=0,1,\ldots,\infty$ and $\phi'_\ell(t)=\ell\phi_1'(t)+\alpha_\ell$ for some $\alpha_\ell\in\RR$ for all $\ell=1,\ldots,\infty$, (\ref{model:nonlinearRelationship}) is reduced to (\ref{Introduction:ANHM1});
when the other conditions for the wave-shape function in (\ref{model:linearRelationship}) are further satisfied, (\ref{model:nonlinearRelationship}) is reduced to (\ref{model:linearRelationship}). 
Thus, (\ref{model:nonlinearRelationship}) is a direct generalization of (\ref{Introduction:ANHM1}) by allowing $c_\ell$ and $\alpha_\ell$ in (\ref{Introduction:ANHM1}) to vary, which quantifies the time-varying wave-shape function.
We call $B_1(t)\cos(2\pi \phi_1(t))$ the {\it fundamental component} and $\phi'_1$ the {\it fundamental IF} (or {\em pitch} in the music signal analysis) of the signal $f(t)$. {Note that the condition $|\phi_1''(t)|\leq \epsilon \phi_1'(t)$ says that locally the fundamental IF is nearly constant, but it does not imply that the fundamental IF is nearly constant globally.} By a slight abuse of terminology, for $\ell>1$, we call $B_\ell\cos(2\pi\phi_\ell(t))$ the {\it $\ell$-th multiple}, although $\phi_\ell$ might not be proportional to $\phi_1$. {Note that we can ``view'' $\sqrt{\frac{1}{4}B_0(t)^2+\frac{1}{2}\sum_{\ell=1}^\infty B_\ell(t)^2}$ as the AM of $f(t)$. This comes from the fact that in (\ref{Introduction:ANHM1}), $A(t)^2a_0^2+2\sum_{\ell=1}^\infty A(t)^2a^2_\ell=A^2(t)$, and $B_k(t)$ is the the generalization of $A(t)a_k$ in (\ref{Introduction:ANHM1}) for $k=0,1,\ldots,\infty$. In this definition, however, we do not control how large $\sqrt{\frac{1}{4}B_0(t)^2+\frac{1}{2}\sum_{\ell=1}^\infty B_\ell(t)^2}$ should be. Also note that the series $\big(\frac{B_0(t)/2}{\sqrt{\frac{1}{4}B_{0}(t)^2+\frac{1}{2}\sum_{\ell=1}^\infty B_{\ell}(t)^2}},\,\frac{B_1(t)/\sqrt{2}}{\sqrt{\frac{1}{4}B_{0}(t)^2+\frac{1}{2}\sum_{\ell=1}^\infty B_{\ell}(t)^2}},\ldots,\big)$ has the unitary $\ell^2$ norm, which is a generalization of the assumption that the wave-shape function has the unitary $L^2$ norm. The condition (\ref{Condition:ANH:B_elltail1}) says that only the first $N$ multiples are significant. The condition (\ref{Condition:ANH:B_elltail2}) is a direct generalization of the $C^{1,\alpha}$ condition of the wave-shape function in (\ref{model:linearRelationship}).}

To see how the wave-shape function varies according to time, denote {$t_k:=\phi_1^{-1}(k)$}. Clearly, for signals in $I_k:=[t_k,t_{k+1})$, we could not find a single 1-periodic function $s(t)$ so that $\frac{f}{B_1(t)}|_{I_k}$ is the composition of $s$ and $\phi_1(t)$.
Thus, we could view the model (\ref{model:nonlinearRelationship}) either as an adaptive non-harmonic model with one oscillatory component with the {\it time-varying wave-shape function}, or as an adaptive harmonic model with many oscillatory components with the sinusoidal wave pattern.

\begin{defn}[Adaptive non-harmonic model]\label{DefBClassMultipleTimeSeries}
Take $\epsilon { > 0}$ and $d>0$.
The set $\mathcal{D}_{\epsilon,d}\subset C^1(\RR)\cap L^\infty(\RR)$ consists of \textit{superposition of ANH functions}, that is
\begin{equation}
f(t)=\sum_{k=1}^{K}f_{k}(t)
\end{equation}
for some finite $K>0$ and \
\begin{equation*}
f_{k}(t)={\frac{1}{2}B_{k,0}(t) +}\sum_{\ell={1}}^\infty B_{k,\ell}(t)\cos(2\pi \phi_{k,\ell}(t))\in \mathcal{D}_{\epsilon_k}^{c_{k},C_k,N_k}
\end{equation*}
for some $0\leq \epsilon_k\leq \epsilon$, non-negative sequence $c_{k}=\{c_k(\ell)\}_{\ell={0}}^\infty$, $0<C_k<\infty$ and $N_k\in\NN$, where  {for all $t\in\RR$,} the fundamental IF's of all ANH functions satisfy 
\begin{itemize}
\item the {\em frequency separation} condition:
\begin{equation}\label{condition_Cepsilon_d}
\phi'_{k,1}(t)-\phi'_{k-1,1}(t)\geq d
\end{equation}
for $k=2,\ldots,K$
\item the {\em non-multiple} condition: for each $k=2,\ldots,K$, $\phi'_{k,1}(t)/\phi'_{\ell,1}(t)$ is not an integer for $\ell=1,\ldots,k-1$.
\end{itemize}
\end{defn}

\section{De-shape SST}\label{Section:Cepstrum}

In this section, we propose an algorithm, de-shape SST, to study a given oscillatory signal. De-shape SST provides a TF representation which contains essentially the IF and AM information of the fundamental component of each ANH function and removes the influence caused by the non-trivial wave-shape function. In Section \ref{Section:ReviewCepstrum}, we provide a review of how cepstrum is applied in engineering. 
{In Section \ref{Section:TFcepstrum}, the short time cepstral transform (STCT) is introduced with a theoretical justification in Theorem \ref{Theorem:TimeVaryingCepstrum} to generalize cepstrum to the time-frequency analysis.}
The proof of the theorem is postponed to the Appendix.
{In Section \ref{Section:deShape}, we introduce the inverse STCT that will be used in the de-shape algorithm. }
In Sections {\ref{Section:deshapeSTFT}-\ref{Section:Synchrosqueezing}}, the de-shape STFT and de-shape SST are discussed.

\subsection{A quick review of cepstrum}\label{Section:ReviewCepstrum}

Cepstrum is a commonly applied signal processing technique \cite{Oppenheim_Schafer:2009}.
One motivation of introducing cepstrum is the {\em pitch detection} problem in music (recall that pitch means the fundamental frequency).
It is closely related to the {\em homomorphic signal processing}, which aims at converting signals structured by complicated algebraic systems into simple ones.
Since its invention in 1963 \cite{bogert1963quefrency}, the cepstrum has been applied in various discrete-time signal processing problems, such as detecting the echo delay, deconvolution, feature representations for speech recognition like the Mel-Frequency Cepstral Coefficients (MFCC), and estimating the pitch of an audio signal. A thorough review of the cepstrum can be found in \cite{oppenheim2004frequency,Oppenheim_Schafer:2009}.

We start from recalling the {\em complex cepstrum}. For a suitable chosen signal $f(t)\in\RR$, the {\em cepstrum}, denoted as $\tilde{f}^{C}(q)$, where $q\in\RR$ is called {\em quefrency}\footnote{The term ``cepstrum'' is invented by interchanging the consonants of the first part of the word ``spectrum'' in order to signify their difference. Similarly, the word ``{quefrency}'' is the inversion of the first part of ``frequency''. By definition, the quefrency has the same unit as time.}, is defined as the inverse Fourier transform of the logarithm of the Fourier transform \cite{Oppenheim_Schafer:2009}:
\begin{equation}
\tilde{f}^{C}(q) := \int \log \hat{f}(\xi) e^{2\pi iq\xi } d\xi \,,\label{eq:cepstrum_basic}
\end{equation}
whenever the inverse Fourier transform of $\log \hat{f}(\xi)$ makes sense, where $\log$ is defined on a chosen branch. We call the domain of $\tilde{f}^C$ the {\em quefrency domain}. Numerically, since the computation of the complex cepstrum requires the phase unwrapping process, it causes instability. Therefore, we could also consider the {\em real cepstrum}, denoted as $\tilde{f}^{R}$, which is represented as
\begin{equation}
\tilde{f}^{R}(q) = \int \log |\hat{f}(\xi)| e^{-2\pi iq\xi }d\xi\,,\label{eq:real_cepstrum}
\end{equation}
whenever the Fourier transform of $\log |\hat{f}(\xi)|$ makes sense. Note that there is no difference to take the Fourier transform or inverse Fourier transform since the signal is in general real, so we take the Fourier transform instead of the inverse Fourier transform.
In audio signal analysis, the logarithm operation on the magnitude spectrum can be interpreted to be an approximation of the perceptual scale of sound intensity, thus it is conventionally measured in dB.
Intuitively, the cepstrum measures ``the rate of the harmonic peaks per Hz'', namely the {\em period} of the signal, where the period is the inverse of the frequency; that is, the prominent peaks in the cepstrum indicate the periods and their multiples in the signal.
Besides periodicity detection, this method has also been used in a wide variety of fields which requires deconvolution of a {\em source-filter model}.

The main idea behind cepstrum is to find ``the spectral distribution of the spectrum'', which contains the period information of the signal. It is effective since it could transform the ``slow-varying envelope'' of the spectrum to the low-quefrency range, separated from the fast-varying counterpart of the spectrum, which is transformed to the high-quefrency range and represents the period information of the signal.

\begin{example}
We consider an acoustic signal to demonstrate how the overall idea beyond cepstrum or homomorphic signal processing could help in signal processing when the signal comes from a complicated combination of two components. A human voice $f\in C^\infty$ could be modeled by the glottal vibration, which is a pulse sequence $g=p\sum_{k\in\ZZ}\delta_{T_0k}\in \mathcal{S}'$, where $p\in \mathcal{S}$ and $T_0>0$, convolved with the impulse response of the vocal tract $h\in\mathcal{S}$ so that $\hat{h}$ is a non-negative function, i.e., $f(t)=(g\star h)(t)$. A mission of common interest is to separate these two components.

First, the Fourier transform converts the convolution into multiplication in the frequency domain $\hat{f}(\xi)=\hat{h}(\xi)\hat{g}(\xi)$, where $\hat{g}=\frac{1}{T_0}\hat{p}\star \sum_{k\in\ZZ}\delta_{k/T_0}\in C^\infty$ by the Poisson summation formula. Second, the logarithm converts multiplication into addition, but we have to be careful when we take the logarithm. To simplify the discussion, we assume that supports of both $\hat{g}$ and $\hat{h}$ are positive-valued. Thus, $\log(\hat{f}(\xi)) = \log(\hat{g}(\xi))+\log(\hat{h}(\xi))$. Thus, the convolution operator in the time domain becomes the addition operator. Although under our simplified assumption, $\hat{g}\in C^\infty\cap L^\infty$ and $\hat{h}\in\mathcal{S}$, after taking logarithm we might not be able to define the Fourier transform. So we further assume that $\log(\hat{g}(\xi)),\log(\hat{h}(\xi))\in\mathcal{S}'$ so that we could apply the Fourier transform. For example, if $h$ is a Gaussian function, $\log(\hat{h}(\xi))$ is a quadratic polynomial function. We call the domain where $\mathcal{F}\log(\hat{f})$ is defined the quefrency domain.

In summary, the periodic glottal excitation is modeled as a series of harmonic peaks in the frequency domain by the Poisson summation formula (contributing to pitch), while the frequency response of the vocal tract, $\hat{h}(\xi)$, contributes to the amplitude of the spectrum. { Let us further assume} that after taking Fourier transform on $\log(\hat{f}(\xi))$ the glottal excitation lies in the high quefrency region while the vocal tract in the low quefrency region\footnote{In the music processing, the high-quefrency part in the cepstrum is related to the pitch while the low-quefrency part to timbre (i.e., sound color). }, then a simple high pass filtering, which is called the {\em liftering} (again an interchange of the consonants of ``filtering'') process, can separate the two components. One simple example of $h$ is that when $h$ is a Gaussian function, the Fourier transform of $\log(\hat{h}(\xi))$ is proportional to the second distributional derivative of the Dirac measure supported at $0$. These two components could then be reconstructed by reversing the procedure -- apply the Fourier transform, take the exponential and apply the inverse Fourier transform. The whole process is called the {\em homomorphic deconvolution}.

\end{example}

Although the real cepstrum avoids phase unwrapping, it is still limited by evaluating the logarithm, which is prone to numerical instability either in synthetic data or real-world data. To address this issue, it has been proposed in the literature to replace the logarithm by the {\em generalized logarithm function} \cite{kobayashi1984spectral,tokuda1994mel,taxt1997comparison},
\begin{equation}
L_\gamma(x):=\frac{|x|^\gamma-1}{\gamma},
\end{equation}
where $\gamma>0$, or the {\em root function} \cite{lim1979spectral,alexandre1993root,taxt1997comparison}, defined as
\begin{equation}
g_\gamma(x):=|x|^\gamma,
\end{equation}
where $\gamma>0$. Note that $L_\gamma$ approximates the logarithm function as $\gamma\to 0$. As $g_\gamma$ and $L_\gamma$ are related by a constant and a dilation, there is no practical difference which relaxation we choose. Thus, although we could also consider the generalized logarithm function $L_\gamma$ \cite{kobayashi1984spectral,tokuda1994mel}, to simplify the discussion, in this paper we relax the real cepstrum by the root function $g_\gamma$, and we call the resulting ``cepstrum'' the {\em $\gamma$-generalized cepstrum} (In the literature it is also called the root cepstrum):
\begin{equation}
\tilde{f}_\gamma(q) := \int g_\gamma(\hat{f}(\xi)) e^{{-}2\pi iq\xi }d\xi\,.\label{eq:root_cepstrum}
\end{equation}
There are several proposals for the choice of $\gamma$.
First, when $\gamma=2$, the formulation is equivalent to the autocorrelation function of $f$, which is a basic feature for {\em single} pitch detection but has been found unfeasible for multipitch estimation (MPE). To deal with the issue of multipitch, we should consider $\gamma<2$ \cite{kobayashi1984spectral,tolonen2000computationally}.
$\gamma=0.67$ is suggested by the nonlinear relationship between the sound intensity and perceived loudness determined by experiment, known as a case of Stevens' power law, which states that  the sound intensity $x$ and the perceived loudness $y$ are related by $y\propto|x|^{0.67}$ \cite{stevens1957psychophysical,hermansky1990perceptual,tolonen2000computationally}.
Previous researches also suggested $\gamma$ to be $0.6$ \cite{kraft2014polyphonic}, $0.25$ \cite{indefrey1985design} and $0.1$ \cite{klapuri2008multipitch}. In short, the $\gamma$-generalized cepstrum has been shown more robust to noise than the real or complex cepstrum in the literature of speech processing  \cite{lim1979spectral,alexandre1993root}.
In addition to the robustness, the $\gamma$-generalized cepstrum has been found useful in various problems like speech recognition \cite{hermansky1990perceptual}, speaker identification \cite{zhao2013analyzing}, especially in multiple pitch estimation \cite{indefrey1985design,tolonen2000computationally,klapuri2008multipitch,kraft2014polyphonic,su2015combining,su2016exploiting}. Due to its usefulness and for the sake of simplification, in the paper we focus on the $\gamma$-generalized cepstrum.

\subsection{Combining cepstrum and time-frequency analysis -- short time cepstral transform (STCT)}\label{Section:TFcepstrum}

As useful as the Fourier transform is for many practical problem, however, it has been well known that when the IF or AM is not constant, Fourier transform might not perform correctly. Indeed, for the ANH functions, since IF and AM are time-varying, the momentary behavior of oscillation is mixed up by the Fourier transform, and hence the cepstrum approach discussed in the previous section fails. To study this kind of dynamical signal, we need a replacement for the Fourier transform. A lot of efforts have been made in the past few decades to achieve this goal. TF analysis based on different principals \cite{Flandrin:1999} has attracted a lot of attention in the field and many variations are available. Well known examples include short time Fourier transform (STFT), continuous wavelet transform (CWT), Wigner-Ville distribution (WVD), etc. We refer the reader to \cite{Daubechies_Wang_Wu:2016} for a summary of the current progress of TF analysis. In this paper, we consider STFT, since it is a direct and intuitive generalization of the Fourier transform. A generalization of cepstrum to other TF analyses will be studied in {future works}.

Recall the definition of STFT. For a chosen window function $h\in\mathcal{S}$, the STFT of $f\in \mathcal{S}'$ is defined by
\begin{equation}
V^{(h)}_f(t, \xi) = \int f(\tau) h(\tau-t)e^{-i2\pi \xi (\tau-t)} \ud \tau\,,\label{eq: stft1}
\end{equation}
where $t\in\RR$ indicates time and $\xi\in\RR$ indicates frequency\footnote{The phase factor $e^{i2\pi\xi t}$ in this definition is not always present in the literature, leading to the name {\em modified STFT} for this particular form. To slightly abuse the notation, we still call it STFT.}. We call $V^{(h)}_f(t, \xi)$ the TF representation of the signal $f$.
Since STFT could capture the spectrum or local oscillatory behavior of a signal, we could combine the ideas of STFT and cepstrum, which leads to the {\em short time cepstral transform (STCT)}:
\begin{defn}
Fix $\gamma>0$. For $f\in\mathcal{S}'$ and $h\in\mathcal{S}$, we have the {\em short time cepstral transform (STCT)}:
\begin{equation}
C^{(h,\gamma)}_f(t, q) := \int g_\gamma(V^{(h)}_f(t, \xi)) e^{-i2\pi q \xi} \ud \xi,
\label{eq: rceps1}
\end{equation}
where $q\in\RR$.
\end{defn}

$q$ in $C^{(h,\gamma)}_f(t, q)$ is called the {quefrency}, and its unit is second or any feasible unit in the time domain. Clearly, $C^{(h,\gamma)}_f(t, \cdot)$ is the $\gamma$-generalized cepstrum of the signal $f(\cdot) h(\cdot-t)$ and in general $C^{(h,\gamma)}_f(t, q)$ is not positive. To show the well-definedness of STCT, note that while $f\in\mathcal{S}'$ and $h\in\mathcal{S}$, $V^{(h)}_f(t, \xi)\in C^\infty$ is smooth and slowly increasing on both time and frequency axes. By a slowly increasing $C^\infty$ function $f$, we mean that $f$ and all its derivatives have at most polynomial growth at infinity. Thus, we know that $g_\gamma(V^{(h)}_f(t, \cdot))$ is continuous and slowly increasing. Hence its Fourier transform can be well-defined in the distribution sense since a continuous slowly increasing function is a tempered distribution. In the special case that $f\in C^\infty\cap L^\infty$, $g_\gamma(V^{(h)}_f(t, \xi))$ is a continuous function vanishing at infinity faster than any power of $|\xi|$, and hence $C^{(h,\gamma)}_f(t, q)$ is a well-defined continuous function in the {quefrency} axis.

As discussed above, since the cepstrum provides the information about periodicity, we call $C^{(h,\gamma)}_f(t,q)$ the {\em time-periodicity} (TP) representation of the signal $f$. Before proceeding, we consider the following example to demonstrate how the STCT works.

\begin{example}
Consider the Dirac comb $f(t)=\frac{1}{\xi_0}\sum_{k\in\ZZ}\delta_{k/\xi_0}$, where $\xi_0>0$. This is the typical periodic distribution, and we could view it as an ANH function with $K=1$, the delta measure as the shape function, the constant fundamental frequency $\xi_0$ Hz and the constant fundamental period $1/\xi_0$, although the wave-shape function is more general than { what} we consider in the ANH model; {it is more general than the ANH model since with the delta measure $f(t)$ does not satisfy the ANH model.} By the Poisson's summation formula, $f(t)=\sum_{k\in\ZZ}e^{i2\pi k\xi_0 t}$, where the summation holds in the distribution sense.
Choose a smooth window function $h\in\mathcal{S}$ so that $\hat{h}$ is supported on $[-\Delta,\Delta]$, where $0<\Delta<\xi_0/2$. By a direct calculation, the STFT of $f$  is
\begin{equation}
V_f^{(h)}(t,\xi)=\sum_{k\in\ZZ}\hat{h}(\xi-k\xi_0),
\end{equation}
and since $\Delta<\xi_0/2$,
\begin{equation}
|V_f^{(h)}(t,\xi)|=\sum_{k\in\ZZ}|\hat{h}(\xi-k\xi_0)|.
\end{equation}
To evaluate the STCT, where $\gamma>0$, we need to evaluate $|V_f^{(h)}(t,\xi)|^\gamma$. Under our assumption, it is trivial and we have
\begin{equation}
|V_f^{(h)}(t,\xi)|^\gamma=\sum_{\ell\in\ZZ}|\hat{h}(\xi-\ell\xi_0)|^\gamma= (\sum_{\ell\in\ZZ}\delta_{\ell\xi_0}\star |\hat{h}|^\gamma)(\xi),
\end{equation}
where $|\hat{h}|^\gamma\in C_c^0(\RR)$. Note that the convolution is well-defined since $\sum_{\ell\in\ZZ}\delta_{\ell\xi_0}$ is a tempered distribution and $|\hat{h}|^\gamma$ is a compactly supported distribution. By taking Fourier transform of $|V_f^{(h)}(t,\xi)|^\gamma$ and applying the Poisson summation formula, the STCT of $f$ is
\begin{equation}
C_f^{(h,\gamma)}(t,q)=\frac{\widehat{|\hat{h}|^\gamma}(q)}{\xi_0}\sum_{\ell\in\ZZ}\delta_{\ell/\xi_0}{(q)}\,,
\end{equation}
which provides the period information.

\end{example}

This example indicates the overall behavior of STCT when there is only one periodic function with a non-sinusoidal wave. In general, when there are more than one oscillatory functions with non-sinusoidal waves and different fundamental frequencies, the calculation is no longer direct since the multiples of different oscillatory functions may collide. Moreover, since the frequency and amplitude are time-varying, the calculation is more intricate. For the signals in { the set $\mathcal{D}_{\epsilon,d}$ defined in Definition \ref{DefBClassMultipleTimeSeries}}, however, we have the following Theorem showing how STCT works.

Before stating the theorem, we make the following general assumption about the window function.

\begin{assumption}\label{Assumption:GeneralAssumption}
Fix $\epsilon {>0}$ and $d>0$. Take $f(t)=\sum_{k=1}^Kf_k(t)\in \mathcal{D}_{\epsilon,d}$ for some $K\geq 1$. Suppose the fundamental frequency satisfies
\begin{equation}
\inf_{t\in\RR} \phi_{1,1}'(t) > 0.
\end{equation}
Fix a window function $h\in \mathcal{S}$, which is chosen so that $\hat{h}$ is compactly supported and $\text{supp}(\hat{h})\subset[-\Delta,\Delta]$, where $\Delta>0$. Also assume that $\Delta$ is small enough so that
 \begin{equation}
 0<\Delta<\min\{\inf_{t\in\RR}\phi'_{1,1}(t)/4,\,d/4\}.
 \end{equation}
\end{assumption}

For a chosen window $h\in \mathcal{S}$, denote
\begin{equation}
I_k:=\int |h(x)||x|^kdx,
\end{equation}
where $k\in\{0\}\cup \NN$.
We mention that a more general window could be considered with more error terms showing up in the proof. Since these extra efforts do not provide more insight about the theory, we choose to work with this setup.

\begin{defn} {Let $\phi_{k,\ell}(t)$ and $B_{k,\ell}(t)$ for $k=1,\cdots, K$ and $\ell \in \mathbb{N}$ be defined as in Definition \ref{DefBClassMultipleTimeSeries}.}
Under Assumption \ref{Assumption:GeneralAssumption}, define
\begin{equation*}
\phi_{k,-\ell}(t):=-\phi_{k,\ell}(t)\mbox{ and }B_{k,-\ell}(t):=B_{k,\ell}(t)
\end{equation*}
for $\ell\in \NN$, 
\[
\phi_{k,0}=0\,,
\] 
and define a set of intervals
\begin{align}
Z_{k,\ell}(t)&=[\ell\phi'_{k,1}(t)-\Delta,\ell\phi'_{k,1}(t)+\Delta]\subset \RR\,,\label{Definition:Zkl}
\end{align}
associated with $f$, where $k\in\{1,\ldots,K\}$ and $\ell\in\ZZ$.
\end{defn}

The following Theorem describes the behavior of STCT when the signal is in $\mathcal{D}_{\epsilon,d}$. The proof of Theorem \ref{Theorem:TimeVaryingCepstrum} is postponed to Appendix \ref{Appendix:Proof}.

\begin{thm}\label{Theorem:TimeVaryingCepstrum}
Suppose Assumption \ref{Assumption:GeneralAssumption} holds.
The STFT of $f$ at time $t\in\RR$ is
\begin{align}
V^{(h)}_f(t,\xi)=
\frac{1}{2}\sum_{k=1}^K\sum_{\ell=-N_k}^{N_k} B_{k,\ell}(t)\hat{h}(\xi-\phi'_{k,\ell}(t))e^{i2\pi\phi_{k,\ell}(t)}+\epsilon(t,\xi),\label{MainTheorem:STFTExpansion}
\end{align}
where $\xi\in\RR$ and $\epsilon(t,\xi)$ is defined in (\ref{Lemma:Proof:Definition:epsilon}). Furthermore, $\epsilon(t,\xi)$ is of order $\epsilon$ and decays at the rate of $|\xi|^{-1}$ as $|\xi|\to \infty$.

Take $0<\gamma\leq 1$.
For each $k\in\{1,\ldots,K\}$, denote a series $b_k\in \ell^1$, where $b_k(j)=0$ for all $|j|>N_k$ and $b_k(j)=B^\gamma_{k,j}(t)$ for all $|j|\leq N_k$. Then, for $q>0$, we have
\begin{align}
C^{(h,\gamma)}_f(t,q)=\frac{\widehat{|\hat{h}|^\gamma}(q) }{2^\gamma}\sum_{k=1}^K  \hat{b}_k(q)  +E_1+ E_2,\label{Theorem:STCT:MainStatement}
\end{align}
where $\hat{b}_k$ is {is the discrete-time Fourier transform of $b_k$}, $E_1$ is defined in (\ref{Theorem:Statement:STCT:E1}), {which is the Fourier transform of $\delta_3$ defined in (\ref{Proof:Lemma:Definition:delta3part2}),} and $E_2$ is defined in (\ref{Theorem:Statement:STCT:E2}), which {is the Fourier transform of $\epsilon_3$ defined in (\ref{Theorem:Statement:STCT:epsilon4}) and} in general is a distribution. {When $K=1$, $E_1=0$, and when $K>1$ it} satisfies
\begin{equation*}
|E_1|\leq {2}\Delta I_0^\gamma \sum_{k=2}^K  B^\gamma_{k,1}(t)\|c_k^\gamma\|_{\ell^\infty}N_k\sum_{\ell=1}^{k-1}\Big[\frac{4\Delta}{\phi'_{\ell,1}(t)}+E^{(\ell)}(N_k)\Big],
\end{equation*}
where $E^{(\ell)}(N_k) { =o(N_k)}$ is defined in (\ref{Theorem:Statement:STCT:Eell}). $E_2$ satisfies $|E_2(\psi)|\leq \|{\epsilon_3}(t,\cdot)\|_{L^\infty} \|\hat{\psi}\|_{L^1}$ for all $\psi\in \mathcal{S}$, {and $\epsilon_3$} is of order {$\epsilon^\gamma$}.
\end{thm}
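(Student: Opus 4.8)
\emph{Proof plan.} The plan is to first establish the STFT expansion (\ref{MainTheorem:STFTExpansion}) by local linearization, and then to feed it through the nonlinearity $g_\gamma$ and the quefrency integral (\ref{eq: rceps1}), splitting the outcome into the advertised main term, the cross-component collision error $E_1$, and the STFT-error $E_2$.

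For (\ref{MainTheorem:STFTExpansion}), linearity of (\ref{eq: stft1}) reduces the computation to a single ANH summand $f_k$, and writing each cosine in (\ref{model:nonlinearRelationship}) as $\tfrac12(e^{i2\pi\phi_{k,\ell}}+e^{-i2\pi\phi_{k,\ell}})$ produces the symmetric sum over $\ell\in\{-N_k,\dots,N_k\}$ once $\phi_{k,-\ell}:=-\phi_{k,\ell}$ and $B_{k,-\ell}:=B_{k,\ell}$ are introduced. For each pair $(k,\ell)$ I would freeze the amplitude and linearize the phase about $\tau=t$, writing $B_{k,\ell}(\tau)=B_{k,\ell}(t)+O(\epsilon)$ and $\phi_{k,\ell}(\tau)=\phi_{k,\ell}(t)+\phi'_{k,\ell}(t)(\tau-t)+O(\epsilon)$ with remainders controlled by the slowly varying bounds (\ref{def:slow_varying}); the leading term integrates against $h(\tau-t)e^{-i2\pi\xi(\tau-t)}$ to exactly $B_{k,\ell}(t)\hat h(\xi-\phi'_{k,\ell}(t))e^{i2\pi\phi_{k,\ell}(t)}$. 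The indices $|\ell|>N_k$ are absorbed into $\epsilon(t,\xi)$ via the tail bound (\ref{Condition:ANH:B_elltail1}), while the Taylor remainders, weighted by $(\tau-t)$ against the Schwartz window, are of order $\epsilon$; since $f(\cdot)h(\cdot-t)\in C^1$ with integrable derivative (the most the regularity $B_{k,\ell}\in C^1$, $\phi_{k,\ell}\in C^2$ allows), a single integration by parts in $\tau$ gives the $|\xi|^{-1}$ decay.

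For the STCT, write $V^{(h)}_f(t,\xi)=M(t,\xi)+\epsilon(t,\xi)$ with $M$ the main term of (\ref{MainTheorem:STFTExpansion}). Because $0<\gamma\le1$, the map $x\mapsto|x|^\gamma$ is subadditive, so $\bigl||M+\epsilon|^\gamma-|M|^\gamma\bigr|\le|\epsilon|^\gamma$; naming this difference $\epsilon_3(t,\xi)$ gives a quantity of order $\epsilon^\gamma$ and identifies $E_2$ as its quefrency Fourier transform, whence $|E_2(\psi)|\le\|\epsilon_3(t,\cdot)\|_{L^\infty}\|\hat\psi\|_{L^1}$. It remains to treat $\int|M(t,\xi)|^\gamma e^{-i2\pi q\xi}\ud\xi$. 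The bumps $\hat h(\xi-\phi'_{k,\ell}(t))$ are supported on the intervals $Z_{k,\ell}(t)$ of (\ref{Definition:Zkl}), and Assumption \ref{Assumption:GeneralAssumption} keeps $\Delta$ below a quarter of both $\inf_t\phi'_{1,1}$ and $d$, so the bumps of a single component never overlap. On the portion of the $\xi$-axis where bumps of distinct components also stay apart, $|M|^\gamma=2^{-\gamma}\sum_{k,\ell}B^\gamma_{k,\ell}(t)|\hat h(\xi-\phi'_{k,\ell}(t))|^\gamma$ exactly; the substitution $\eta=\xi-\phi'_{k,\ell}(t)$ turns each piece into $2^{-\gamma}B^\gamma_{k,\ell}(t)e^{-i2\pi q\phi'_{k,\ell}(t)}\widehat{|\hat h|^\gamma}(q)$, and summing over $\ell$ while replacing $\phi'_{k,\ell}(t)$ by $\ell\phi'_{k,1}(t)$ through (\ref{Condition:ANH:phi_ell}) assembles the discrete-time Fourier transform $\hat b_k(q)$, yielding the main term $2^{-\gamma}\widehat{|\hat h|^\gamma}(q)\sum_k\hat b_k(q)$.

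The remaining and hardest part is the collision term $E_1$, which records where a multiple of one component sits atop a multiple of another. The non-multiple condition of Definition \ref{DefBClassMultipleTimeSeries} prevents any two \emph{fundamentals} from coinciding, so $E_1=0$ when $K=1$; for $K>1$ I would, for each ordered pair $\ell<k$, isolate the frequencies near which a multiple of component $k$ meets the grid $\{m\phi'_{\ell,1}(t)\}$. On each such collision band, of length of order $\Delta$, the nonlinear defect obeys $\bigl||a+b|^\gamma-|a|^\gamma-|b|^\gamma\bigr|\le2|b|^\gamma$, where the perturbing bump $b$ from component $k$ has height at most $I_0^\gamma B^\gamma_{k,1}(t)\|c_k^\gamma\|_{\ell^\infty}$, using $|\hat h|\le\int|h|=I_0$ and the amplitude bound (\ref{Condition:ANH:B_ell}). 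The number of such bands is governed, for each $\ell<k$, by the grid density $4\Delta/\phi'_{\ell,1}(t)$ plus the counting correction $E^{(\ell)}(N_k)=o(N_k)$; multiplying the per-band defect by the $N_k$ multiples of component $k$ and summing over $\ell<k$ and over $k\ge2$ reproduces the stated estimate. The main obstacle is exactly this harmonic-collision bookkeeping: controlling how many coincidences $j\phi'_{k,1}(t)\approx m\phi'_{\ell,1}(t)$ can occur below $N_k\phi'_{k,1}(t)$, and how the sublinear defect $\bigl||a+b|^\gamma-|a|^\gamma-|b|^\gamma\bigr|$ accumulates across them without the exact orthogonality that $\gamma=2$ would furnish.
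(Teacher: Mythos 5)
Your plan follows the same route as the paper's actual proof: linearize the phase and freeze the amplitude to obtain the bump expansion of the STFT (the paper's Lemmas \ref{Lemma:TaylorExpansionAMFM} and \ref{Lemma:STFTexpansion}), absorb the tail $|\ell|>N_k$ via (\ref{Condition:ANH:B_elltail1}), use subadditivity of $x\mapsto|x|^\gamma$ for $0<\gamma\le1$ to push the STFT error through the power and obtain $\epsilon_3$ and $E_2$ (their Lemmas \ref{Lemma:SmallGammaExpansion} and \ref{Lemma:STFTabsGammaPower}), and Fourier-transform the non-overlapping bumps to assemble $\widehat{|\hat h|^\gamma}(q)\,\hat b_k(q)$. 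The decisive gap is in the $E_1$ term: your bound on the number of collision bands, ``the grid density $4\Delta/\phi'_{\ell,1}(t)$ plus the counting correction $E^{(\ell)}(N_k)=o(N_k)$,'' is asserted rather than proved, and your closing paragraph explicitly names exactly this count as the unresolved ``main obstacle.'' But that count \emph{is} the substance of the $E_1$ estimate, not an input to it. The paper proves it in Lemma \ref{Lemma:ErdosTuranBound} by analyzing the sequence $s_{k,\ell}(m)=m\phi'_{k,1}(t)\ (\mathrm{mod}\ \phi'_{\ell,1}(t))$: when $\phi'_{k,1}(t)/\phi'_{\ell,1}(t)$ is rational the orbit is a uniform finite grid (via a multiplicative inverse modulo the denominator), and when it is irrational the orbit is equidistributed by Weyl's criterion, with the Erd\"os--Tur\'an inequality supplying the quantitative discrepancy bound that defines $E^{(\ell)}(N_k)$ in (\ref{Theorem:Statement:STCT:Eell}). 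Without this number-theoretic ingredient, the stated inequality for $E_1$ has no proof; supplying it is the one idea your proposal is missing.

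A second, fixable misstep is the ordering of the harmonization $\phi'_{k,\ell}(t)\to\ell\phi'_{k,1}(t)$. You perform it \emph{after} the Fourier transform, in the phase factors $e^{-i2\pi q\phi'_{k,\ell}(t)}$; by (\ref{Condition:ANH:phi_ell}) the resulting error per term is of size $\min\big(2,\,2\pi q\,\epsilon\,\phi'_{k,1}(t)\big)$, which grows with $q$, so it is not uniformly of order $\epsilon^\gamma$ and fits neither of the error terms you have already defined ($E_1$ records collisions, and your $E_2$ was fixed as the transform of $|V|^\gamma-|M|^\gamma$ with $M$ built on the true frequencies $\phi'_{k,\ell}(t)$). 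The paper instead harmonizes at the STFT level (Lemma \ref{Lemma:STFTexpansion2}): inside the argument of the compactly supported, Lipschitz $\hat h$, the substitution costs $O(\epsilon)$ pointwise, is supported on finitely many intervals of length $O(\Delta)$, and is folded into $\epsilon_2$, hence into $\epsilon_3$ and $E_2$ after taking the $\gamma$ power. Reordering your argument this way closes that hole; the Erd\"os--Tur\'an step, however, must be added outright.
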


{
The equation (\ref{Theorem:STCT:MainStatement}) does not indicate the relationship between the relationship among $\frac{\widehat{|\hat{h}|^\gamma}(q) }{2^\gamma}\sum_{k=1}^K  \hat{b}_k(q)$, $E_1$, and $E_2$, so we could not conclude that we could obtain the inverse of the IF from the STCT. We need more conditions to obtain what we are after. The following corollary is immediate from Theorem \ref{Theorem:TimeVaryingCepstrum}.

\begin{cor}\label{Corollary:TimeVaryingCepstrum}
Fix $\epsilon>0$ and $d>0$. Take $f(t)=\sum_{k=1}^Kf_k(t)\in \mathcal{D}_{\epsilon,d}$ for some $K\geq 1$. In addition to Assumption \ref{Assumption:GeneralAssumption}, suppose 
\begin{equation}\label{Assumption:ANH:B_ell}
\frac{B_{k,\ell}(t)}{\sqrt{\frac{1}{4}B^2_{k,0}(t)+\frac{1}{2}\sum_{\ell=1}^\infty B^2_{k,\ell}(t)}} > \epsilon^{1/2}
\end{equation}
for all $k=1,\ldots,K$ and $\ell=0,1,\ldots,N_k$. 
Then, when $\epsilon<1$ is sufficiently small, $\Delta N_k$ is sufficiently small and $\sqrt{\frac{1}{4}B^2_{k,0}(t)+\frac{1}{2}\sum_{\ell=1}^\infty B^2_{k,\ell}(t)}$ is sufficiently large for $k=1,\ldots,K$, the term $\frac{\widehat{|\hat{h}|^\gamma}(q) }{2^\gamma}\sum_{k=1}^K  \hat{b}_k(q)$ in (\ref{Theorem:STCT:MainStatement}) dominates and $\hat{b}_k$ is a real, continuous and periodic function of period $1/\phi_{k,1}'(t)$ for $k=1,\ldots,K$.
\end{cor}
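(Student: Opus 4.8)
The plan is to treat the two assertions of the corollary separately. The structural claim, that each $\hat b_k$ is real, continuous and $1/\phi_{k,1}'(t)$-periodic, is essentially algebraic and follows at once from the definition of the sequence $b_k$. The claim that the leading term of \eqref{Theorem:STCT:MainStatement} dominates is the quantitative heart of the matter: it amounts to matching the order in $\epsilon$ that the hypothesis \eqref{Assumption:ANH:B_ell} forces on the leading term against the orders of the error terms $E_1$ and $E_2$ supplied by Theorem \ref{Theorem:TimeVaryingCepstrum}. I would dispatch the structural claim first and then turn to the size comparison.

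For the structural claim, recall that $b_k(j)=B_{k,|j|}^\gamma(t)$ for $|j|\le N_k$ and $b_k(j)=0$ otherwise, so that $b_k$ is a finite, real, even sequence. Writing its discrete-time Fourier transform with the sample spacing $\phi_{k,1}'(t)$ (which is exactly the spacing produced in the proof of Theorem \ref{Theorem:TimeVaryingCepstrum}, where $\widehat{|\hat h|^\gamma}(q)\sum_\ell B_{k,\ell}^\gamma(t)e^{-i2\pi\ell\phi_{k,1}'(t)q}$ appears) and pairing the indices $\pm j$ gives
\begin{equation*}
\hat b_k(q)=B_{k,0}^\gamma(t)+2\sum_{j=1}^{N_k}B_{k,j}^\gamma(t)\cos\!\big(2\pi j\,\phi_{k,1}'(t)\,q\big),
\end{equation*}
a finite sum of real continuous functions, hence real and continuous; and since $q\mapsto q+1/\phi_{k,1}'(t)$ advances the argument of each cosine by the integer multiple $2\pi j$ of $2\pi$, every summand is unchanged, giving periodicity of period $1/\phi_{k,1}'(t)$. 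Incidentally, at $q=m/\phi_{k,1}'(t)$, $m\in\ZZ$, all cosines equal $1$, so $\hat b_k$ attains there its maximal value $\sum_{|j|\le N_k}B_{k,j}^\gamma(t)$, which is what makes the fundamental quefrency show up as a peak.

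For the domination I would compare sizes at the fundamental quefrencies $q=1/\phi_{k,1}'(t)$. Abbreviating $A_k(t):=\sqrt{\tfrac14 B_{k,0}(t)^2+\tfrac12\sum_{\ell=1}^\infty B_{k,\ell}(t)^2}$ for the AM factor, hypothesis \eqref{Assumption:ANH:B_ell} gives $B_{k,\ell}(t)>\epsilon^{1/2}A_k(t)$, hence $B_{k,\ell}^\gamma(t)>\epsilon^{\gamma/2}A_k(t)^\gamma$ for $0\le\ell\le N_k$. Therefore at $q=1/\phi_{k,1}'(t)$ the leading term is at least of order $\widehat{|\hat h|^\gamma}(1/\phi_{k,1}'(t))\,(2N_k+1)\,\epsilon^{\gamma/2}A_k(t)^\gamma$, and one checks $\Re\,\widehat{|\hat h|^\gamma}(1/\phi_{k,1}'(t))>0$: since $|\hat h|^\gamma\ge 0$ is supported in $[-\Delta,\Delta]$ and the bound $\Delta<\inf_t\phi_{1,1}'(t)/4\le\phi_{k,1}'(t)/4$ of Assumption \ref{Assumption:GeneralAssumption} gives $\Delta/\phi_{k,1}'(t)<1/4$, each factor $\cos(2\pi q\xi)$ with $|\xi|\le\Delta$ is strictly positive there. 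On the error side, the explicit estimate of Theorem \ref{Theorem:TimeVaryingCepstrum}, together with $B_{k,1}^\gamma(t)\lesssim A_k(t)^\gamma$, shows $|E_1|\lesssim \Delta N_k\big(\Delta+E^{(\ell)}(N_k)\big)A_k(t)^\gamma$, while $E_2$ is controlled by $\|\epsilon_3(t,\cdot)\|_{L^\infty}$, which is of order $\epsilon^\gamma$. Thus choosing $\Delta$ and $\Delta N_k$ small drives $E_1$ below the leading term, while choosing $\epsilon$ small—so that $\epsilon^{\gamma/2}$ beats $\epsilon^\gamma$—and $A_k(t)$ large drives $E_2$ below it; combining these yields the asserted domination.

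The step I expect to be most delicate is precisely this matching of powers of $\epsilon$. The leading term is only of order $\epsilon^{\gamma/2}$, not $O(1)$, because \eqref{Assumption:ANH:B_ell} bounds the normalized amplitudes below merely by $\epsilon^{1/2}$; the whole argument therefore hinges on the exponent $1/2$ in the hypothesis being strictly larger than the exponent $1$ that governs $E_2$ after the nonlinearity $g_\gamma$ has acted on the $O(\epsilon)$ part of the STFT expansion in \eqref{MainTheorem:STFTExpansion}. A second point needing care is that at $q=1/\phi_{k,1}'(t)$ the off-peak contributions $\sum_{k'\ne k}\hat b_{k'}(q)$ must not cancel the peaking term $\hat b_k(q)$; this is where the frequency separation condition \eqref{condition_Cepsilon_d} and the non-multiple condition of Definition \ref{DefBClassMultipleTimeSeries} enter, since they force the fundamental quefrencies of distinct components to be separated and non-commensurate, so that the periodic functions $\hat b_{k'}$ do not all peak simultaneously with $\hat b_k$.
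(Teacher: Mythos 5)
Your handling of the structural claim is correct and essentially identical to the paper's (the paper additionally uses the non-vanishing of every $b_k(\ell)$, guaranteed by (\ref{Assumption:ANH:B_ell}), to ensure that $1/\phi_{k,1}'(t)$ is genuinely the period rather than an integer multiple of a smaller one), and you have isolated the right quantitative mechanism: $\epsilon^{\gamma/2}$ from the hypothesis must beat the $\epsilon^{\gamma}$ order of the error produced by pushing the $O(\epsilon)$ STFT remainder through $g_\gamma$, smallness of $\Delta N_k$ controls the overlap error, and largeness of the AM factor absorbs the contributions to the error that do not scale with it. However, the venue you chose for the comparison --- pointwise evaluation in the quefrency variable at $q=1/\phi_{k,1}'(t)$ --- breaks down, and this is a genuine gap. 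The term $E_2=\mathcal{F}[\epsilon_3(t,\cdot)]$ is in general only a tempered distribution: $\epsilon_3(t,\xi)$ decays merely like $|\xi|^{-\gamma}$, so it is not integrable and its Fourier transform has no pointwise values; Theorem \ref{Theorem:TimeVaryingCepstrum} controls it only weakly, through $|E_2(\psi)|\leq \|\epsilon_3(t,\cdot)\|_{L^\infty}\|\hat{\psi}\|_{L^1}$ for Schwartz test functions $\psi$. Consequently the sentence ``choosing $\epsilon$ small $\ldots$ drives $E_2$ below the leading term at $q=1/\phi_{k,1}'(t)$'' is not a meaningful comparison. The paper's proof avoids this entirely by performing the domination argument \emph{before} taking the Fourier transform, in the frequency variable $\xi$: there the main term $\frac{1}{2^\gamma}\sum_{k=1}^K\sum_{\ell=-N_k}^{N_k}B_{k,\ell}^\gamma(t)|\hat{h}(\xi-\ell\phi_{k,1}'(t))|^\gamma$ is an honest non-negative function, $\epsilon_3(t,\xi)$ is a function bounded by $Q^\gamma\epsilon^\gamma$ with $Q$ growing at most linearly in the AM factor, and $\delta_3(t,\xi)$ is a function supported on the overlap set whose size is controlled by $\Delta N_k$ via the Erd\"os-Tur\'an bound; only after these three functions are compared is the Fourier transform applied.

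The second difficulty you flag --- that $\sum_{k'\neq k}\hat{b}_{k'}(q)$ might cancel the peak of $\hat{b}_k(q)$ --- is an artifact of your evaluate-at-the-peak strategy, and your appeal to the frequency-separation and non-multiple conditions does not resolve it. Those conditions constrain the fundamental IF's $\phi_{k,1}'(t)$, not the sizes $N_{k'}$ and the AM factors of the other components; since each $\hat{b}_{k'}$ is a positive-coefficient cosine polynomial, it has Dirichlet-kernel-like negative lobes of depth proportional to its own peak, and nothing in the hypotheses prevents a component $k'$ whose peak scale greatly exceeds $N_k\,\epsilon^{\gamma/2}$ times the $k$-th AM factor from producing a negative dip at $q=1/\phi_{k,1}'(t)$ that swamps the term you are exhibiting. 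Note that the corollary only asserts that the \emph{sum} $\frac{\widehat{|\hat{h}|^\gamma}(q)}{2^\gamma}\sum_{k=1}^K\hat{b}_k(q)$ dominates $E_1$ and $E_2$, not that each component's peak survives inside that sum; in the paper's $\xi$-domain comparison no cancellation can arise at all, because every summand of the main term is non-negative. To repair your argument you would need to pair all terms of (\ref{Theorem:STCT:MainStatement}) against a test function concentrated near $q=1/\phi_{k,1}'(t)$ (so that $E_2$ is even defined) and then still defeat the cancellation problem, which would require hypotheses beyond those of the corollary.
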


The assumption that $\sqrt{\frac{1}{4}B^2_{k,0}(t)+\frac{1}{2}\sum_{\ell=1}^\infty B^2_{k,\ell}(t)}$ is sufficiently large for $k=1,\ldots,K$ means that the ANH functions we have interest in have large enough AMs. 
Condition (\ref{Condition:ANH:B_elltail1}) and Assumption (\ref{Assumption:ANH:B_ell}) together mean that the first $N_k$ multiples of the fundamental component of the $k$-th ANH function are strong enough, while the remaining multiples are not significant. When $\gamma$ is chosen small enough, this assumption leads to the fact that $b_k(j)=B_{k,j}^\gamma(t)$ is close to $1$ for $|j|\leq N_k$, and ``small'' otherwise. Thus, the Fourier transform of the $\ell^1$ series $b_k$ reflects faithfully the inverse of the IF. We could call $1/\phi_{k,1}'(t)$ the {\em instantaneous period (IP)} of the $k$-th ANH function, which is the inverse of its fundamental frequency.

Note that the assumption (\ref{Assumption:ANH:B_ell}) can be generalized, but more conditions are needed to guarantee that we obtain the IP. For example, if the condition (\ref{Assumption:ANH:B_ell}) is failed for $\ell=2j$ so that $B_{k,2j}(t)=0$ for $j=1,\ldots,\lfloor N_k/2\rfloor$, then the $\ell^1$ series $b_k$ has an oscillation of period $2\phi'(t)$, and hence its Fourier transform is dominant in $\frac{1}{2\phi'(t)}$ instead of $\frac{1}{\phi'(t)}$. This will lead to an incorrect conclusion about the IF in the end. In this paper, to simplify the discussion, we focus on this assumption. See more discussions in the Discussion section.  

The bounds for $E_1$ and $E_2$ need some discussions.}
\begin{enumerate}
\item The {bound for} $E_1$ comes from controlling the possible overlaps between the multiples of different ANH components in the STFT $V^{(h)}_f(t,\xi)$. When $K=1$, there is no danger of overlapping, so $E_1=0$. When $K>1$, the term $N_k\sum_{\ell=1}^{k-1}[\frac{4\Delta}{\phi'_{\ell,1}(t)}+E^{(\ell)}(N_k)]$ is the upper bound of all possible overlaps between the $k$-th component and all $\ell$-th component, where $\ell\in\{1,\ldots,k-1\}$. {
The origin of this upper bound is the fundamental Erd\"{o}s-Tur\'{a}n inequality, which gives a quantitative form of Weyl's criterion for equidistribution, and the convergence rate of $E^{(\ell)}(N_k)\to 0$ when $N_k\to\infty$ depends on the algebraic nature of the ratio $\phi_{k,1}'(t)/\phi_{\ell,1}'(t)$. Note that even when the IF's of all oscillatory components are constant, if $K>1$, the $E_1$ term still exists due to the fundamental equidistribution property.

\item When $K>1$, the bound for $E_1$ is the worst bound. Since we could not control the locations of the overlaps between those multiples of different ANH components in the STFT, when we evaluate the STCT by the Fourier transform, the discrepancy caused by the overlaps, denoted as $\delta_3$ in (\ref{Proof:Lemma:Definition:delta3part2}), is bounded simply by the Riemann-Lebesgue theorem. The bound is shown in (\ref{Proof:Lemma:Definition:delta3Fourier}). See Remark \ref{Remark:BoundE1} for more discussions.
The constant could be improved. However, since the focus here is showing how the result is influenced by the fundamental limitation of the number of overlapped multiples, no effort has been made to optimize it. 

\item Note that the bound of $E_1$ blows up when $N_k\to \infty$. Thus, the bound of $E_1$ is not useful when $N_k$ is ``huge''. In practice, however, most non-sinusoidal oscillatory signals have $N_k$ less than 20. The most extreme case we have encountered up to now is the ECG signal, which has $N_k$ about 40. Thus, in practice, we could choose a small $\Delta$ so that $E_1$ is well controlled for a ``reasonable'' $N_k$, and this is the condition ``$\Delta N_k$ is sufficiently small'' in Corollary \ref{Corollary:TimeVaryingCepstrum}. 
However, $\Delta$ cannot be chosen arbitrarily small. Note that the smaller the $\Delta$ is, the longer the window will be, and the larger the absolute moments $I_k$ will be. Thus, the smaller the $\Delta$ is, the worse the bound of $E_2$ is. 
In sum, when $K>1$, except for special non-sinusoidal oscillations with huge $N_k$, the bound for $E_1$ could be well controlled for practical applications. }

\item The {term} $E_2$ comes from the non-constant AM and IF of each ANH component. When the IF and AM are constant, this term becomes zero. Note that when $\gamma$ is chosen small, $b_k$ becomes more like a constant sequence and $\hat{b}_k(q)$ behaves more like a Dirichlet kernel. On the other hand, $E_2$ becomes large when $\gamma$ is small. 
\end{enumerate}

{The theorem and the corollary say} that the STCT encodes the IF information in the format of IP via a periodic function.
To better understand periodic functions $\hat{b}_k$, we take a look at the following example.

\begin{example}\label{Example:WithNontrivialShape}
Consider a signal $f(t)=s(\xi_0 t)$, where $\xi_0>0$ and $s$ is real, smooth and 1-periodic. This special case has only one oscillatory component, $K=1$, with the fixed wave-shape function and a constant IF. Thus we do not worry about the error terms $E_1$ and $E_2$ in Theorem \ref{Theorem:TimeVaryingCepstrum}. By a direct expansion, $f(t)=\sum_{k=0}^Nc(k)\cos(2\pi k\xi_0t+\alpha_k)$, where $N\in\NN\cup \{\infty\}$, $\alpha_0=0$, $c(1)>0$ and $\alpha_k\in[0,2\pi)$ and $c(k)\geq0$ for all $k\neq 1$. To simplify the calculation, we choose a smooth window function $h$ so that $\hat{h}$ is supported on $[-\Delta,\Delta]$, where $0<\Delta<\xi_0/2$. By the Plancherel identity and a direct calculation, we have
\begin{equation}
V_f^{(h)}(t,\xi)=\sum_{k=-N}^Nc(k)\hat{h}(\xi-k\xi_0)e^{i(2\pi k\xi_0 t+\alpha_k)},
\end{equation}
where we denote $c(-k)=c(k)$ for all $k\in\NN$. Since $\Delta<\xi_0/2$, for $0<\gamma\ll 1$, we have
\begin{equation}
|V_f^{(h)}(t,\xi)|^\gamma=\sum_{k=-N}^Nc(k)^\gamma|\hat{h}(\xi-k\xi_0)|^\gamma=\big[\sum_{k=-N}^Nc(k)^\gamma\delta_{k\xi_0}\star |\hat{h}|^\gamma\big](\xi).
\end{equation}
The evaluation of the Fourier transform of $[\sum_{k=-N}^Nc(k)^\gamma\delta_{k\xi_0}\star |\hat{h}|^\gamma](\xi)$ is straightforward and we have for $q>0$,
\begin{align}
C_f^{(h,\gamma)}(t,q)\,= \mathcal{F}[|V_f^{(h)}(t,\cdot)|^\gamma](q)=\widehat{|\hat{h}|^\gamma}(q)\sum_{k=-N}^Nc(k)^\gamma e^{-i2\pi k\xi_0q}\,=\widehat{|\hat{h}|^\gamma}(q)S^{{(\gamma)}}_{1/\xi_0}(q),
\end{align}
where $S^{{(\gamma)}}_{1/\xi_0}(q)$ is a periodic distribution with the period of length $1/\xi_0$ so that $\widehat{S^{{(\gamma)}}_{1/\xi_0}}(k)=c(k)^\gamma$ for $k\in\{-N,-N+1,\ldots,N-1,N\}$ and $\widehat{S^{{(\gamma)}}_{1/\xi_0}}(k)=0$ otherwise.

We could take a look at a special case to have a better picture of what we get eventually. Suppose $N$ is finite and $c(k)=1$ for $k\in\{-N,\ldots,N\}$. In this case, we have
\begin{equation*}
C_f^{(h,\gamma)}(t,q)\,=\widehat{|\hat{h}|^\gamma}(q)D_N(\xi_0q),
\end{equation*}
where $D_N(\xi_0q)$ is the Dirichlet kernel, which is periodic with the period $1/\xi_0$ since $D_N(\xi_0q)=\frac{\sin(\pi(2N+1)\xi_0q)}{\sin(\pi\xi_0q)}$. Also, it becomes more and more spiky at $\ell/\xi_0$ and eventually the Delta comb supported on $\ell/\xi_0$, $\ell\in\ZZ$, when $N\to\infty$. On the other hand, when $N$ is finite and small, the STCT could be oscillatory but still contains information we need. For example, when $N=1$, $D_1(\xi_0q)=\frac{\sin(\pi 3\xi_0q)}{\sin(\pi\xi_0q)}$ and $D_1(\xi_0q)$ still has dominant values at $q=\ell/\xi_0$ for $\ell\in\ZZ$ .
\end{example}

\subsection{inverse STCT}\label{Section:deShape}

Based on Theorem \ref{Theorem:TimeVaryingCepstrum} and a careful observation, we see that to determine the fundamental frequency for an ANH signal $f(t)$, a candidate frequency should have the saliency of its multiples in the TF representation $V_f^{(h)}(t,\xi)$, and the associated period and its multiples in the TP representation $C^{(h,\gamma)}_f(t,q)$. In \cite{su2015combining,su2016exploiting}, this observation is summarized as a practical {principle} called the {\em constraint of harmonicity}, which is described as follows: at a specific time $t_0$, a pitch candidate, $\xi_1>0$, is determined to be the true pitch when there exists $M_v, M_u \in \NN$ such that there are
\begin{enumerate}
\item A sequence of ``peaks'' found around $V_f^{(h)}(t_0,\xi_1)$, $V_f^{(h)}(t_0,2\xi_1)$, $\ldots$, $V_f^{(h)}(t_0,M_v \xi_1)$;
\item A sequence of ``peaks'' found around $C_f^{(h,\gamma)}(t_0,q_1)$, $C_f^{(h,\gamma)}(t_0,2q_1)$, $\ldots$, $C_f^{(h,\gamma)}(t_0,M_u q_1)$;
\item $\xi_1=1/q_1$.
\end{enumerate}
The sequence $\{\xi_1,2\xi_1,\ldots,M_v\xi_1\}$ is commonly called {\em harmonic series} associated with multiples of the pitch $\xi_1$.
The constraint of harmonicity {principle} leads to the following consideration. If we ``invert'' the quefrency axis of the TP representation by the operator $\mathcal{I}$,
\begin{equation}
\mathcal{I}:q\mapsto 1/q,
\end{equation}
when $q>0$, then by the relationship that the period is the inverse of the frequency, we could obtain information about the frequency in $C_f^{(h,\gamma)}(t,\mathcal{I}q)$. Note that $\mathcal{I}$ is open from $(0,\infty)$ to $(0,\infty)$ and the differentiation of $\mathcal{I}$ is surjective on $(0,\infty)$, so for a distribution $T$ defined on $(0,\infty)$, we could well-define the composition $T\circ \mathcal{I}$, or the pull-back of $T$ via $\mathcal{I}$ \cite[Theorem 6.1.2]{Hormander:1990}. Since in general $C_f^{(h,\gamma)}(t,\cdot)$ is a tempered distribution, we could consider the following definition to extract the frequency information for $f$:

\begin{defn}
For a function $f\in\mathcal{S}'$, window $h\in\mathcal{S}$ and $\gamma>0$, the {\em inverse short time cepstral transform (iSTCT)} is defined on $\RR\times \RR^+$ as
\begin{equation}
U_f^{(h,\gamma)}(t,\xi):=C_f^{(h,\gamma)}(t,\mathcal{I}\xi),
\end{equation}
where $\xi>0$ and $U_f^{(h,\gamma)}(t,\cdot)$ is in general a distribution.
\end{defn}

The unit of $\xi$ in $U_f^{(h,\gamma)}(t,\xi)$ is Hz or any feasible unit in the frequency domain. We mention that in the special case that $f\in C^\infty\cap L^\infty$, $U^{(h,\gamma)}_f(t, \xi)$ is a well-defined continuous function in the frequency axis. Also, if $C_f^{(h,\gamma)}(t,\cdot)$ is integrable and we want to preserve the integrability, we could weight $C_f^{(h,\gamma)}$ by the Jacobian of $\mathcal{I}$. However, since the integrability is not the main interest here, we do not consider it.
We view $U_f^{(h,\gamma)}(t,\xi)$ as a TF representation determined by a nonlinear transform composed of several transforms.
While this operator looks natural at the first glance, it is actually not stable.
See the following example for the source of the instability.

\begin{example}
{ Let us continue} the discussion of Example \ref{Example:WithNontrivialShape}.
Suppose $N$ is finite and hence $C_f^{(h,\gamma)}(t,q)=\widehat{|\hat{h}|^\gamma}(q)D_N(\xi_0q)$ for $q>0$.
Thus, by inverting the axis by $\xi\mapsto 1/\xi$ when $\xi\neq 0$, the iSTCT becomes $U^{(h,\gamma)}_f(t,\xi)= \widehat{|\hat{h}|^\gamma}(1/\xi) D_N(\xi_0/\xi)$, where $\xi>0$.
Clearly, due to the oscillatory nature of the Dirichlet kernel, the non-zero region of $D_N(\xi_0q)$ around $q=0$ would be flipped to the high frequency region, which amplifies the unwanted information in the low frequency and represents it in the high frequency region. To be more precise, since $D_N(\xi_0q)$ decays monotonically from $2N$ to about $-0.43N$ as $q$ goes from $0$ to $x_1\in (\frac{1}{(2N+1)\xi_0},\frac{2}{(2N+1)\xi_0})$, where $x_1$ is the local extremal point, in iSTCT, $U_f^{(h,\gamma)}(t,\xi)$ increases from about $-0.43N$ to $2N$ as $\eta$ goes from $1/x_1$ to $\infty$. This indicates that $|U_f^{(h,\gamma)}(t,\xi)|>N$ for all $\xi>\Xi$ for some $\Xi>1/x_1$.

\end{example}

Motivated by the above example, in practice, we need to apply a filtering process on the STCT to stabilize the algorithm. Here is the main idea. Since our interest is to capture the IF's of the signal, we have to effectively remove components unrelated to IF's in the STCT. In practice, the irrelevant components lie in the low quefrency region. Therefore, we need to apply a {\em long-pass lifter} on $U^{(h,\gamma)}_f(t,\xi)$, where the {\em lifter} refers to a ``filter'' processed in the cepstral domain, again by inverting the first four letters of ``filter'', to distinguish it from the {\em filter} processed in the spectral domain \cite{bogert1963quefrency,oppenheim2004frequency}. Moreover, since the quefrency is measured in the unit of time, a lifter is identified as a {\em short-pass} or {\em long-pass} one rather than a low-pass or a high-pass one \cite{bogert1963quefrency,oppenheim2004frequency}. In short, a long-pass lifter passes mainly the component of high quefrency (long period) while rejects mainly the component of low quefrency (short period).

\subsection{de-shape STFT}\label{Section:deshapeSTFT}

Take the music signal as an example to examine the iSTCT. The constraint of harmonicity {principle} tells us that while at a fixed time $t$ we could find a harmonic series associated with multiples of the pitch $\xi_0$ in the TF representation $V^{(h)}_f(t,\xi)$, we should find a  sequence of peaks in the TF representation $U^{(h,\gamma)}_f(t,\xi)$, denoted as $\{\xi_1, \xi_1/2, \ldots, \xi_1/M_u\}$ and this sequence is called the {\em sub-harmonic series} associated with the fundamental frequency $\xi_1$ in the literature.
This observation motivates a combination of the STFT and iSTCT to extract the pitch information; that is, we consider the following combination of the TF representation and TP representation via the iSTCT, which we coined the name {\em de-shape STFT}:

\begin{defn}
For a function $f\in\mathcal{S}'$, window $h\in\mathcal{S}$ and $\gamma>0$, the de-shape STFT is defined on $\RR\times \RR^+$ as
\begin{equation}
\label{eq:W}
W^{(h,\gamma)}_f(t, \xi) := V^{(h)}_f(t,\xi)U^{(h,\gamma)}_f(t, \xi),
\end{equation}
where $\xi>0$ is interpreted as frequency.
\end{defn}

In general, since $V^{(h)}_f(t,\xi)$ is a $C^\infty$ function in the frequency axis and $U^{(h,\gamma)}_f(t, \xi)$ is a distribution in the frequency axis, the de-shape STFT is well-defined as a distribution. Again, in the special case that $f\in C^\infty\cap L^\infty$, $W^{(h,\gamma)}_f(t, \xi)$ is a well-defined continuous function in the frequency axis.

The motivation beyond the nomination ``de-shape'' is intuitive -- since the harmonic series associated with multiples of the fundamental frequency $\xi_0$ in $V^{(h)}_f(t,\xi)$ overlaps with the sub-harmonic series associated with multiples of the fundamental frequency $\xi_0$ in $U^{(h,\gamma)}_f(t, \xi)$ only at $\xi_0$, by multiplying $V^{(h)}_f(t,\xi)$ and $U^{(h,\gamma)}_f(t, \xi)$, only the information associated with the pitch is left in the result.
Thus, the influence caused by the non-trivial wave-shape function in the TF representation is removed, and hence we could view the de-shape process as an {\em adaptive and nonlinear filtering technique} for the STFT.
Since $\xi>0$ in $W^{(h)}_f(t, \xi)$ is interpreted as frequency, the de-shape STFT provides a TF representation.

We mention that in the music field, a similar idea called the {\em combined temporal and spectral representations} has been applied to the single pitch detection problem \cite{peeters2006music,emiya2007parametric}. With our notation, the proposed idea of detecting the pitch at time $t$, denoted as $\xi_0(t)$, is simply by $\xi_0 (t)=\argmax_{\xi>0} |W^{(h,\gamma)}_f(t,\xi)|$ \cite{peeters2006music,emiya2007parametric}.
In the last section of \cite{peeters2006music}, the authors showed a figure of polyphonic music and slightly addressed the ``potential'' of this idea in multiple pitch estimation problems. But this idea was not noticed until \cite{su2015combining,su2016exploiting}, which gives an explicit methodology, systematic investigation, and evaluation of using this idea in multiple pitch estimation.

\subsection{Sharpen de-shape STFT by the synchrosqueezing transform -- de-shape SST}\label{Section:Synchrosqueezing}

While the de-shape STFT could alleviate the influence of the wave-shape function, it again suffers from the Heisenberg-Gabor uncertainty principle and tends to be blurred in the TF representation \cite{Flandrin:1999}.
One approach to sharpen a TF representation is by applying the SST, and we propose to combine SST to obtain a sharp TF representation without the influence of the wave-shape function. SST is a nonlinear TF analysis technique, which is special case of the more general RM method \cite{Auger_Flandrin:1995}. In summary, it aims at moving the spectral-leakage terms caused by Heisenberg-Gabor uncertainty principle to the correct location, and therefore sharpens the TF representation with high concentration \cite{Daubechies_Lu_Wu:2011,Chen_Cheng_Wu:2014,Thakur:2015,Oberlin_Meignen_Perrier:2015}. The main step in SST is estimating the {\em frequency reassignment vector}, which guides how the TF representation should be nonlinearly deformed. The resulting TF representation has been applied to several fields. For example, in the physiological signal processing, SST leads to a a better estimation of IF and AM, which is applied to study sleep dynamics \cite{Wu_Talmon_Lo:2015}, coupling \cite{Iatsenko_Bernjak_Stankovski_Shiogai_Owen_Clarkson_McClintock_Stefanovska:2013} and others, or a better spectral analysis, which is applied to study the noxious stimulation problem \cite{Lin_Wu:2016};
in the mechanical engineering, it has been applied to estimate speed of rotating machinery \cite{Xi_Cao_Chen_Zhang_Jin:2015} and others;
in finance, it is applied to detect the non-stationary dynamics in the financial system \cite{Guharay_Thakur_Goodman_Rosen_Houser:2013};
in the music processing, such an approach can better discriminate closely-located components, and applications have be found in chord recognition \cite{khadkevich2011time}, sinusoidal synthesis \cite{peeters1999sinola} and others.

The frequency reassignment vector associated with a function $f\in\mathcal{S}'$ is determined by
\begin{equation}
\Omega^{(h,\upsilon)}_f(t,\xi):=
\left\{
\begin{array}{ll}
-\Im\frac{V_f^{(\mathcal{D}h)}(t,\xi)}{2\pi V_f^{(h)}(t,\xi)}&\mbox{ when }|V_f^{(h)}(t,\xi)|> \upsilon\\
-\infty&\mbox{ when }|V_f^{(h)}(t,\xi)|\leq \upsilon
\end{array}
\right.
\,,\label{RM:omega}
\end{equation}
where $\mathcal{D}h(t)$ is the derivative of the chosen window function $h\in\mathcal{S}$, $\Im$ means the imaginary part and $\upsilon>0$ gives a threshold so as to avoid instability in computation when $|V^{(h)}_f(t,\xi)|$ is small. The theoretical analysis of the frequency reassignment vector has been studied in several papers \cite{Daubechies_Lu_Wu:2011,Wu:2013,Chen_Cheng_Wu:2014}, and we refer the reader with interest to these papers. {In general, we could consider variations of the reassignment vectors for different purposes. For example, the reassignment vectors used in the second order SST \cite{Oberlin_Meignen_Perrier:2015}. To keep the discussion simple, we focus on the original SST.}

The SST of $V^{(h)}_f(t,\xi)$ is therefore defined as
\begin{equation}\label{definition:SSTV}
SV^{(h,\upsilon)}_{f}(t,\xi)=\int_{\mathfrak{N}_\upsilon(t)} V^{(h)}_f(t,\eta) \frac{1}{\alpha}g\left(\frac{|\xi-\Omega^{(h,\upsilon)}_f(t,\eta)|}{\alpha}\right)\ud \eta.
\end{equation}
where $\xi\geq0$, $\alpha>0$, $g\in\mathcal{S}$ and $\frac{1}{\alpha}g(\frac{\cdot}{\alpha})$ converges weakly to the Dirac measure supported at $0$ when $\alpha\to 0$, $\mathfrak{N}_\upsilon(t):=\{\xi\geq 0|\, |V^{(h)}_f(t,\xi)|>\upsilon\}$; similarly, we have the {\em de-shape SST} defined as
\begin{equation}\label{definition:SSTW}
SW^{(h,\gamma,\upsilon)}_{f}(t,\xi)=\int_{\mathfrak{N}_\upsilon(t)} W^{(h,\gamma)}_f(t,\eta) \frac{1}{\alpha}g\left(\frac{|\xi-\Omega^{(h,\upsilon)}_f(t,\eta)|}{\alpha}\right)\ud \eta\,,
\end{equation}
where $\xi\geq0$. Numerically, $g$ could be chosen to be the Gaussian function with $\alpha>0$ or as a direct discretization of the Dirac measure when $\alpha\ll 1$. For numerical implementation details and the stability results of SST, we refer the reader with interest to \cite{Chen_Cheng_Wu:2014}.

With the de-shape STFT, the wave-shape information is decoupled from the IF and AM in the TF representation; with the de-shape SST, the TF representation is further sharpened. We could continue to do the analysis to, for example, carry out the wave-shape reconstruction, count the oscillatory components, etc. Furthermore, we could combine the de-shape SST information and current wave-shape analysis algorithms, including the functional regression \cite[Section 4.7]{Chui_Lin_Wu:2015}, designing a dictionary \cite{Hou_Shi:2016} or unwrapping the phase \cite{Yang:2014}, to study the oscillatory signal with time-varying wave-shape function. The work of estimating the time-varying wave-shape function with applications will be explored systematically in {a} coming work.

\section{Numerical Results}\label{Section:Numerics}

In this section we demonstrate how the de-shape SST performs in various kinds of signals with multiple ANH components with non-trivial time-varying shape function. We consider a wide range of physiological, biological, audio and mechanical signals, which are generated in different dynamical system and recorded by different sensors. The signals are: (1) abdominal fetal ECG signal, (2) different photoplethysmography signals under different challenges -- respiratory and heartbeat, motion and heartbeat, and non-contact PPG signal, (3) music and bioacoustic signals including the violin sonata, choir and wolves sound. The code of SST and de-shape SST and test datasets are available via request.

For a fair comparison, the parameters for computing the de-shape SST are set to be the same for all signals throughout the paper: $\gamma=0.3$ for the STCT and $\upsilon=10^{-4}$\% of the root mean square energy of the signal under analysis for the de-shape SST.

\subsection{Simulated signal}

We continue the example shown in the Introduction section, make clear how $f_{2}$ is generated, and consider a more complicated example.
Take $W$ to be the standard Brownian motion defined on $\RR$ and define random processes $\Phi_{A,\sigma,a}:=\frac{(|W|+1)\star K_\sigma}{\|(|W|+1)\star K_\sigma\|_{L^\infty[0,100]}}+a$ and $\Phi_{\phi,\sigma,b,c}:=\frac{b|W|}{\|W\|_{L^\infty[0,100]}}\star K_{\sigma}+c$, where $a,b,c\in \RR$, $K_{\sigma}$ is the Gaussian function with the standard deviation $\sigma>0$. $A_2(t)$ is a realization of $\Phi_{A,10,0.9}$, $A_3(t)$ is a realization of $\Phi_{A,10,0.9}$, $\phi_2(t)$ is a realization of $\Phi_{\phi,5,2,\pi/2}$, and $A_3$ is a realization of $\Phi_{\phi,5,1,4}$ on $[0,100]$. Here all realizations are independent.

The signal $f_2$ is generated by $A_2(t)\text{mod}(\phi_2,1)$. To generate $f_3$, denote $t_{k}=\phi_3^{-1}(k)$. The signal $f_3$ is $A_3(t)\chi_{[30,100]}(\sum_{k}\delta_{t_k}\star \chi_{[-3/100,3/100]})$.

Consider a clean signal $f(t)=f_1(t)+f_2(t)+f_3(t)$ from $t=0$ to $t=100$ sampled at 100Hz. Clearly, while $f_1$, $f_2$ and $f_3$ are oscillatory, the wave-shape functions are all non-trivial and the wave-shape functions of $f_2$ and $f_3$ are time-varying, and $f_2$ and $f_3$ exist for only part of the full time observation time. To further challenge the algorithm, we add a white noise $\xi(t)$ to $g$ by considering $Y(t)=f(t)+\xi(t)$, where for all $t$, $\xi(t)$ is a student t4 random variable with the standard deviation $0.5$. The signal-to-noise (SNR) ratio of $Y(t)$ is 1.8dB, where SNR is defined as $20\log\frac{\text{std}(f)}{\text{std}(\xi)}$ and std means the standard deviation.
The signal $f_1$ and $f_2$ are shown in Figure \ref{fig:Introduction:Example1sig}, and the signal of $f_3(t)$, $g(t)$ and $Y(t)$ are shown in Figure \ref{fig:simulation}. The results of STCT, iSTCT, de-shape STFT and de-shape SST of $g(t)$ and $Y(t)$ are shown in Figure \ref{fig:simulationResult}. The ground truths are superimposed for the comparison.

There are several findings.
Note that even when the signal is clean, we could see several interferences in either STFT or SST. For example, we could see the ``bubbling pattern'' in these TF representations around 2Hz from 0 to 60 seconds (indicated by red arrows), which comes from the interference of the 2nd-multiple of $f_1$ and the fundamental component of $f_2$. These interferences are eliminated in de-shape SST, since the wave-shape is ``decomposed'' in the analysis. Second, when the signal is clean, we could see a ``curve'' starting from about 3.4Hz at 0 second and climbing up to 4Hz at 40s in STFT and SST (indicated by green arrows). Certainly this is not a true component but an artifact, which comes from the incidental appearance of different multiples of different ANH functions. This might mislead us and conclude that there is an extra component. Note that this possible artifact is eliminated in the de-shape SST.
Third, around 85s, the IF's of $f_2$ and $f_3$ cross over (indicated by blue arrows). How to directly decouple signals with this kind of cross-over IF's with TF analysis technique is still an open question.
Last but not the least, while the SNR is low, the de-shape SST could still be able to provide a reasonable IF information regarding the components.
This comes from the robustness of the frequency reassignment vector, which has been discussed in \cite{Chen_Cheng_Wu:2014}.
We mention that we could further stabilize the TF representation determined by the de-shape SST by the currently proposed multi-taper technique called concentration of frequency and time (ConceFT) \cite{Daubechies_Wang_Wu:2016}. We refer the reader with interest to \cite{Daubechies_Wang_Wu:2016} for a detailed discussion of ConceFT.

\begin{figure}
\centering
\includegraphics[width=0.9\columnwidth]{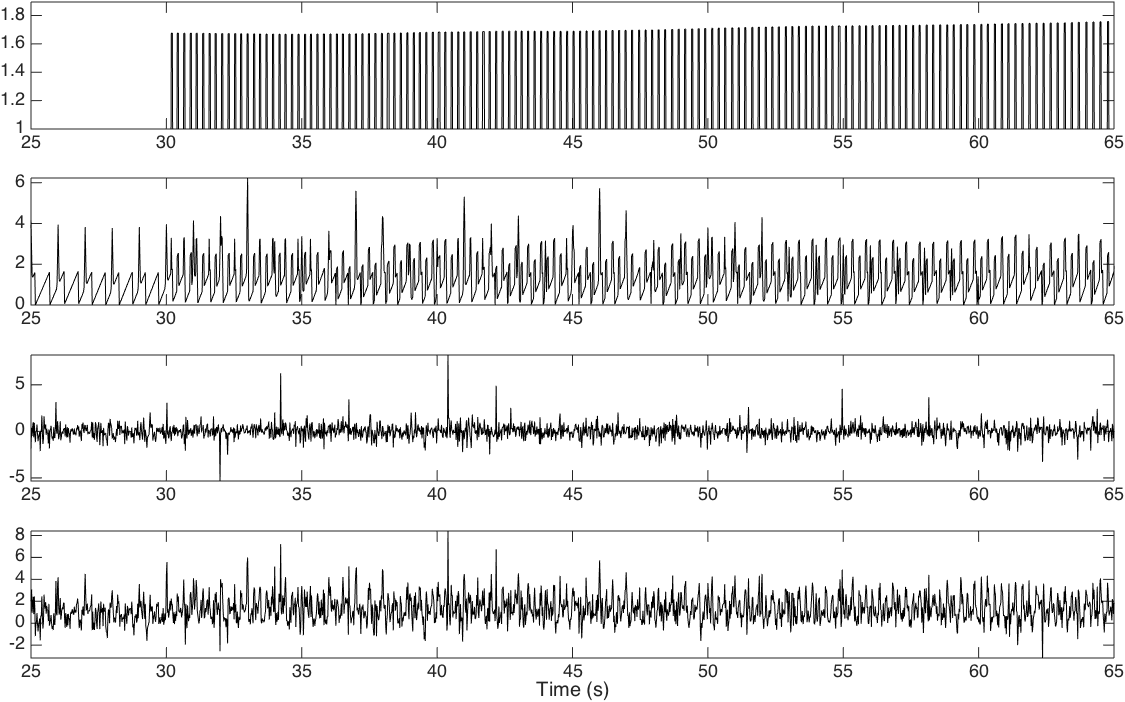}\\
\caption{Top: the simulated signal $f_3$; top middle: $f=f_1+f_2+f_3$; top bottom: $\xi(t)$, bottom: $Y=f(t)+\xi(t)$. To enhance visibility, we only show the signal over $[25,65]$.}
\label{fig:simulation}
\end{figure}

\begin{figure}
\centering
\includegraphics[width=\columnwidth]{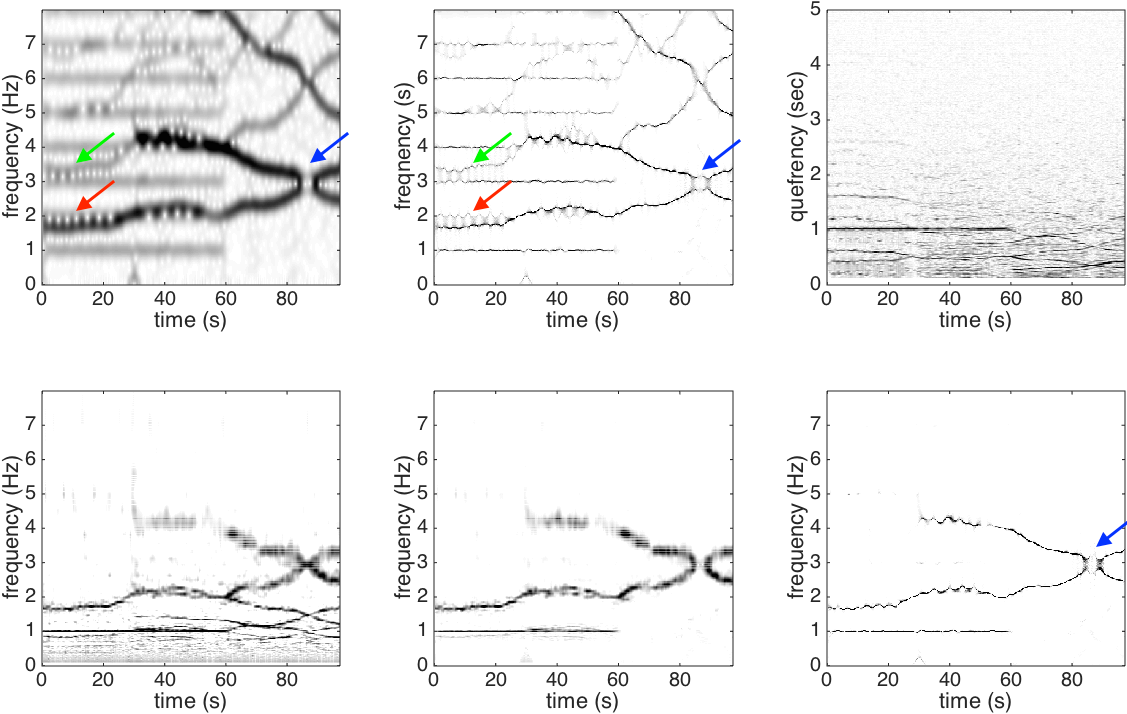}\\
\includegraphics[width=\columnwidth]{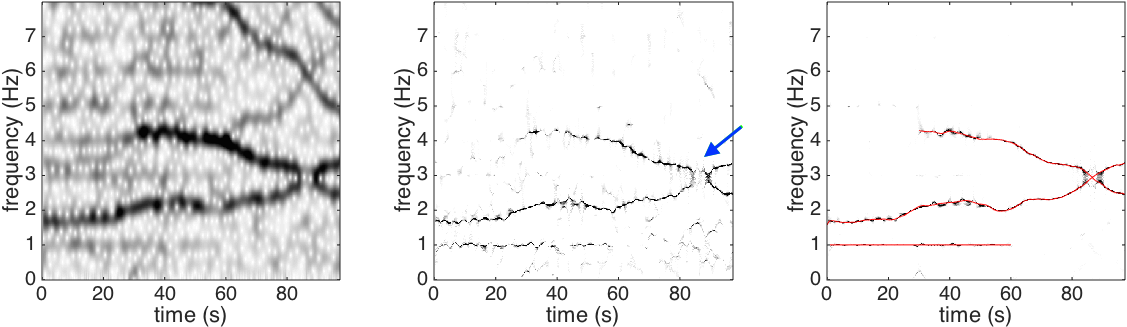}\\
\caption{Top left: the STFT of the clean signal $f$, $|V^{(h)}_f(t,\xi)|$; top middle: the SST of $f$, $|SV^{(h)}_f(t,\xi)|$. The colored arrows indicates three findings mentioned in the main context; top right: the STCT of $f$, $|C^{(h)}_f(t,\xi)|$; middle left: the inverse STCT of $f$, $|U^{(h,\gamma)}_f(t,\xi)|$; middle middle: de-shape STFT of $f$, $|W^{(h,\gamma)}_f(t,\xi)|$; middle right: the de-shape SST of $f$, $|SW^{(h,\gamma)}_f(t,\xi)|$;
bottom left: the STFT of the noisy signal $Y$, $|V^{(h)}_Y(t,\xi)|$;
bottom middle: the de-shape SST of the noisy signal $Y$, $|SW^{(h,\gamma)}_Y(t,\xi)|$; bottom right: the de-shape SST of the clean signal $f$, $|SW^{(h,\gamma)}_f(t,\xi)|$, superimposed with $\phi'_1(t)$, $\phi'_2(t)$ and $\phi'_3(t)$ in red.}
\label{fig:simulationResult}
\end{figure}

\subsection{ECG signal}

As discussed in Section \ref{Section:ECGdiscussion}, we need the modified wave-shape function to better capture the features in the ECG signal. We now show that by the de-shape SST, we could obtain a TF representation without the influence of the time-varying wave-shape function. For the ECG signal, we follow the standard median filter technique to remove the baseline wandering \cite{Clifford_Azuaje_McSharry:2006}, and the sliding window is chosen to be 0.1 second.

\subsubsection{Normal ECG signal}

The lead II ECG signal $f(t)$ is recorded from a normal subject for 85 seconds, which is sampled at 1000Hz. The average heart rate of the subject is about 70 times per minute; that is, the IF is about 1.2 Hz. By reading Figure \ref{fig:ECG}, it is clear that the ECG signal is oscillatory with a non-trivial wave-shape function, and the wave-shape function is time varying, as is discussed in Section \ref{Section:ECGdiscussion}.

Figure \ref{fig:ECG} shows the analysis result. We could see a dominant curve in the STCT, which shows the period information of the oscillation and it is about 0.9 second per wave. The iSTCT flips the period information back to frequency information, and hence we see a dominant curve around 1.2 Hz. Eventually, the multiples associated with the ECG wave-form are well eliminated by the de-shape STFT $SW^{(h)}_f$ and the TF representation is sharp.
Thus we conclude that the de-shape SST provides a more faithful TF representation and decouples the IF, AM and the wave-shape function information. Moreover, the dominant curve around 1.2 Hz fits the ground truth instantaneous heart rate (IHR), which indicates the potential of the de-shape SST in the ECG signal analysis.

\begin{figure}
\centering
\includegraphics[width=0.9\columnwidth]{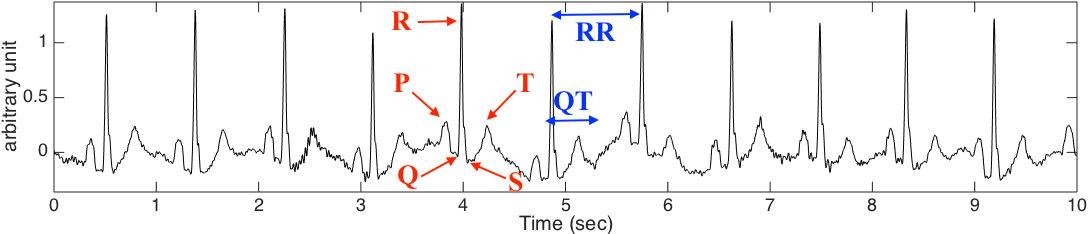}\\
\includegraphics[width=\columnwidth]{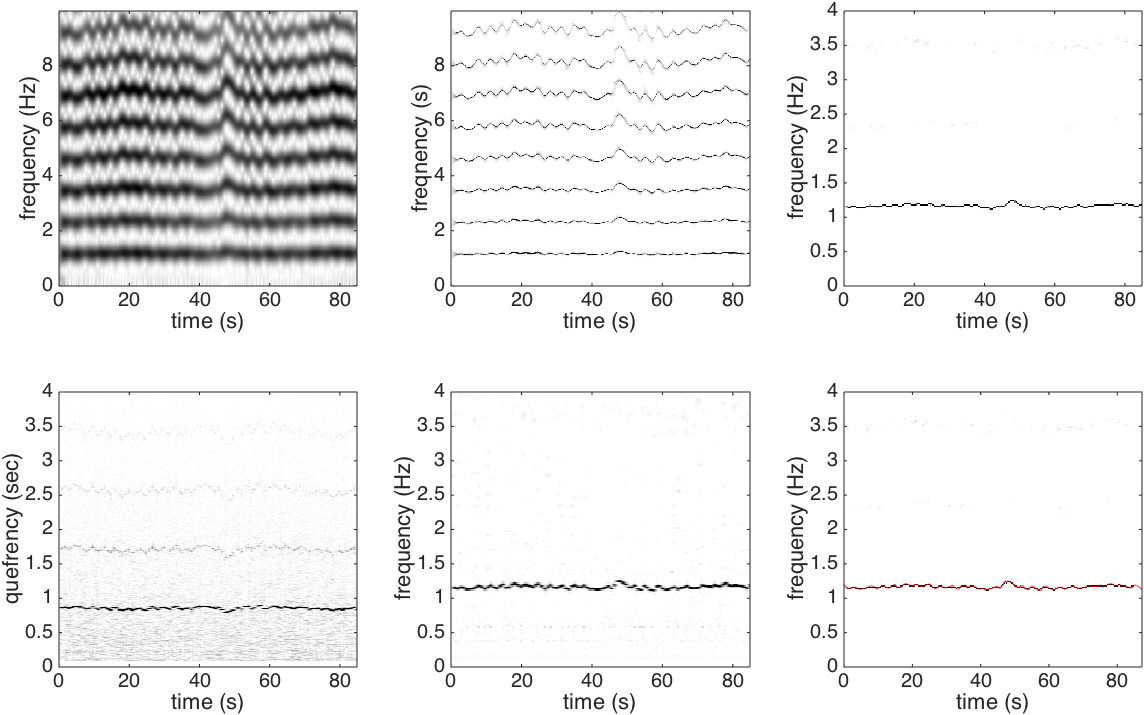}\\
\caption{Top: the ECG signal $f$ recorded for 85 seconds from a normal subject. 
{The basic landmarks of the ECG signal, P, Q, R, S, and T,  and the QT and RR intervals are shown. Note that the QT interval (respectively RR interval) is the length of the time interval between the start of the Q wave and the end of the T wave of one heart beat (respectively two R landmarks of two consecutive heart beats).}
To enhance the visibility, we only show the first {10} seconds. Second row, left panel: $|V^{(h)}_f(t,\xi)|$; middle panel: $|SV^{(h)}_f(t,\xi)|$; right panel: $|SW^{(h,\gamma)}_f(t,\xi)|$. Third row, left panel: $|C^{(h,\gamma)}_f(t,\xi)|$; middle panel: $|U^{(h,\gamma)}_f(t,\xi)|$; right panel: $|SW^{(h,\gamma)}_f(t,\xi)|$ superimposed with the instantaneous heart rate. To enhance the visibility, we show $|C^{(h,\gamma)}_f(t,\xi)|$, $|U^{(h,\gamma)}_f(t,\xi)|$ and $|SW^{(h,\gamma)}_f(t,\xi)|$ only up to 4Hz in the frequency axis.}
\label{fig:ECG}
\end{figure}

\subsubsection{Abdominal fetal ECG}

The fetal ECG could provide critical information for physicians to make clinical decision. While several methods are available to obtain the fetal ECG, the abdominal fetal ECG signal is probably the most convenient and cheap one. We take the abdominal fetal ECG signal with the annotation provided by a group of cardiologists from PhysioNet \cite{Goldberger_Amaral_Glass_Hausdorff_Ivanov_Mark_Mietus_Moody_Peng_Stanley:2000}. In this database, four electrodes are placed around the navel, a reference electrode is placed above the pubic symphysis and a common mode reference electrode is placed on the left leg, which leads to four channels of abdominal ECG signal. The signal is recorded at 1000Hz for 300 seconds. In this example we show the result with the third abdominal ECG signal. Note that while the signal is carefully collected, the signal to noise ratio of the abdominal fetal ECG is relatively low.
We refer the reader with interest to \url{https://www.physionet.org/physiobank/database/adfecgdb/} for more details.

The results of different TF analyses, including de-shape SST, are shown in Figure \ref{fig:fetalECG}. In the STFT and SST, we could see a light curve around 2Hz, which coincides with the fetal IHR we have interest {in}. However, this information is masked by the multiples of the maternal ECG signal. In the de-shape SST, the wave-shape influence is removed and the fetal IHR is better extracted, and the estimated fetal IHR coincides well with the annotation provided by the physician. The curve around 1.5Hz is the IHR associated with the maternal heart beats. The potential of applying de-shape SST to study fetal ECG will be explored and reported in {future works}.

\begin{figure}
\centering
\includegraphics[width=0.9\columnwidth]{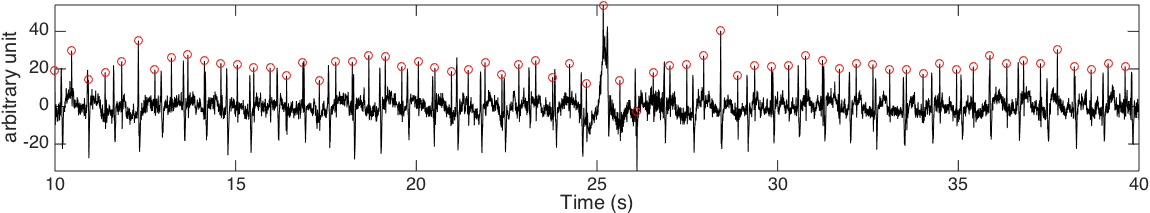}\\
\includegraphics[width=\columnwidth]{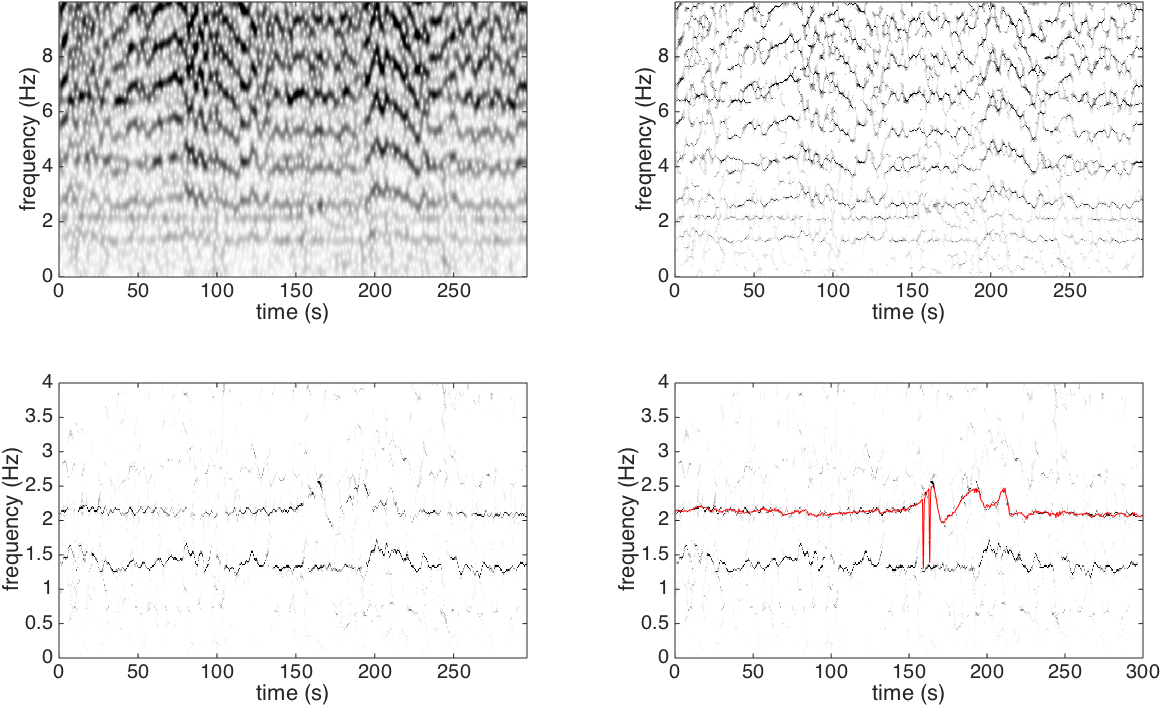}\\
\caption{Top: the abdominal fetal ECG signal $f$ recorded for 300 seconds from a recorded in labor, between 38 to 41 weeks gestation. To enhance the visibility, we only show the signal of 30 seconds long. The annotation of the fetal heart beats are marked in red. Middle left panel: $|V^{(h)}_f(t,\xi)|$; middle right panel: $|SV^{(h)}_f(t,\xi)|$; bottom left panel: $|SW^{(h,\gamma)}_f(t,\xi)|$; bottom right panel: $|SW^{(h,\gamma)}_f(t,\xi)|$ superimposed with the fetal instantaneous heart rate (IHR) determined by a group of cardiologists. {To enhance the visibility, we show $|SW^{(h,\gamma)}_f(t,\xi)|$} only up to 4Hz in the frequency axis. The curve around 1.5Hz is the IHR associated with the maternal heart beats.}
\label{fig:fetalECG}
\end{figure}

\subsection{PPG signal}

Pulse waves represent the hemodynamics, and it can be monitored via plethysmographic technologies in different regions of the body. These technologies often use photo sensors usually placed on the earlobe or finger, by illuminating the tissue and simultaneously measuring the transmitted or the reflected light using a specific wavelength. More recently, noncontact techniques such as video signals (e.g., PhysioCam \cite{Davila:2012Thesis}) have been used to monitor the pulse wave from the face at a distance. Collectively, the application of photosensors to monitor pulse wave are known as photoplethysmography (PPG). See, for example, \cite{Davila:2012Thesis} for a review of the PPG technique. In addition to acquire the hemodynamical information, it also contains the respiration information. Indeed, mechanically, inspiration leads to a reduction in tissue blood volume, which leads to a lower amplitude of the PPG signal. Since none of the pulse wave or the respiration-induce variation oscillates like a sinusoidal wave, the signal should be modeled by the ANH model.

\subsubsection{PPG signal with respiration}

Figure \ref{fig:PPG2comp} shows a PPG signal from the Capnobase dataset\footnote{\url{http://www.capnobase.org}} and its analysis result with the de-shape STFT.  The PPG signal, the capnogram signal and the ECG signal are simulateneously recorded from a subject without any motion at 300 Hz for 480 seconds. By a visual inspection, it is clear that there are two oscillations inside the PPG signal -- the faster (respectively slower) oscillations are associated with the heartbeat (respectively respiration). Clearly, the non-sinusoidal oscillatory waves complicate STFT $V^{(h)}_f$ and SST $SV^{(h)}_f$, while these multiples are elliminated in the de-shape STFT and de-shape SST. Also, we could see that the estimated IHR and instantaneous respiratory rate (IRR) estimated from the PPG signal fit the IHR and IRR derived directly from the ECG signal and the capnogram signal. This indicates the potential of simultaneously obtaining IHR and IRR from the PPG signal.

We mention that when $\gamma$ is chosen to be $2$, the heartbeat component is missed (the result is not shown). This coincides with the general knowledge that $\gamma=2$ is not a good periodicity detector when there exists multiple periodicity in the signal.

\begin{figure}
\centering
\includegraphics[width=0.9\columnwidth]{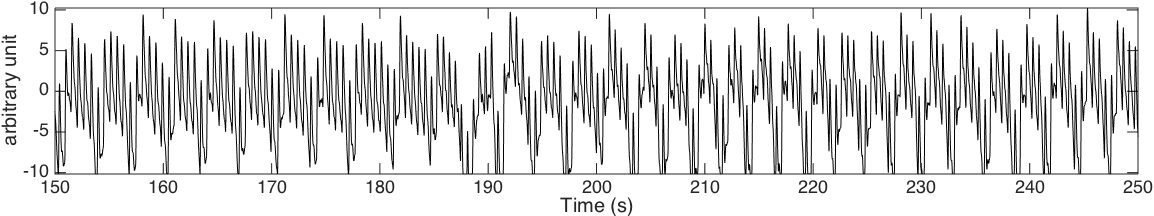}\\
\includegraphics[width=\columnwidth]{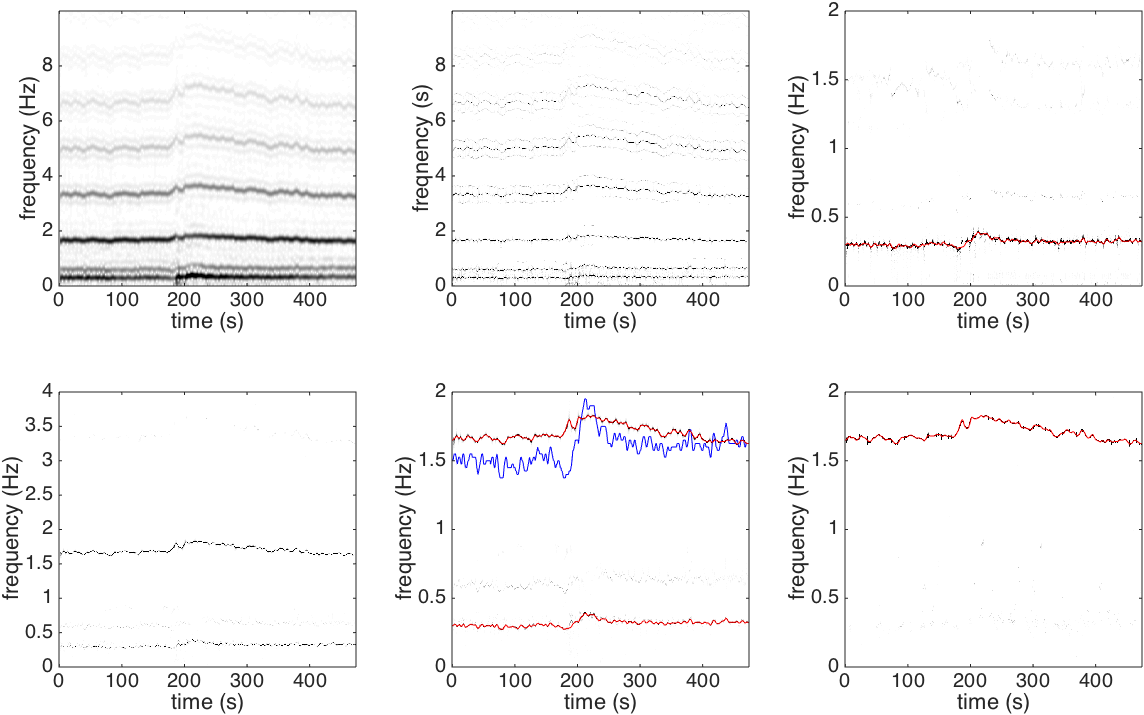}\\
\caption{Top row: the photoplethysmography signal $f$ recorded from a normal subject for 480 seconds. To enhance the visibility, we only show the segment between the 150-th second and 250-th second.
Second row: left: $|V^{(h)}_f(t,\xi)|$; middle: $|SV^{(h)}_f(t,\xi)|$; right: the de-shape SST of the capnogram signal superimposed with the instantaneous respiratory rate (IRR) in red, which is estimated from the PPG signal. Here, only up to 2Hz in the frequency axis is shown to enhance the visibility.
Bottom row: left: $|SW^{(h,\gamma)}_f(t,\xi)|$. To enhance the visibility, we show $|SW^{(h,\gamma)}_f(t,\xi)|$ only up to 4Hz in the frequency axis. Clearly, the multiples of each component are eliminated; middle: $|SW^{(h,\gamma)}_f(t,\xi)|$ superimposed with the estimated instantaneous heart rate (IHR) and IRR. The red curve around 0.3Hz is associated with the IRR and the red curve around 1.6 Hz is associated with the IHR. {The blue curve is five times the IRR curve}, which indicates that the component with the higher frequency is not a multiple of the component with lower frequency; right: the de-shape SST of the electrocardiographic signal superimposed with the IHR in red, which is estimated from the PPG signal. Here, only up to 2Hz in the frequency axis is shown to enhance the visibility.
}
\label{fig:PPG2comp}
\end{figure}

\subsubsection{PPG signal with motion}

Figure \ref{fig:MotionPPG} shows the result of one PPG sample used in the training dataset of ICASSP 2015 signal processing cup\footnote{\url{http://www.zhilinzhang.com/spcup2015/}}. The sample is a 5-minute PPG signal sampled at 125Hz when the subject runs with changing speeds, scheduled as: rest (30s) $\to$ 8km/h (1min) $\to$ 15km/h (1min) $\to$ 8km/h (1min) $\to$ 15km/h (1min) $\to$ rest (30s). {From the recorded signal} it is not easy to see how the motion and heartbeat vary. The heartbeat component starts from around 1.7 Hz at 50 seconds, to 2.2 Hz from 150 to 170 seconds, when the subject has just finished the 15km/h running section. Then, the heartbeat goes lower in the 8km/h section and higher in the final 15km/h section.

Note that the IF of the heartbeats (marked by the red arrow) lies between two other components, supposedly contributed by motion. The higher frequency component associated with motion has IF about twice the IF of the lower one.
We conjecture that the higher one is contributed by the movement of body while the lower is contributed by the movement of arms and legs. The body finishes a period by just one step, while the leg finished a period by two steps (one leg needs to finish a forward and backward movement). This is very similar to the ``octave'' detection problem in music signal processing (see Section \ref{Section:Numerical:ViolinSonata}) and it is quite natural to catch two {components} here as they are indeed (at least) two different oscillatory signals, where the one has IF almost twice from the other one. An extensive study of this signal is needed to fully understand how the body motion influences the physiological signal and will be reported in {a} future work.

\begin{figure}
\centering
\includegraphics[width=0.9\columnwidth]{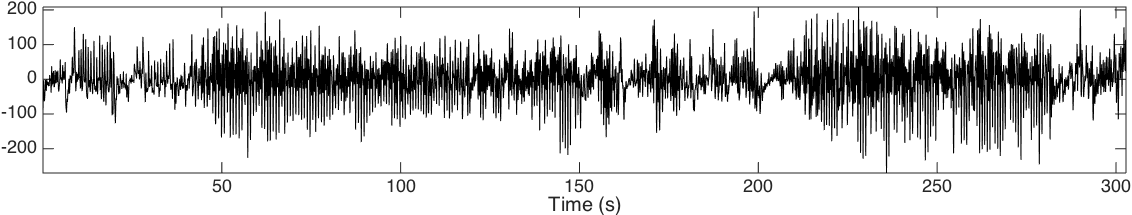}\\
\includegraphics[width=0.9\columnwidth]{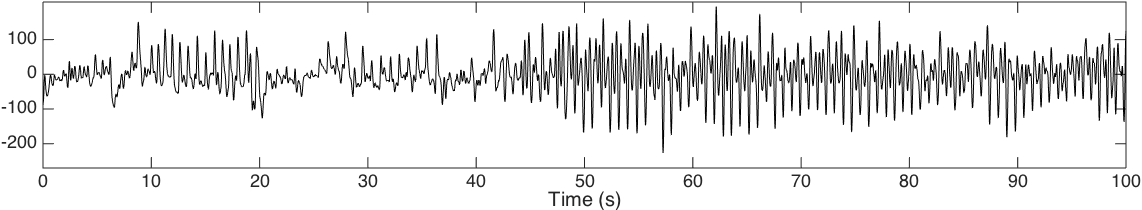}\\
\includegraphics[width=\columnwidth]{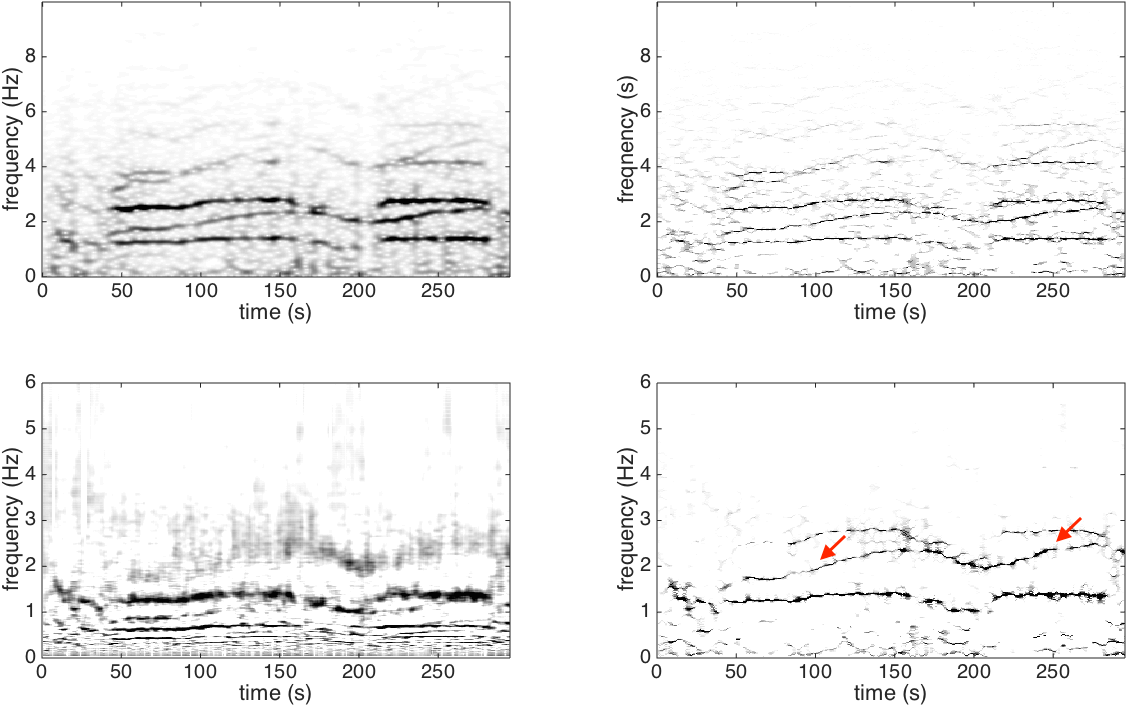}\\
\caption{Top row: the photoplethysmography signal $f$ recorded from a normal subject, who is scheduled to run at different speeds. Second row: the first 100 seconds photoplethysmography signal $f$. It is clear that the signal is composed of several components with complicated dynamics. Third row: $|V^{(h)}_f(t,\xi)|$ is shown on the left and $|SV^{(h)}_f(t,\xi)|$ is shown on the right. Bottom row: $|U^{(h,\gamma)}_f(t,\xi)|$ is shown on the left and $|SW^{(h,\gamma)}_f(t,\xi)|$ is shown on the right. The heartbeat component is marked by red arrows. To enhance the visibility, we shown $|U^{(h,\gamma)}_f(t,\xi)|$ and $|SW^{(h,\gamma)}_f(t,\xi)|$ only up to 6Hz in the frequency axis.}
\label{fig:MotionPPG}
\end{figure}

\subsubsection{Non-contact PPG signal}

Figure \ref{fig:NonContactPPG3} shows the non-contact PPG signal recorded from a normal subject when he is walking on the treadmill at 0.6 Hz. The sampling rate is 100Hz. The non-contact PPG is collected with the PhysioCam technology, and we refer the reader with interest to \cite{Davila:2012Thesis} for details. The ECG signal is simultaneously recorded from the subject at the sampling rate 1000Hz, so we have the true IHR for comparison. Clearly the signal is noisy and contains the walking rhythm; that is, the non-contact PPG signal is composed of two oscillatory signals -- one is associated with the hemodynamics and one is associated with the walking rhythm. Despite the heavy corruption terms in the low frequency, which comes from the ``trend'' inside the signal, we could see that the de-shape STFT successfully extracts the walking rhythm around 0.6 Hz and the IF around 2Hz, which coincides with the IHR determined from the ECG signal. A systematic study of this kind of signal, including the associated de-trend technique, is critical for practical applications and will be reported in {a} future work.

\begin{figure}
\centering
\includegraphics[width=0.9\columnwidth]{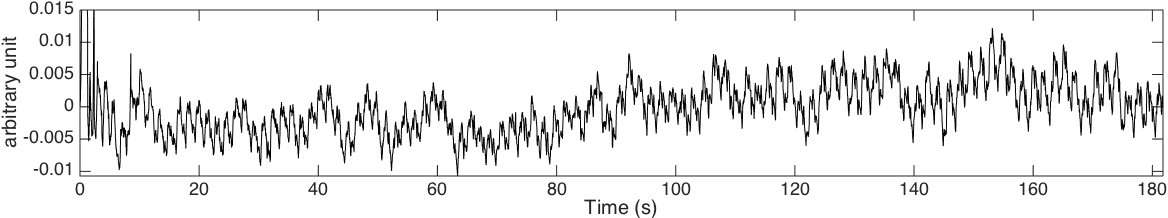}\\
\includegraphics[width=\columnwidth]{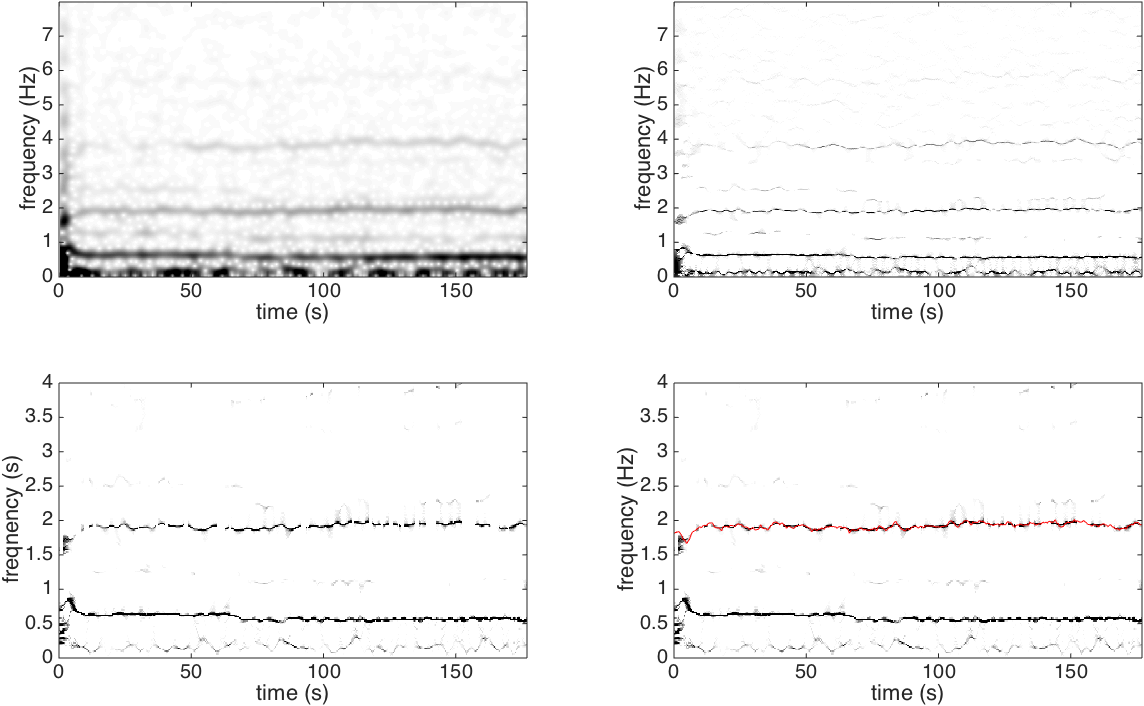}\\
\caption{Top: the non-contact PPG signal $f$ recorded from a normal subject walking on the treadmill at a fixed speed. Middle left: $|V^{(h)}_f(t,\xi)|$; middle right: $|SV^{(h,\gamma)}_f(t,\xi)|$; bottom left: $|SW^{(h,\gamma)}_f(t,\xi)|$; bottom right: $|SW^{(h,\gamma)}_f(t,\xi)|$ superimposed with the instantaneous heart rate. To enhance the visibility, we show $|U^{(h,\gamma)}_f(t,\xi)|$ and $|SW^{(h,\gamma)}_f(t,\xi)|$ only up to 5Hz in the frequency axis.}
\label{fig:NonContactPPG3}
\end{figure}

\subsection{Music and bioacoustic sounds}

The idea of de-shape STFT has been applied in the task called {\em automatic music transcription} (AMT) \cite{peeters2006music,emiya2007parametric,su2015combining,su2016exploiting}, and this approach has been shown competitive in comparison to the state-of-the-art AMT methods in the MIREX-MF0 challenge, an annual competition in the field of music information retrieval (MIR).\footnote{\url{http://www.music-ir.org/mirex/wiki/MIREX_HOME}} AMT is still a technology under active development by now, where one big challenge is how to correctly identify the pitches of the notes played at the same time. In this subsection, we show the potential of applying the de-shape SST to the AMT problem.

\subsubsection{Violin sonata}\label{Section:Numerical:ViolinSonata}

Figure \ref{fig:Mozart} shows a 6-second segment from {\em Mozart's Violin Sonata in E minor, K.304}, where the annotations are provided by musicians. The sampling rate of the signal is 44.1 kHz. This segment contains the sounds of two instruments, violin (melody) and piano (accompaniment). The number of concurrent pitches of this signal at every timestamp varies from 1 to 4, where the violin is played in single pitch and the piano in multiple pitches. The patterns of the two instruments are different, which can be seen from reading the TF representations of STFT and SST. The violin sound exhibits a clear {\em vibrato} (i.e., periodic variation of the IF) together with a strong and frequency-dependent AM effect. See the red arrows in Figure \ref{fig:Mozart} for an example. It is to say that the spectral envelope of the sound varies strongly during one cycle of vibrato \cite{fletcher2010physics}. On the other hand, piano notes have stable IF's, strong attack and long decay of AMs, and, as mentioned in Section \ref{Section:StiffStringModel}, the {\em inharmonicity} makes the high-order harmonic peaks deviate from the integral multiple of the fundamental frequency $f_1$.
The notes of this segment are with pitches ranging from E2 (the fundamental frequency is $82.4$ Hz) to G5 (the fundamental frequency is $784.0$ Hz), and they are shown in the red lines in Figure \ref{fig:Mozart}. The resolution of the labels formatted in Musical Instrument Digital Interface (MIDI) is one semitone.

We indicate one specific tricky problem commonly encountered in this kind of signal. Take the signal from 0.76 to 1.14 seconds as an example. The highest note of piano, B3 (the fundamental frequency is $246.9$ Hz), is just one half of the violin note, B4 (the fundamental frequency is $493.88$ Hz). It is to say, all multiples of violin note are (nearly) overlapped with the piano note, thereby violates the frequency separation condition in Definition \ref{DefBClassMultipleTimeSeries}. The problem of detecting these ``overlaps'' is commonly understood as the {\em octave detection} \cite{su2014resolving}. A systematic study of this specific problem is out of the scope of this paper, and it will be discussed in {a} future work.

From the result of the de-shape SST, we see that the multiples are distinguished from the IF's and are eliminated. All the notes of both violin and piano are well captured. For violin we can even obtain the vibrato rate and vibrato depth of the notes, which are not recorded in the MIDI ground truth. We could also see that the octave problem mentioned above is well resolved.
However, we can still see some false detections in the ``inner part'' of the music. For example, there is a component appearing at around 330 Hz from 1.46 to 1.8 seconds, but there is no note played here. To explain this, notice that the fake component has frequency twice of a piano note while at the same time one half of the violin note. This causes an issue called the {\em stacked harmonics} ambiguity, which is caused by double or even more octave ambiguities. This open problem has also been raised in \cite{su2015combining,su2016exploiting}. Again, a systematic study of this specific problem is out of the scope of this paper, and it will be discussed in future {works}.

\begin{figure}
\centering
\includegraphics[width=0.9\columnwidth]{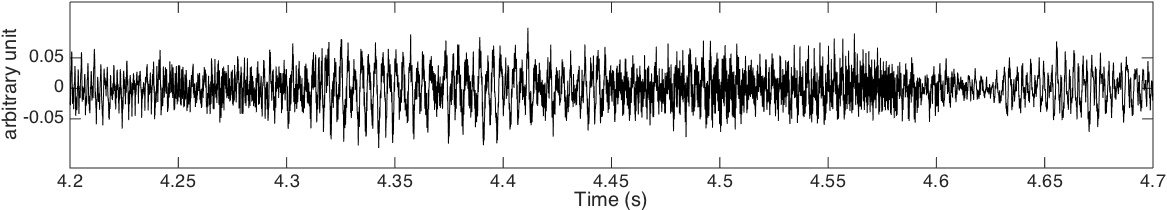}\\
\includegraphics[width=\columnwidth]{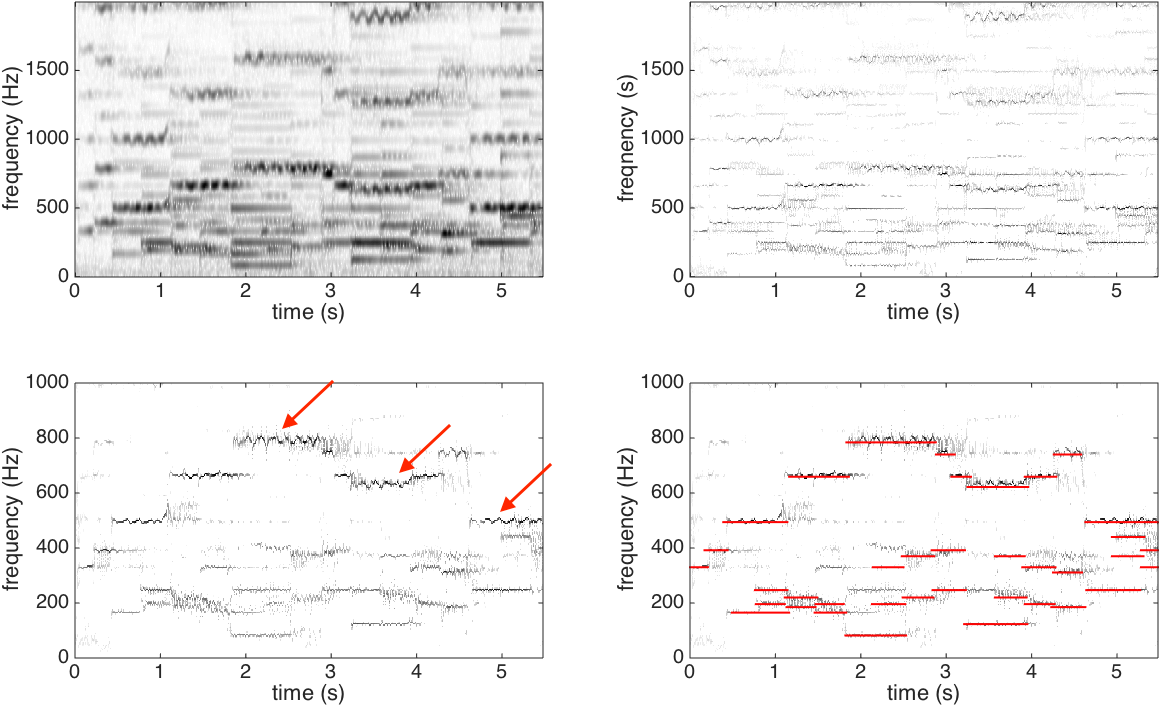}\\
\caption{Top: the Mozart violin sonata signal $f$, which is zoomed in to the period from 4.2 second to 4.7 second to enhance the visualization. Since there are several oscillatory components with complicated wave-shape function, it is not clear what information is hidden inside, even it is hard to identify oscillations. Middle left: $|V^{(h)}_f(t,\xi)|$; middle right: $|SV^{(h)}_f(t,\xi)|$; bottom left: $|SW^{(h,\gamma)}_f(t,\xi)|$, where the red arrows indicate the violin sound with vibrato; bottom right: $|SW^{(h,\gamma)}_f(t,\xi)|$ with the annotations superimposed in red. To enhance the visibility, we show $|SW^{(h,\gamma)}_f(t,\xi)|$ only up to 1000Hz in the frequency axis.}
\label{fig:Mozart}
\end{figure}

\subsubsection{Choir}

Figure \ref{fig:Choir} shows the analysis result of a recorded choir music with the annotation provided by experts. Similar to the above example, the choir music also has multiple components and usually in consonant intervals. Moreover, in the choir music, every perceived individual note is typically sung {\em in unison} by more than one performer. However, since there is always some small and independent variation of the IF among performers, the resulting sound would have wider mainlobe in the STFT than the other music sung by a single performer. Such a phenomenon, called {\em pitch scattering}  \cite{ternstrom1993perceptual}, usually appears in choir and symphony music, as a challenge in correctly estimate the pitch of every note.

This example is a 3-part choir (first soprano, second soprano and alto), with pitches ranging from B3 (the fundamental frequency is $246.9$ Hz) to E5 (the fundamental frequency is $659.3$ Hz). We could see in Figure \ref{fig:Choir} that the pitch scattering issue can be partially addressed by the SST. However, we can still find some intertwined components, like the component at around 920 Hz from 2.2 to 3.5 seconds, which might be contributed by more than two notes with different vibrato behaviors. By using the de-shaped SST, this wide-spread terms are correctly identified as the multiples and removed. All labeled notes are captured and there are few false alarm terms.

Although we have shown the usefulness of de-shape SST in both physiological and musical signals, we need to emphasize some differences between them. In comparison to physiological data, musical signals can have a much larger number of components (e.g., more than 10 components in a symphony), which complicate the patterns of the multiples. Besides, most of the musical works are composed following the {\em theory of harmony}, which holds a principle that a sound is {\em consonant} when the ratio of the IF's are in simple ratios. This implies that the spectra of the components are highly overlapped. Moreover, the octave is very often seen in music composition. Therefore, musical signals usually violate Definition \ref{DefBClassMultipleTimeSeries} and make the problem of AMT ill-posed. To reduce the ambiguities of octaves and other consonant intervals, we may impose more strict constraints when we analyze the signal, like the constraint of harmonicity discussed in Section \ref{Section:deShape}.
For more information of this approach in AMT, readers could refer to \cite{su2015combining,su2016exploiting}.

\begin{figure}
\centering
\includegraphics[width=0.9\columnwidth]{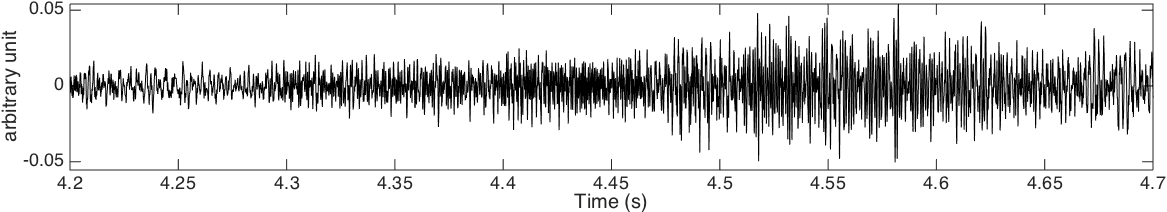}\\
\includegraphics[width=\columnwidth]{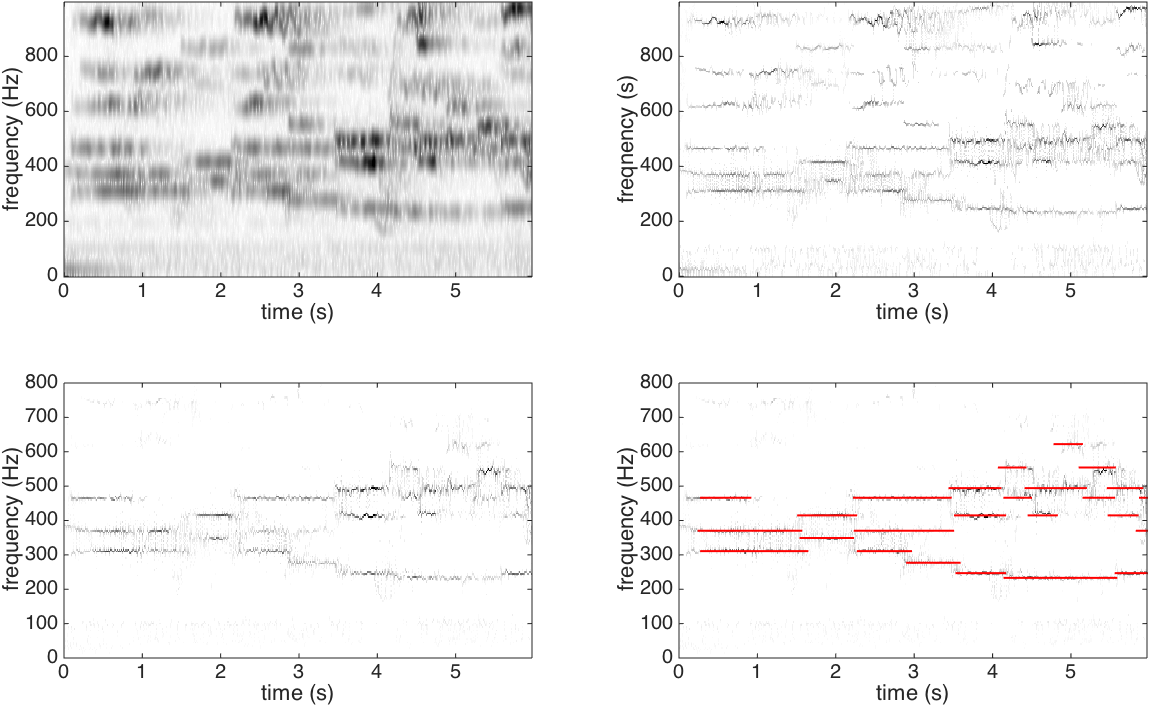}\\
\caption{Top: the choir signal $f$, which is zoomed in to the period from 4.2 second to 4.7 second to enhance the visualization. Middle left: $|V^{(h)}_f(t,\xi)|$; middle right: $|SV^{(h)}_f(t,\xi)|$; bottom left: $|SW^{(h,\gamma)}_f(t,\xi)|$; bottom right: $|SW^{(h,\gamma)}_f(t,\xi)|$ with the annotations superimposed in red. To enhance the visibility, we show $|SW^{(h,\gamma)}_f(t,\xi)|$ only up to 800Hz in the frequency axis.}
\label{fig:Choir}
\end{figure}

\subsubsection{Wolf howling}

An important topic in conservation biology is monitoring the number of wolves in the field \cite{Passilongo_Mattioli_Bassi_Szabo_Apollonio:2015}. Analyzing the wolf howling signal is an efficient approach to evaluate how many wolves are there in the field under survey. In this final example we show the analysis result with a field signal recorded The sound is downloaded from Wolf Park website\footnote{\url{http://www.wolfpark.org/Images/Resources/Howls/Chorus_1.wav}}. The signal is sampled at 11.025 kHz for 40 seconds. In Figure \ref{fig:Wolf} we could directly see that while TF representations provided by STFT and SST are complicated by the multiples caused by the non-trivial wave-shape, the TF representation provided by de-shape SST contains only the fundamental components. By reading the de-shape SST, we could suggest that there are at least three wolves in the field, since during the recording period, there are at most three dominant curves at a fixed time. However, the ground truth for this database is not provided, and identifying each single wolf needs field experts, so this conclusion is not confirmed, and a further collaborative exploration with biologists is needed. To sum up, this suggests that the de-shape SST {has} potential to provide {an} audio visualization for this kind of application.

\begin{figure}
\centering
\includegraphics[width=0.9\columnwidth]{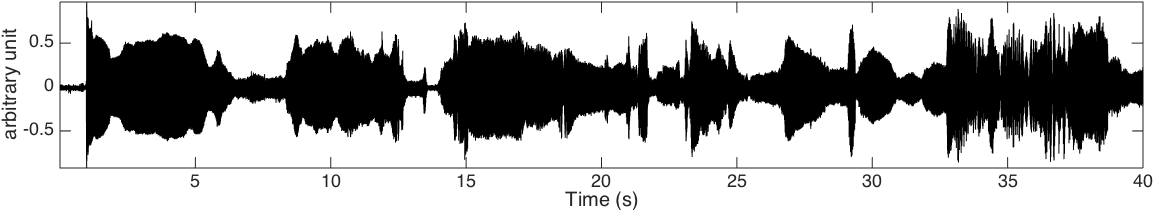}\\
\includegraphics[width=\columnwidth]{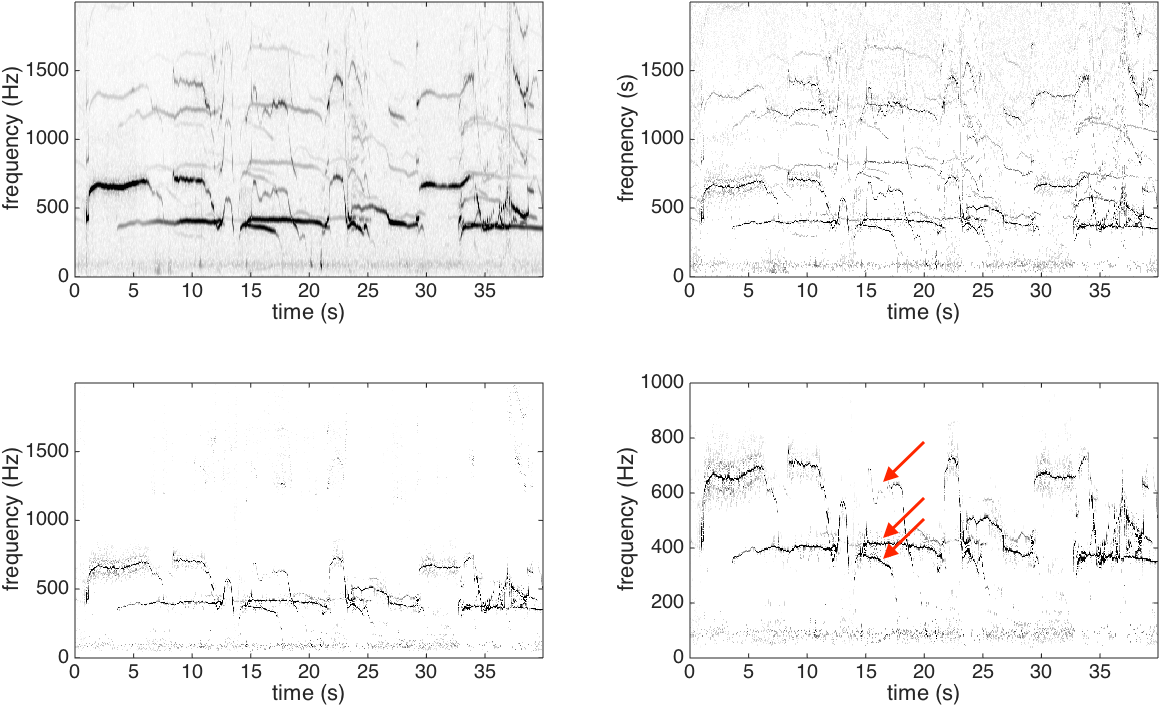}\\
\caption{Top: the wolf howling signal $f$. Since each component inside the signal oscillates at the frequency at least 400Hz, the oscillation could not be visualized except the overall amplitude modulation. Middle left: $|V^{(h)}_f(t,\xi)|$; middle right: $|SV^{(h)}_f(t,\xi)|$; bottom left: $|SW^{(h,\gamma)}_f(t,\xi)|$; bottom right: the zoom-in $|SW^{(h,\gamma)}_f(t,\xi)|$ only up to 1000Hz in the frequency axis. The three red arrows at around 18 seconds indicate that there are at least three wolves.}
\label{fig:Wolf}
\end{figure}

\section{Numerical issues}\label{Section:NumericalIssues}

While the numerical implementation of STCT, iSTCT, de-shape STFT and de-shape SST are straightforward, we should pay an attention to evaluate {iSTCT}. In particular, the map from $C_f^{(h,\gamma)}(t,\eta)$ to $U_f^{(h,\gamma)}(t,\xi)$ depends on the inverse map $\mathcal{I}$, which is numerically unstable. To stabilize it, there are two critical process: (1) long-pass lifter; (2) discretize $\mathcal{I}$ by a suitable weighting, for example, by the Jacobian of $\mathcal{I}$, so that the iSTCT is defined on the uniform frequency grid. Let the sampling frequency of the signal $f(t)$ be $\zeta>0$ and we sample $N\in\NN$ points from $f$. Then, for the $N$-point STFT, the frequency axis is discretized into $\eta_n=n\zeta/N$, where $n = 0,1,\ldots,N-1$, $\eta_n$ is the $n$-th index in the frequency axis, and the frequency resolution is $\Delta\zeta:=\zeta/N$. Similarly, the quefrency axis in STCT is discretized into $q_n=n/\zeta$, where $n=0,1,\ldots, N-1$, $q_n$ is the $n$-th index in the quefrency axis and the quefrency resolution is $\Delta q:=1/\zeta$. We discretize the frequency axis of iSTCT in the same way as that of STFT; that is, the frequency axis of iSTCT is discretized into $\eta_n$, where $n = 0,1,\ldots,N-1$, and the frequency resolution is $\Delta\zeta$.

To implement the long-pass lifter mentioned in Section \ref{Section:deShape}, we consider a simple but effective hard threshold approach by choosing a cutoff quefrency $q_c$, where $c\in\NN$ is chosen by the user; that is, all entries with index less than $c$ are set to zero and the other entries are not changed. While it depends on the characteristic of the signal, in practice we suggest to choose the cutoff quefrency in the range of $10\leq c\leq 20$ and numerically it performs well.

One main issue of the mapping $\mathcal{I}$ is that it maps uniform grid to a non-uniform grid and hence there are insufficient low-quefrency elements in $C_f^{(h,\gamma)}(t,\mathcal{I}\eta)$, which could be directed implemented by inverting the quefrency axis index of $C_f^{(h,\gamma)}(t,\eta)$, to represent the high-frequency content in $U_f^{(h,\gamma)}(t,\xi)$. For example, we have only about $\lfloor 0.1/\Delta q\rfloor=\lfloor 0.1\zeta\rfloor$ entries on the quefrency interval $[0.1,0.2]$ in $C_f^{(h,\gamma)}(t,\eta)$, while we have $\lfloor\frac{5N}{\zeta}\rfloor$ entries on the frequency interval $[5,10]$ in $U_f^{(h,\gamma)}(t,\xi)$. On the other hand, there are too many high-quefrency elements to represent the low-frequency content. Therefore, we suggest to do interpolation over the {quefrency} axis in the STCT to alleviate this issue. Denote the finer grid in the quefrency axis as $\tilde{q}_j$, $j=1,\ldots,M$ and $M>N$. Further, if we want to preserve the integrability of the function after the mapping $\mathcal{I}$, we should weight the entries by the Jacobian of $\mathcal{I}$. To sum up, after obtaining $C_f^{(h,\gamma)}$ with a finer resolution in the {quefrency} axis, the elements in $C_f^{(h,\gamma)}$ are weighted and summed up to the closest frequency bin corresponding to it; that is, we implement iSTCT by
\begin{equation}
U_f^{(h,\gamma)}(t,\eta_n)=\sum_{j \in \mathfrak{P}(\eta_n)}C^{(h,\gamma)}_f\left(t, \tilde{q}_j\right)\tilde{q}_j,
\end{equation}
where $\mathfrak{P}(\eta_n):=\{j: 1/(\eta_n+0.5\Delta\zeta)<\tilde{q}_j\leq 1/(\eta_n-0.5\Delta\zeta)\}$ for each $n=0,1,\ldots,N-1$.

\section{Conclusions}\label{Section:Conclusions}

To handle oscillatory signals in the real world, we provide a model capturing oscillatory features, including IF, AM and time-varying wave-shape function. To alleviate the limitation of TF analysis caused by the existence of non-trivial wave-shape function, we consider the idea of cepstrum and introduce the STCT, de-shape STFT and de-shape SST. A theoretical proof is provided to study how STCT works. When {the STCT and its theoretical proof is} combined with the previous study of SST, we have a theoretical understanding of the efficiency of de-shape SST. In addition to the simulated signal, several real datasets are studied and confirm the potential of the proposed algorithms. The proposed algorithm could be easily combined with several other algorithms to study a given database. For example, we could apply ConceFT \cite{Daubechies_Wang_Wu:2016} to stabilize the influence of the noise, the RM technique \cite{Auger_Flandrin:1995} could be applied to further sharpen the TF representation if causality is not an issue, we could apply the adaptive local iterative filtering \cite{Cicone_Liu_Zhou:2014} to reconstruct each oscillatory component, we could consider the template fitting scheme by designing a good dictionary based on the available information from the de-shape SST \cite{Hou_Shi:2013a}, to name but a few.
{However,} there are several problems left unanswered in this paper. We summarize them below.

To facilitate the discussion, we could call the sequence $\{B_{k,\ell}(t)\}_{\ell=-N_k}^{N_k}$ in (\ref{MainTheorem:STFTExpansion}) the {\em spectral envelope} of the $k$-th ANH model. The assumption in Theorem \ref{Theorem:TimeVaryingCepstrum} says that the spectral envelope of an ANH function should be ``far away'' from $0$.
In the ideal case, we would expect that the spectral envelope is ``slow-varying'' in comparison to the harmonic series in the spectrum, so that the cepstrum can well extract the periodicity-related elements from the filter-like elements. This ideal case is satisfied by the assumption in Theorem \ref{Theorem:TimeVaryingCepstrum} in the sense that the IP information is recovered in the STCT. However, this is not always true for real-world signals; in some challenging cases we could see non-trivial patterns in the spectral envelope, which breaks the assumptions in Theorem \ref{Theorem:TimeVaryingCepstrum}. This contaminates the information associated with the IP information we have interest {in}, and hence causes fake detection of periodicity.
Here we discuss two real scenarios when the spectral envelope has a non-trivial pattern.

The first scenario could be observed in the ECG signal with the fundamental frequency around $\xi_1>0$. For example, in some cases, we could find relatively stronger peaks around $3\xi_1$, $6\xi_1$ and $9\xi_1$ in comparison to other peaks in the spectrum. Therefore, in the cepstrum we can find not only a prominent peak at $q_1=1/\xi_1$ but also a small bump around $q_1/3$. To take a closer look at this phenomenon, we recall that it has been well known that the 12-lead ECG signals, denoted as $E(t)\in \RR^{12}$, are the projection of the representative dipole current, denoted as $d(t)\in\RR^3$, where $t\in\RR$, of the electrophysiological cardiac activity on different directions. Physiologically, for a normal subject $d(t)$ is oscillatory with the period about $1$ second. If we could record $d(t)$, the recorded signal is called the vectocardiogram signal. For the $\ell$-th ECG channel, where $\ell=1,\ldots,12$, there is an associated projection direction $v_\ell\in \RR^3$. The $\ell$-th ECG channel is thus the projection of $d(t)$ on $v_\ell$; that is, $E_\ell(t)=v_\ell^Td(t)$ or $E(t)=v^Td(t)$, where $v=[v_1 v_2 \ldots v_{12}]\in \RR^{3\times 12}$. In general, $v$ changes according to time due to the cardiac axis deviation caused by the respiratory activity and other physical movements. To simplify the discussion, we ignore this facts. Thus, since $d(t)$ is oscillatory, it is clear that $E_\ell(t)$ is also oscillatory. In some cases, this complicated procedure leads to an {oscillation} in the spectral envelop, and hence the first scenario.

{The second example is the sound of clarinets.}
Clarinet is one kind of woodwind instrument which makes air resonating in a cylindrical tube with one ended closed. Because of such a physical structure, the even-numbered harmonics including $2\xi_1$ and $4\xi_1$ are highly suppressed\footnote{The absence of even harmonics is (part of) what is responsible for the ``warm'' or ``dark'' sound of a clarinet compared to the ``bright'' sound of a saxophone.}, which breaks the assumption of Corollary \ref{Corollary:TimeVaryingCepstrum}{, and is discussed after Corollary \ref{Corollary:TimeVaryingCepstrum}}. {But, in many real cases,} the cepstrum of the clarinet note do have a peak at {$q_1$} and its multiples {because the even-number harmonics are not totally eliminated}. 
In several real examples, including the {Clarinet and ECG} examples, the unwanted terms in the above situations can be simply eliminated by hard-thresholding{; however,} it is not that easy {to achieve this naive idea}, and a systematic study of this challenge is needed, where we might incorporate more background knowledge into the analysis.

Next, we discuss another scenario when the proposed method works on multi-component signals. Consider {an {\em octave} signal mentioned in Section \ref{Section:Numerical:ViolinSonata}, where one of the two components has the fundamental frequency $\xi_{1,1}$ and the one of the other is higher than it by one octave, thereby with the fundamental frequency $2\xi_{1,1}$. Suppose the phases of these two components match in a way so that} the spectrum of the multi-component signal has stronger peaks at even-order harmonics, especially {$\xi_{1,2}=2\xi_{1,1}$, $\xi_{1,4}=4\xi_{1,1}$, $\xi_{1,6}=6\xi_{1,1}$, $\xi_{1,8}=8\xi_{1,1}$, etc. In this special case,} the spectral envelope oscillates and we may recall the IF's of both components in the de-shape SST. { Note that this special case contradicts the assumption of the ANH model so that we could not model it as a composition of two ANH functions, and the proposed method may or may not work. In the real-world music, the condition of phase matching does not always happen, and it makes the octave detection problem even harder, as is discussed in Section \ref{Section:Numerical:ViolinSonata}. We mention that while it is a difficult job in signal processing, human beings could identify the difference via learning the oscillatory pattern of the signal or the ``timbre''. For example, the timbre of the note C4, which has the fundamental frequency 262 Hz, is different from that of the combined note (or called ``an interval'' in music) C4+C5, where the fundamental frequency of C5 is 524 Hz. By learning the timbre, we could tell the difference.} The above scenarios all have their own interest but {are} out of the scope of this paper. We will report a systematic study in {a} future work.

We mention that there are several challenging cases in processing real-world multi-pitch signals, like {\em missing fundamental} or {\em stacked harmonics}, both of which have been discussed in \cite{su2015combining,su2016exploiting}. These could be treated as exceptional cases of the proposed model and a modification of the model and algorithm is needed to better handle these signals.

{Last but not the least, from the data analysis viewpoint, in general we cannot decide the model parameters, like the $\epsilon$, and the sequence $c$, a priori. This is an estimation problem in nature, and has been open for a while. However, for most problems we face in practice, we have some background knowledge that could guide us to ``guess the model''. For example, for the fetal ECG extraction problem, the heart rates of the mother and the fetus have a well-known range guided by the physiological background, and this is the information we could use to determine the parameters. But for a randomly given dataset without any background knowledge, at this moment, there is still no ideal way to determine the model parameters directly from the data itself. This fundamental estimation problem will be explored in a future work.}

\appendix
\section{Proof of Theorem \ref{Theorem:TimeVaryingCepstrum}}\label{Appendix:Proof}

In this section, we provide an analysis of STCT in Theorem \ref{Theorem:TimeVaryingCepstrum} step by step
\begin{itemize}
\item first step: approximate the ANH function by a ``harmonized'' function by Taylor's expansion and evaluate its STFT;
\item second step: evaluate the $\gamma$ power of the absolute value of STFT. Since in general there will be more than one ANH component in the ANH function, we have to handle the possible interference between different ANH components.  We will apply the Erd\"os-Tur\'an inequality to control the interference;
\item third step: find the Fourier transform of the $\gamma$ power of the absolute value of STFT and finish the proof.
\end{itemize}

We start from the first Lemma, which allows us to locally approximate an ANH function by a sinusoidal function.

\begin{lem}\label{Lemma:TaylorExpansionAMFM}
Take $\epsilon {>0}$, a sequence $c \in \ell^1$, $N\in\NN$ and $0\leq C<\infty$. For $f(t)=\sum_{\ell=0}^\infty B_\ell(t)\cos(2\pi \phi_\ell(t))\in \mathcal{D}^{c,C,N}_\epsilon$, for each $\ell\in\{0\}\cup\NN$ we have
\begin{align}
|B_\ell(t+s)-B_\ell(t)|\leq &\epsilon c(\ell) |s|(\phi'_1(t)+\frac{1}{2}\|\phi''_{1}\|_{L^\infty}|s|),\label{Lemma1:Proof:Bound1}\\
|\phi'_\ell(t+s)-\phi'_\ell(t)|\leq &\epsilon\ell |s|(\phi'_1(t)+\frac{1}{2}\|\phi''_{1}\|_{L^\infty}|s|).\label{Lemma1:Proof:Bound2}
\end{align}
\end{lem}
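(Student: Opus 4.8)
The plan is to obtain both estimates directly from the fundamental theorem of calculus, fed by the slowly varying conditions of Definition~\ref{Definition:ANHFunction}. Fixing $t$ and $s$, I would first write each difference as an integral of the relevant derivative,
\[
B_\ell(t+s)-B_\ell(t)=\int_t^{t+s}B_\ell'(u)\,\ud u,\qquad \phi_\ell'(t+s)-\phi_\ell'(t)=\int_t^{t+s}\phi_\ell''(u)\,\ud u,
\]
which is legitimate since $B_\ell\in C^1$ and $\phi_\ell\in C^2$. Taking absolute values reduces both claims to estimating the integrands, and here the slowly varying condition supplies exactly what is needed: $|B_\ell'(u)|\le \epsilon c(\ell)\phi_1'(u)$ and $|\phi_\ell''(u)|\le \epsilon\ell\,\phi_1'(u)$. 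Thus, up to the harmless constants $c(\ell)$ and $\ell$, both problems collapse to the single task of bounding $\int_t^{t+s}\phi_1'(u)\,\ud u$ uniformly in terms of $\phi_1'(t)$. (For $\ell=0$ the second inequality is vacuous, since $\phi_0\equiv 0$.)

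The only genuine ingredient is a pointwise control of $\phi_1'$ across the integration window, and I would again produce it by the fundamental theorem of calculus applied to $\phi_1'$ itself. Writing $\phi_1'(u)=\phi_1'(t)+\int_t^u\phi_1''(v)\,\ud v$ and using $\|\phi_1''\|_{L^\infty}\le \epsilon\|\phi_1'\|_{L^\infty}<\infty$ (finite by the slowly varying condition together with $\|\phi_1'\|_{L^\infty}<\infty$), one gets the elementary bound $\phi_1'(u)\le \phi_1'(t)+|u-t|\,\|\phi_1''\|_{L^\infty}$. Integrating this over $u$ between $t$ and $t+s$ then yields
\[
\Big|\int_t^{t+s}\phi_1'(u)\,\ud u\Big|\le |s|\,\phi_1'(t)+\tfrac12\|\phi_1''\|_{L^\infty}s^2=|s|\Bigl(\phi_1'(t)+\tfrac12\|\phi_1''\|_{L^\infty}|s|\Bigr),
\]
which is precisely the shape of the right-hand sides in (\ref{Lemma1:Proof:Bound1}) and (\ref{Lemma1:Proof:Bound2}). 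Combining this with the derivative bounds of the previous step immediately produces the two claimed inequalities, with the factor $c(\ell)$ (respectively $\ell$) carried through.

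I expect no real obstacle: the statement is a routine consequence of integrating the hypothesised derivative bounds, and the quadratic-in-$|s|$ correction is exactly the first-order effect of the non-constancy of $\phi_1'$. The one point deserving care is bookkeeping with the sign of $s$: for $s<0$ the integrals run backwards, so I would either rewrite $\int_t^{t+s}$ as $-\int_{t+s}^t$ or, more simply, carry $|s|$ from the outset and integrate over the ordered window $[\min(t,t+s),\max(t,t+s)]$, whose length is $|s|$. With that convention the estimate is symmetric in the sign of $s$ and the proof is only a few lines.
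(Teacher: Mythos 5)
Your proposal is correct and follows essentially the same route as the paper's proof: write each difference as an integral of its derivative, invoke the slowly varying bounds $|B_\ell'|\le\epsilon c(\ell)\phi_1'$ and $|\phi_\ell''|\le\epsilon\ell\phi_1'$, and then control $\int_t^{t+s}\phi_1'(u)\,\ud u$ by a second application of the fundamental theorem of calculus, yielding the $\phi_1'(t)|s|+\tfrac12\|\phi_1''\|_{L^\infty}s^2$ bound. Your extra remarks (the finiteness of $\|\phi_1''\|_{L^\infty}$ via $\|\phi_1'\|_{L^\infty}<\infty$, and the sign bookkeeping for $s<0$, which the paper dispatches with ``the proof for $s\leq 0$ is the same'') are consistent with the paper and introduce no gap.
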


\begin{proof}
Assume that $s>0$. The proof for $s\leq 0$ is the same. By the assumption of $B_\ell(t)$, we have
\begin{align}
|B_\ell(t+s)-B_\ell(t)| &=\,\left|\int_0^sB_\ell'(t+u)d u\right|  \nonumber\\
&{  \leq \epsilon c(\ell) \int_0^s\phi_1'(t+u) d u \quad \mbox{by the slowly varying condition (\ref{def:slow_varying}), }}\nonumber\\
&\leq \,\epsilon c(\ell) \int_0^s \left(\phi_1'(t)+\int_0^u \phi_1''({t+}y)dy\right) d u\leq \epsilon c(\ell) \left(\phi_1'(t)s+\frac{1}{2}\|\phi''_1\|_{L^\infty}s^2\right). \nonumber
\end{align}
The proof of (\ref{Lemma1:Proof:Bound2}) follows by the same argument.
{
\begin{align}
|\phi'_{\ell}(t+s) - \phi'_{\ell}(t)| &= \left|\int_0^s \phi''_{\ell}(t+u) du \right| \nonumber \\
&\leq \epsilon \ell \int_0^s \phi_1'(t+u)du \quad \mbox{by the slowly varying condition (\ref{def:slow_varying}), } \nonumber\\
&=\epsilon \ell \int_0^s \left( \phi_1'(t) + \int_0^u \phi_1''(t+y)dy\right)du
\leq \epsilon \ell \left( \phi_1'(t) s + \frac{1}{2} \| \phi_1''\|_{L^{\infty}} s^2 \right). \nonumber
\end{align}
}
\end{proof}

The following Lemma leads to the first part of the Theorem,  (\ref{MainTheorem:STFTExpansion}), regarding the STFT. In short, for the superposition of ANH functions in $\mathcal{D}_{\epsilon,d}$, at each time $t$ the function behaves like a sinusoidal function and the STFT could be approximately explicitly.

\begin{lem}\label{Lemma:STFTexpansion}
Fix $\epsilon {>0}$ and $d>0$. Take $f(t)=\sum_{k=1}^Kf_k(t)\in \mathcal{D}_{\epsilon,d}$. Then, the STFT of $f$ at $t\in\RR$ is
\begin{align}
V^{(h)}_f(t,\xi)=
\frac{1}{2}\sum_{k=1}^K\sum_{\ell=-N_k}^{N_k} B_{k,\ell}(t)\hat{h}(\xi-\phi'_{k,\ell}(t))e^{i2\pi\phi_{k,\ell}(t)}+\epsilon_0(t,\xi),\label{Proof:Lemma:STFTExpansionFormula0}
\end{align}
where $\xi\in\RR$ and $\epsilon_0(t,\xi)$ is defined in (\ref{Lemma:Proof:Definition:epsilon}). Furthermore, $|{\epsilon_0}(t,\xi)|$ is of order $\epsilon$ and decays at the rate of $|\xi|^{-1}$ as $|\xi|\to \infty$.
\end{lem}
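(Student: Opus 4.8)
The plan is to compute the STFT directly, term by term, by inserting the complex-exponential form of each ANH component and performing a first-order Taylor expansion of the amplitude and phase about the window center. First I would use Euler's formula together with the symmetric extension $B_{k,-\ell}=B_{k,\ell}$, $\phi_{k,-\ell}=-\phi_{k,\ell}$, $\phi_{k,0}=0$ introduced before the theorem to rewrite each component as
\begin{equation*}
f_k(t)=\frac{1}{2}\sum_{\ell=-\infty}^{\infty}B_{k,\ell}(t)e^{i2\pi\phi_{k,\ell}(t)},
\end{equation*}
so that $f=\frac{1}{2}\sum_{k=1}^{K}\sum_{\ell\in\ZZ}B_{k,\ell}e^{i2\pi\phi_{k,\ell}}$. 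Substituting $\tau=t+u$ in the definition \eqref{eq: stft1}, the STFT becomes $\frac{1}{2}\sum_{k,\ell}\int B_{k,\ell}(t+u)e^{i2\pi\phi_{k,\ell}(t+u)}h(u)e^{-i2\pi\xi u}\,du$. Absolute convergence of this double sum, which legitimizes interchanging sum and integral, follows from $B_{k,\ell}\le c_k(\ell)B_{k,1}$ (condition \eqref{Condition:ANH:B_ell}), $B_{k,1}\in L^\infty$, and $c_k\in\ell^1$.

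Next I would freeze the amplitude at $u=0$ and linearize the phase, writing
\begin{equation*}
B_{k,\ell}(t+u)e^{i2\pi\phi_{k,\ell}(t+u)}=B_{k,\ell}(t)e^{i2\pi(\phi_{k,\ell}(t)+\phi'_{k,\ell}(t)u)}+R_{k,\ell}(t,u).
\end{equation*}
The frozen terms with $|\ell|\le N_k$ integrate, after the change of variables, to the claimed main term $\frac{1}{2}B_{k,\ell}(t)e^{i2\pi\phi_{k,\ell}(t)}\hat{h}(\xi-\phi'_{k,\ell}(t))$, since $\int h(u)e^{-i2\pi(\xi-\phi'_{k,\ell}(t))u}\,du=\hat{h}(\xi-\phi'_{k,\ell}(t))$. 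I would then let $\epsilon_0(t,\xi)$ collect the remainders $R_{k,\ell}$ for $|\ell|\le N_k$ together with the entire unfrozen tail $|\ell|>N_k$, each still integrated against $h(u)e^{-i2\pi\xi u}$.

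To bound the magnitude I would split $R_{k,\ell}$ into an amplitude part $[B_{k,\ell}(t+u)-B_{k,\ell}(t)]e^{i2\pi\phi_{k,\ell}(t+u)}$ and a phase part $B_{k,\ell}(t)[e^{i2\pi\phi_{k,\ell}(t+u)}-e^{i2\pi(\phi_{k,\ell}(t)+\phi'_{k,\ell}(t)u)}]$. Lemma~\ref{Lemma:TaylorExpansionAMFM} controls the first by $\epsilon c_k(\ell)|u|(\phi'_{k,1}(t)+\tfrac12\|\phi''_{k,1}\|_{L^\infty}|u|)$, and, after one further integration, bounds the phase defect $|\phi_{k,\ell}(t+u)-\phi_{k,\ell}(t)-\phi'_{k,\ell}(t)u|$ by $\tfrac12\epsilon\ell u^2(\phi'_{k,1}(t)+\cdots)$, which feeds into the phase part through $|e^{i\theta_1}-e^{i\theta_2}|\le|\theta_1-\theta_2|$. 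Integrating against $|h(u)|$ produces the window moments $I_1,I_2$ and an overall factor $\epsilon$; summing over $|\ell|\le N_k$ uses $c_k\in\ell^1$ for the amplitude part and the finite weighted sum $\sum_{|\ell|\le N_k}\ell c_k(\ell)$ for the phase part. The tail contribution is $O(\epsilon)$ directly from \eqref{Condition:ANH:B_elltail1}. Combining, $|\epsilon_0(t,\xi)|=O(\epsilon)$ uniformly in $\xi$.

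Finally, for the $|\xi|^{-1}$ decay I would regard $\epsilon_0(t,\cdot)$ as the Fourier transform in $u$ of a function $g_t$ that is $C^1$ with $g_t,g_t'\in L^1$: each summand is $h(u)$ (or a $|u|$-weighted variant, from $R_{k,\ell}$) times a $C^1$ bounded factor, and $h\in\mathcal{S}$ ensures integrability of both $g_t$ and $g_t'$. A single integration by parts then gives $|\epsilon_0(t,\xi)|\le\|g_t'\|_{L^1}/(2\pi|\xi|)$. The one delicate point is that differentiating the tail pulls down $\phi'_{k,\ell}(t+u)\sim\ell\phi'_{k,1}(t)$, so summing the derivative over $|\ell|>N_k$ requires the $\ell$-weighted bound \eqref{Condition:ANH:B_elltail2}; this is exactly where the $C^{1,\alpha}$-type tail condition enters and secures $\|g_t'\|_{L^1}<\infty$. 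The main obstacle throughout is the bookkeeping of the three amplitude conditions \eqref{Condition:ANH:B_ell}, \eqref{Condition:ANH:B_elltail1}, \eqref{Condition:ANH:B_elltail2} so that summability over the infinitely many harmonics, the $\epsilon$-order of the magnitude, and the $|\xi|^{-1}$ decay all hold at once and the termwise integration by parts remains legitimate.
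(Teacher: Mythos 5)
Your proposal is correct and takes essentially the same route as the paper's proof: decompose $f$ into its harmonics, freeze the amplitude at $t$ and linearize the phase, control the remainders via Lemma \ref{Lemma:TaylorExpansionAMFM} and the window moments $I_1,I_2$, handle the tail with (\ref{Condition:ANH:B_elltail1}), and obtain the $|\xi|^{-1}$ decay from $C^1$ smoothness of the error integrand. The only (immaterial) bookkeeping difference is that you Taylor-expand just the retained harmonics $|\ell|\leq N_k$ and absorb the raw unfrozen tail into $\epsilon_0$, whereas the paper expands all harmonics (using (\ref{Condition:ANH:B_elltail2}) for summability of the phase remainders) and bounds the frozen tail at time $t$; both groupings yield the same $\epsilon_0$ and the same $O(\epsilon)$ bound.
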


\begin{proof}
Since $f\in L^\infty\cap C^1\subset \mathcal{S}'$ and $h\in \mathcal{S}$, by the linearity of the STFT, we have
\begin{align}
V^{(h)}_f(t,\xi)=\sum_{k=1}^K\sum_{\ell=0}^\infty V^{(h)}_{f_{k,\ell}}(t,\xi)\,,
\end{align}
where $f_{k,\ell}(\cdot):=B_{k,\ell}(\cdot)\cos(2\pi \phi_{k,\ell}(\cdot))$ for $\ell=0,1,\ldots$. 
{
Denote
\begin{align}
\tilde{V}^{(h)}_{f_{k,\ell}}(t,\xi) :=  \int B_{k,\ell}(t) \cos(2\pi(\phi_{k,\ell}(t) + \phi'_{k,\ell}(t)(x-t))) h(x-t) e^{-i2\pi \xi (x-t)}dx
\end{align}
where $k=1,\cdots, K$ and $\ell = 0,\cdots, \infty$.}
Next, fix $k\in\{1,\ldots,K\}$, we evaluate the difference between $V^{(h)}_{f_{k,\ell}}(t,\xi)$ and { $\tilde{V}^{(h)}_{f_{k,\ell}}(t,\xi)$}. For each $\ell\in\NN\cup\{0\}$, denote
\begin{equation}
\epsilon_{k,\ell}(t,\xi):=V^{(h)}_{f_{k,\ell}}(t,\xi)-{ \tilde{V}^{(h)}_{f_{k,\ell}}(t,\xi)}\,.\label{Lemma:Proof:Definition:epsilonl}
\end{equation}
We show that $|\epsilon_{k,\ell}(t,\xi)|$ is of order $\epsilon$ and linearly dependent on $c_{k}(\ell)$ for all $t,\xi\in\RR$.
{
First, note that
\begin{align}
\left| \epsilon_{k,\ell}(t,\xi) \right|
&\leq \int \left|B_{k,\ell}(x) -  B_{k,\ell}(t)\right| \left|h(x-t) \right|dx  \nonumber\\
& \quad + B_{k,\ell}(t) \int \left|\cos(2\pi \phi_{k,\ell}(x)) - \cos( 2\pi (\phi_{k,\ell}(t) - \phi'_{k,\ell}(t)(x-t) ))\right| \left|h(x-t) \right| dx
\end{align}
and that
\begin{align}
&\left| \cos(2\pi \phi_{k,\ell}(x)) - \cos( 2\pi (\phi_{k,\ell}(t) - \phi'_{k,\ell}(t)(x-t) )\right| \nonumber \\
\leq\,& 2\pi \left| \phi_{k,\ell}(x)-\phi_{k,\ell}(t)-\phi_{k,\ell}'(t)(x-t) \right|
\leq 2\pi  \int_0^{x-t} \left|\phi'_{k,\ell}(t+u)-\phi'_{k,\ell}(t)\right|du
\end{align}
}
Denote
\begin{equation*}
M_k:=\|\phi'_{k,1}\|_{L^\infty}.
\end{equation*}
Clearly, $\|\phi_{k,1}''\|_{L^\infty}\leq \epsilon M_k$. {  Combining the above inequalities} and Lemma \ref{Lemma:TaylorExpansionAMFM}, we have
\begin{align}
|\epsilon_{k,\ell}(t,\xi)|\leq &\, \int |B_{k,\ell}(x)-B_{k,\ell}(t)||h(x-t)|dx\nonumber\\
&+2\pi B_{k,\ell}(t)\int \int_0^{x-t}|\phi'_{k,\ell}(t+u)-\phi'_{k,\ell}(t)|du |h(x-t)|dx\nonumber\\
\leq &\, \epsilon \big[c_{k}(\ell)\big(\phi_{k,1}'(t)I_1+\frac{1}{2} \epsilon M_kI_2\big)+\pi B_{k,\ell}(t)\ell(\phi_{k,1}'(t)I_2+\frac{1}{3}\epsilon M_kI_3)\big]\nonumber
\end{align}
which is of order $\epsilon$ since $\phi_{k,1}'(t)$ and $B_{k,1}(t)$ are bounded. Note that {$\epsilon_{k,0}(t,\xi)\leq \epsilon c_{k}(\ell)(\phi'_{k,1}(t) + \epsilon M_kI_2/2)$} since the phase { $\phi_{k,0}=0$}. Furthermore, note that $|\epsilon_{k,\ell}(t,\xi)|$ decays at the rate of $|\xi|^{-1}$ as $|\xi|\to \infty$ since 
{
\begin{align}\label{Proof:C1LeadToDistribution}
B_{k,\ell}(x) \cos(2\pi \phi_{k,\ell}(x)) - B_{k,\ell}(t) \cos(2\pi ( \phi_{k,\ell}(t)) + \phi'_{k,\ell}(t)(x-t) ) \in C^1.
\end{align}
 }
Denote
\begin{equation*}
E^{(1)}_{k}(t,\xi):=\sum_{\ell=0}^\infty \epsilon_{k,\ell}(t,\xi),
\end{equation*}
which converges by {(\ref{Condition:ANH:B_elltail2}) that $\sum_{\ell=1}^\infty \ell B_{k,\ell}(t)\leq C_k \sqrt{\frac{1}{4}B^2_{k,0}(t)+\frac{1}{2}\sum_{\ell=1}^\infty B^2_{k,\ell}(t)}$}, and hence
\begin{align}
|E^{(1)}_{k}(t,\xi)|\leq \epsilon& \Big(\|c_k\|_{\ell^1}\big[\phi_{k,1}'(t)I_1+\frac{1}{2}\epsilon M_kI_2 {\big]}\nonumber\\
&+\pi C_k {\sqrt{\frac{1}{4}B^2_{k,0}(t)+\frac{1}{2}\sum_{\ell=1}^\infty B^2_{k,\ell}(t)}} (\phi_{k,1}'(t)I_2+\frac{1}{3}\epsilon M_kI_3) \Big).\label{Proof:Bound:E1k}
\end{align}
Thus, $E^{(1)}_{k}(t,\xi)$ is of order $\epsilon$.

Finally, for each $k\in\{1,\ldots,K\}$, denote
\begin{equation*}
E^{(2)}_{k}(t,\xi):= {  \sum_{\ell =N_k +1}^{\infty} \tilde{V}^{(h)}_{f_{k,\ell}}(t,\xi).}
\end{equation*}
{
By the Plancherel identity, we have
\begin{align}
\tilde{V}^{(h)}_{f_{k,\ell}}(t,\xi)
=\frac{1}{2}B_{k,\ell}(t)[\hat{h}(\xi-\phi'_{k,\ell}(t))e^{i2\pi\phi_{k,\ell}(t)}+\hat{h}(\xi+\phi'_{k,\ell}(t))e^{-i2\pi\phi_{k,\ell}(t)}]\,.
\end{align}
Thus, by} the assumption that {(\ref{Condition:ANH:B_elltail1}) that $\sum_{\ell=N_k+1}^\infty B_{k,\ell}(t)\leq \epsilon \sqrt{\frac{1}{4}B^2_{k,0}(t)+\frac{1}{2}\sum_{\ell=1}^\infty B^2_{k,\ell}(t)}$}, we have 
\begin{equation}\label{Proof:Bound:E2k}
\Big|E^{(2)}_{k}(t,\xi)\Big|\leq \frac{1}{2}\sum_{\ell\in \ZZ\backslash {\{ -N_k, \cdots, N_k\} }} B_{k,\ell}(t)|\hat{h}(\xi-\phi'_{k,\ell}(t))|  \leq \epsilon I_0 {\sqrt{\frac{1}{4}B^2_{k,0}(t)+\frac{1}{2}\sum_{\ell=1}^\infty B^2_{k,\ell}(t)}},
\end{equation}
where the last inequality holds since $\|\hat{h}\|_{L^\infty}\leq I_0$ by a direct bound. Thus, we have
\begin{equation}
{ \sum_{k=1}^K \sum_{\ell =0}^{\infty} \tilde{V}^{(h)}_{f_{k,\ell}}}(t,\xi)=\frac{1}{2}\sum_{k=1}^K\sum_{\ell=-N_k}^{N_k} B_{k,\ell}(t)\hat{h}(\xi-\phi'_{k,\ell}(t))e^{i2\pi\phi_{k,\ell}(t)}+\sum_{k=1}^KE^{(2)}_{k}(t,\xi),\label{Proof:lemma2:epsilonBound2}
\end{equation}
where $|E^{(2)}_{k}(t,\xi)|$ is of order $\epsilon$. Furthermore, $|E^{(2)}_{k}(t,\xi)|$ decays faster than $|\xi|^{-1}$ as $|\xi|\to \infty$ since $\sum_{\ell=1}^\infty B_{k,\ell}(t)<\infty$ and {$\sum_{k=1}^K \sum_{\ell =0}^{N_k} \tilde{V}^{(h)}_{f_{k,\ell}}(t,\xi)$ }decays faster than $|\xi|^{-1}$ as $|\xi|\to \infty$.

We thus have
\begin{equation}
V^{(h)}_{\tilde{f}}(t,\xi)=\sum_{k=1}^K\sum_{\ell=0}^\infty V^{(h)}_{\tilde{f}_{k,\ell}}(t,\xi)=\frac{1}{2}\sum_{k=1}^K\sum_{\ell\in\ZZ} B_{k,\ell}(t)\hat{h}(\xi-\phi'_{k,\ell}(t))e^{i2\pi\phi_{k,\ell}(t)}.
\end{equation}
Putting {(\ref{Lemma:Proof:Definition:epsilonl})} and (\ref{Proof:lemma2:epsilonBound2}) together, we have
\begin{align}
V^{(h)}_f(t,\xi)&\,
=\sum_{k=1}^K\sum_{\ell=0}^\infty V^{(h)}_{f_{k,\ell}}(t,\xi)
=\sum_{k=1}^K\sum_{\ell=0}^\infty [{\tilde{V}^{(h)}_{f_{k,\ell}}}(t,\xi)+\epsilon_{k,\ell}(t,\xi)]\nonumber\\
&\,=\sum_{k=1}^K[\frac{1}{2}\sum_{\ell\in\ZZ} B_{k,\ell}(t)\hat{h}(\xi-\phi'_{k,\ell}(t))e^{i2\pi\phi_{k,\ell}(t)}+E^{(1)}_{k}(t,\xi)]\,,\nonumber\\
&\,=\sum_{k=1}^K\big[\frac{1}{2}\sum_{\ell=-N_k}^{N_k} B_{k,\ell}(t)\hat{h}(\xi-\phi'_{k,\ell}(t))e^{i2\pi\phi_{k,\ell}(t)}+E^{(1)}_{k}(t,\xi)+E^{(2)}_{k}(t,\xi)\big]\,.\nonumber
\end{align}
Denote
\begin{equation}
\epsilon_0(t,\xi):=\sum_{k=1}^K[E^{(1)}_{k}(t,\xi)+E^{(2)}_{k}(t,\xi)], \label{Lemma:Proof:Definition:epsilon}
\end{equation}
which is of order $\epsilon$ and $|\epsilon_0(t,\xi)|$ decays at the rate of $|\xi|^{-1}$ as $|\xi|\to \infty$. We thus have the proof.
\end{proof}

\begin{lem}\label{Lemma:STFTexpansion2}
Fix $\epsilon {>0}$ and $d>0$. Take $f(t)=\sum_{k=1}^Kf_k(t)\in \mathcal{D}_{\epsilon,d}$. Fix a window function $h\in \mathcal{S}$.
For each $t\in\RR$ and $\xi\in\RR$, we have
\begin{align}
&\sum_{k=1}^K\sum_{\ell=-N_k}^{N_k} B_{k,\ell}(t)\hat{h}(\xi-\phi'_{k,\ell}(t))e^{i2\pi\phi_{k,\ell}(t)}\label{Proof:Lemma:STFTexpansion2:FirstBound}\\
=\,&\sum_{k=1}^K\sum_{\ell=-N_k}^{N_k} B_{k,\ell}(t)\hat{h}(\xi-\ell\phi'_{k,1}(t))e^{i2\pi\phi_{k,\ell}(t)}+\epsilon_1(t,\xi),\nonumber
\end{align}
where $\epsilon_1(t,\xi)$ is defined in (\ref{Proof:Lemma:epsilon1}) satisfying
\begin{equation}\label{Proof:Bound:E2}
|\epsilon_1(t,\xi)|\leq \epsilon 2\pi I_1 \sum_{k=1}^K \phi_{k,1}'(t)\sum_{\ell=-N_k}^{N_k} B_{k,\ell}(t)\chi_{\tilde{Z}_{k,\ell}}(\xi),
\end{equation}
where { $\tilde{Z}_{k,\ell}(t) := [(\ell - \epsilon) \phi'_{k,1}(t) - \Delta, (\ell + \epsilon) \phi'_{k,1} + \Delta]$.}
Note that the support of $\epsilon_1(t,\xi)$ is inside $[-\max_{k}((N_k+\epsilon)\phi_{k,1}'(t))-\Delta,\,\max_{k}((N_k+\epsilon)\phi_{k,1}'(t))+\Delta]$. In particular, we have
\begin{align}
V^{(h)}_f(t,\xi)=
\frac{1}{2}\sum_{k=1}^K\sum_{\ell=-N_k}^{N_k} B_{k,\ell}(t)\hat{h}(\xi-\ell\phi'_{k,1}(t))e^{i2\pi\phi_{k,\ell}(t)}+\epsilon_2(t,\xi),\label{Proof:Lemma:STFTExpansionFormula}
\end{align}
where $\epsilon_2(t,\xi)=\epsilon_0(t,\xi)+\epsilon_1(t,\xi)$, which is of order $\epsilon$ and $|\epsilon_2(t,\xi)|$ decays at the rate of $|\xi|^{-1}$ as $|\xi|\to \infty$.
\end{lem}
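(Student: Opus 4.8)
The plan is to read (\ref{Proof:Lemma:STFTexpansion2:FirstBound}) as a term-by-term replacement of the true center frequency $\phi'_{k,\ell}(t)$ inside $\hat h$ by the harmonic value $\ell\phi'_{k,1}(t)$, and to show that the discrepancy is of order $\epsilon$. First I would simply define $\epsilon_1$ to be the residual,
\begin{equation*}
\epsilon_1(t,\xi):=\sum_{k=1}^K\sum_{\ell=-N_k}^{N_k}B_{k,\ell}(t)e^{i2\pi\phi_{k,\ell}(t)}\bigl[\hat h(\xi-\phi'_{k,\ell}(t))-\hat h(\xi-\ell\phi'_{k,1}(t))\bigr],
\end{equation*}
so that (\ref{Proof:Lemma:STFTexpansion2:FirstBound}) holds by construction and the entire task reduces to bounding it.

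Next I would estimate each bracketed difference by the fundamental theorem of calculus, writing $\hat h(\xi-\phi'_{k,\ell}(t))-\hat h(\xi-\ell\phi'_{k,1}(t))=\int_{\ell\phi'_{k,1}(t)}^{\phi'_{k,\ell}(t)}\hat h'(\xi-u)\,du$, and using the elementary bound $|\hat h'(\eta)|=\bigl|\int(-2\pi i x)h(x)e^{-i2\pi\eta x}\,dx\bigr|\le 2\pi I_1$ together with the time-varying wave-shape condition (\ref{Condition:ANH:phi_ell}), which in the multicomponent notation reads $|\phi'_{k,\ell}(t)-\ell\phi'_{k,1}(t)|\le\epsilon\phi'_{k,1}(t)$. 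This yields the pointwise estimate $|\hat h(\xi-\phi'_{k,\ell}(t))-\hat h(\xi-\ell\phi'_{k,1}(t))|\le 2\pi I_1\,\epsilon\,\phi'_{k,1}(t)$, and summing against $B_{k,\ell}(t)$ produces the claimed order-$\epsilon$ factor $2\pi I_1\epsilon\sum_k\phi'_{k,1}(t)\sum_\ell B_{k,\ell}(t)$.

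To sharpen this into the indicator bound (\ref{Proof:Bound:E2}) I would localize the support. Since $\supp(\hat h)\subset[-\Delta,\Delta]$, each term $\hat h(\xi-\phi'_{k,\ell}(t))$ and $\hat h(\xi-\ell\phi'_{k,1}(t))$ vanishes unless $\xi$ lies within $\Delta$ of $\phi'_{k,\ell}(t)$, respectively of $\ell\phi'_{k,1}(t)$; by (\ref{Condition:ANH:phi_ell}) both centers sit in $[(\ell-\epsilon)\phi'_{k,1}(t),(\ell+\epsilon)\phi'_{k,1}(t)]$, so their difference is supported inside $\tilde Z_{k,\ell}(t)=[(\ell-\epsilon)\phi'_{k,1}(t)-\Delta,(\ell+\epsilon)\phi'_{k,1}(t)+\Delta]$. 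Inserting the factor $\chi_{\tilde Z_{k,\ell}}(\xi)$ into the pointwise bound gives exactly (\ref{Proof:Bound:E2}), and since $|\ell|\le N_k$ it confines $\supp(\epsilon_1(t,\cdot))$ to the stated interval.

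Finally, (\ref{Proof:Lemma:STFTExpansionFormula}) follows by substituting (\ref{Proof:Lemma:STFTexpansion2:FirstBound}) into (\ref{Proof:Lemma:STFTExpansionFormula0}) of Lemma \ref{Lemma:STFTexpansion} and collecting errors as $\epsilon_2=\epsilon_0+\epsilon_1$. This $\epsilon_2$ is of order $\epsilon$ because both summands are, and it decays like $|\xi|^{-1}$ as $|\xi|\to\infty$ since $\epsilon_0$ does so by Lemma \ref{Lemma:STFTexpansion} while $\epsilon_1$ is compactly supported, hence identically zero for large $|\xi|$. I expect no serious obstacle: once the residual is isolated the estimate is routine. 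The only points needing care are the support argument that upgrades the crude pointwise bound $2\pi I_1\epsilon\phi'_{k,1}(t)$ to the $\xi$-localized indicator form, and tracking the harmless factor $\tfrac12$ inherited from the prefactor in (\ref{Proof:Lemma:STFTExpansionFormula0}) when assembling $\epsilon_2$.
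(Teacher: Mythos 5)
Your proposal is correct and follows essentially the same route as the paper: define $\epsilon_1$ as the residual, bound each difference $\hat h(\xi-\phi'_{k,\ell}(t))-\hat h(\xi-\ell\phi'_{k,1}(t))$ via $\|\hat h'\|_{L^\infty}\leq 2\pi I_1$ together with condition (\ref{Condition:ANH:phi_ell}), localize the support to $\tilde Z_{k,\ell}(t)$ using the compact support of $\hat h$, and assemble $\epsilon_2=\epsilon_0+\epsilon_1$ with the $|\xi|^{-1}$ decay coming from Lemma \ref{Lemma:STFTexpansion} plus compact support of $\epsilon_1$. (The only blemish is a harmless sign omission in your fundamental-theorem-of-calculus identity, namely $\hat h(\xi-a)-\hat h(\xi-b)=-\int_b^a \hat h'(\xi-u)\,du$, which does not affect the absolute-value bound.)
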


\begin{proof}
The proof is straightforward by the smoothness assumption of $h$ and Taylor's expansion. Indeed, by the assumption that $\left|\frac{\phi'_{k,\ell}(t)}{\phi'_{k,1}(t)}-\ell\right|\leq \epsilon$, we know that $|\phi'_{k,\ell}(t)-\ell\phi'_{k,1}(t)|\leq \epsilon \phi_{k,1}'(t)$ for all $\ell=1,\ldots$.
Thus, since $\hat{h}$ is compactly supported on $[-\Delta,\Delta]$, we have that for $\xi\in  \tilde{Z}_{k,\ell}$,
\begin{equation}
|\hat{h}(\xi-\phi'_{k,\ell}(t))-\hat{h}(\xi-\ell\phi'_{k,1}(t))|\leq \epsilon\phi_{k,1}'(t)\|\hat{h}'\|_{L^\infty}\leq 2\pi\epsilon \phi_{k,1}'(t)I_1,
\end{equation}
where we use the bound $\|\hat{h}'\|_{L^\infty}\leq 2\pi I_1$; for $\xi\notin \tilde{Z}_{k,\ell}$,
\begin{equation*}
|\hat{h}(\xi-\phi'_{k,\ell}(t))-\hat{h}(\xi-\ell\phi'_{k,1}(t))|=0.
\end{equation*}
Denote
\begin{equation}
\epsilon_1(t,\xi):=\sum_{k=1}^K\sum_{\ell=-N_k}^{N_k} B_{k,\ell}(t)(\hat{h}(\xi-\phi'_{k,\ell}(t))-\hat{h}(\xi-\ell\phi'_{k,1}(t)))e^{i2\pi\phi_{k,\ell}(t)}. \label{Proof:Lemma:epsilon1}
\end{equation}
By a direct bound, we have
\begin{align}
|\epsilon_1(t,\xi)|=&|\sum_{k=1}^K\sum_{\ell=-N_k}^{N_k} B_{k,\ell}(t)(\hat{h}(\xi-\phi'_{k,\ell}(t))-\hat{h}(\xi-\ell\phi'_{k,1}(t)))e^{i2\pi\phi_{k,\ell}(t)}|\\
\leq&\, \sum_{k=1}^K\sum_{\ell=-N_k}^{N_k} B_{k,\ell}(t)|\hat{h}(\xi-\phi'_{k,\ell}(t))-\hat{h}(\xi-\ell\phi'_{k,1}(t))|\nonumber\\
\leq\,& \epsilon 2\pi I_1 \sum_{k=1}^K \phi_{k,1}'(t)\sum_{\ell=-N_k}^{N_k} B_{k,\ell}(t)\chi_{\tilde{Z}_{k,\ell}},\nonumber
\end{align}
which leads to the claim. The proof of (\ref{Proof:Lemma:STFTExpansionFormula}) comes from a direct combination of (\ref{Proof:Lemma:STFTExpansionFormula0}) and (\ref{Proof:Lemma:STFTexpansion2:FirstBound}).

\end{proof}

By the assumption that $0<\Delta\leq \phi_{1,1}'(t)/4$, we know that for a fixed $k\in \{1,\ldots,K\}$, {$Z_{k,i}(t)\cap Z_{k,j}(t)=\emptyset$} for all $i\neq j$, where {$Z_{k,\ell}$} is defined in (\ref{Definition:Zkl}).
Thus, when $K=1$, we know that for any $\gamma>0$, the $\gamma$ power of the absolute value of the major term in (\ref{Proof:Lemma:STFTExpansionFormula}) becomes
\begin{equation*}
\left|\sum_{\ell=-N_1}^{N_1} B_{1,\ell}(t)\hat{h}(\xi-\phi'_{1,\ell}(t))e^{i2\pi\phi_{1,\ell}(t)}\right|^\gamma=\sum_{\ell=-N_1}^{N_1} B^\gamma_{1,\ell}(t)|\hat{h}(\xi-\phi'_{1,\ell}(t))|^\gamma
\end{equation*}
since the supports of $\hat{h}(\xi-\phi'_{1,i}(t))$ and $\hat{h}(\xi-\phi'_{1,j}(t))$ do not overlap, when $i\neq j$.
However, when $K>1$, although $Z_{k,1}(t)\cap Z_{\ell,1}(t)=\emptyset$ when $k\neq \ell$ since $\Delta<d/4$, there is no guarantee that $Z_{k,i}(t)\cap Z_{\ell,j}(t)=\emptyset$ when $k\neq \ell$ and $i\neq j$. So, when $K>1$, we need to be careful when we take the power.

\begin{defn}
Fix $\epsilon { >0}$ and $d>0$. Take $f(t)=\sum_{k=1}^Kf_k(t)\in \mathcal{D}_{\epsilon,d}$. Define ${S_1(t)}=\emptyset$, and for each $k\in\{{2},\ldots,K\}$, define
\begin{equation}
S_{k}(t):=\{i,-i|\, 1\leq i\leq N_k,\,Z_{k,i}(t)\cap Z_{\ell,j}(t)\neq \emptyset,\,j\in { \{1,\ldots,N_\ell \}}\backslash S_\ell(t),\,\ell=1,\ldots,k-1 \} { \cup \{0\}}.
\end{equation}
Furthermore, define
\begin{align}
Y_{\text{no-OL}}(t)&:=\cup_{k=1}^K\cup_{i\in { \{0, \pm1,\ldots,\pm N_k\}} \backslash S_k}{Z}_{k,i}(t)\subset\RR\\
Y_{\text{with-OL}}(t)&:=\cup_{k=1}^K\cup_{i\in S_k} {Z}_{k,i}(t)\subset \RR.\nonumber
\end{align}
\end{defn}

The set $S_k(t)$ indicates the multiples of the $k$-th ANH function that have the danger of overlapping with the other ANH functions. To be more precise,
for $k\in\{2,\ldots,K\}$ and $\ell\in\{1,\ldots,k-1\}$, the supports of {$\hat{h}(\xi-i\phi_{k}'(t))$ and $\hat{h}(\xi-j \phi_{\ell}'(t))$}, where $i\in {\{0,\pm1,\ldots,\pm N_k\}}\backslash S_k$ and $j\in { \{0,\pm1,\ldots,\pm N_\ell\}} \backslash S_\ell$ do not overlap. The sets $Y_{\text{no-OL}}(t)$ and $Y_{\text{with-OL}}(t)$ are used to control the overlapping of {multiples associated with} different ANH components.
Note that the supports of all summands in $\sum_{k=1}^K\sum_{\ell\in \{{0,}\pm1,\ldots,\pm N_k\}\backslash S_k}B_{k,\ell}(t)|\hat{h}(\xi-\ell\phi'_{k,1}(t))|$ do not overlap. 

{To evaluate $|V^{(h)}_f(t,\xi)|^\gamma$, we} need the following bounds to control the influence of taking the $\gamma$ power.

\begin{lem}\label{Lemma:SmallGammaExpansion}
Suppose $x\geq y\geq 0$.
For $0<\gamma\leq 1$, we have
\begin{align}
(x+y)^\gamma\leq x^\gamma+\gamma y^\gamma\,.
\end{align}
\end{lem}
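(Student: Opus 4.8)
The plan is to reduce the claim to a single application of the mean value theorem to the concave power function $t\mapsto t^\gamma$, using the hypothesis $x\geq y$ in an essential way. The first thing I would observe is that the familiar subadditivity bound $(x+y)^\gamma\leq x^\gamma+y^\gamma$, valid for $0<\gamma\leq 1$, is \emph{not} sufficient here: we must produce the extra factor $\gamma$ multiplying $y^\gamma$, and this is precisely where the ordering $x\geq y$ gets exploited. (Indeed, since $\gamma\leq 1$ we have $\gamma y^\gamma\leq y^\gamma$, so the asserted bound is strictly stronger than subadditivity.)

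First I would dispose of the degenerate cases. If $y=0$ the inequality reads $x^\gamma\leq x^\gamma$, which holds with equality; and since $x\geq y\geq 0$, the case $x=0$ forces $y=0$ and is already covered. Hence I may assume $x\geq y>0$, so in particular every quantity appearing below lies in $(0,\infty)$ and no singularity of $t^{\gamma-1}$ at the origin is encountered.

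The key step is to apply the mean value theorem to $g(t):=t^\gamma$, which is continuous on $[x,x+y]$ and differentiable on $(x,x+y)$ because $x>0$. This yields a point $\zeta\in(x,x+y)$ with
\begin{equation*}
(x+y)^\gamma-x^\gamma=g'(\zeta)\,y=\gamma\,\zeta^{\gamma-1}\,y.
\end{equation*}
Now I would use that $\gamma-1\leq 0$, so that $t\mapsto t^{\gamma-1}$ is non-increasing on $(0,\infty)$. Since $\zeta>x\geq y>0$, monotonicity gives $\zeta^{\gamma-1}\leq y^{\gamma-1}$, and therefore
\begin{equation*}
(x+y)^\gamma-x^\gamma=\gamma\,\zeta^{\gamma-1}\,y\leq \gamma\,y^{\gamma-1}\,y=\gamma\,y^\gamma.
\end{equation*}
Rearranging yields $(x+y)^\gamma\leq x^\gamma+\gamma y^\gamma$, as claimed.

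There is no genuine analytic obstacle here, as the lemma is elementary; the only point demanding a little care is the monotonicity estimate $\zeta^{\gamma-1}\leq y^{\gamma-1}$, which relies simultaneously on the sign $\gamma-1\leq 0$ and on the hypothesis $x\geq y$ (ensuring $\zeta\geq y$). Dropping the assumption $x\geq y$ would leave only the weaker subadditivity bound. An equivalent alternative would be to normalize by setting $u:=y/x\in(0,1]$ and to prove $(1+u)^\gamma\leq 1+\gamma u^\gamma$ by examining the sign of $u\mapsto 1+\gamma u^\gamma-(1+u)^\gamma$; however, the mean value argument is shorter and avoids the monotonicity discussion of that auxiliary one-variable function.
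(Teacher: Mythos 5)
Your proof is correct and is essentially the paper's argument: the paper factors out $x^\gamma$ and applies the first-order (Taylor/Bernoulli) bound $(1+y/x)^\gamma\leq 1+\gamma\,y/x$, then rewrites $\gamma \frac{y}{x}x^\gamma=\gamma\big(\frac{y}{x}\big)^{1-\gamma}y^\gamma\leq\gamma y^\gamma$ using $y/x\leq 1$, which is exactly the derivative estimate your mean value theorem step produces. The only cosmetic difference is that you evaluate $\gamma t^{\gamma-1}$ at an intermediate point $\zeta\in(x,x+y)$ rather than at the left endpoint $x$; both arguments are finished by the same monotonicity of $t^{\gamma-1}$ combined with the hypothesis $x\geq y$.
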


\begin{proof}
When $x=y=0$, this is the trivial case.  Suppose $x\geq y> 0$ or $x>y\geq0$. By Taylor's expansion, we have
\begin{equation}
(x+y)^\gamma=x^\gamma(1+\frac{y}{x})^\gamma\leq x^\gamma+\gamma\frac{y}{x}x^\gamma=x^\gamma+\gamma\big(\frac{y}{x}\big)^{1-\gamma}y^\gamma.
\end{equation}
Since $y/x\leq 1$, we obtain the  bound.
\end{proof}

\begin{lem}\label{Lemma:STFTabsGammaPower}
Suppose Assumption \ref{Assumption:GeneralAssumption} holds and take $0<\gamma\leq 1$. Then we have
\begin{align}
|V^{(h)}_f(t,\xi)|^\gamma&\,=\frac{1}{2^\gamma}\sum_{k=1}^K\sum_{\ell=-N_k}^{N_k}B^\gamma_{k,\ell}(t)|\hat{h}(\xi-\ell\phi'_{k,1}(t))|^\gamma+\delta_3(t,\xi)+{\epsilon_3}(t,\xi)\,,\label{Equation:STFTabsGammaPower}
\end{align}
where $\delta_3(t,\xi)$ is defined in (\ref{Proof:Lemma:Definition:delta3}) { and $\epsilon_3(t,\xi)$} is defined in (\ref{Theorem:Statement:STCT:epsilon4}). Moreover, $\delta_3(t,\xi)=0$ when $K=1$. When $K>1$, $\delta_3(t,\xi)$ is supported on $Y_{\text{with-OL}}(t)$ and is bounded by {$\frac{I_0^\gamma}{2^\gamma}\sum_{k=2}^K\sum_{\ell\in S_k}B^\gamma_{k,\ell}(t)\chi_{Z_{k,\ell}}(\xi)$}. $\epsilon_3(t,\xi)$ satisfies $|{\epsilon_3}(t,\xi)|\leq |\epsilon_{2}(t,\xi)|^\gamma$.
\end{lem}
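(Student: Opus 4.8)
The plan is to start from the STFT expansion already established in Lemma~\ref{Lemma:STFTexpansion2}, namely
\begin{equation*}
V^{(h)}_f(t,\xi)=M(t,\xi)+\epsilon_2(t,\xi),\qquad M(t,\xi):=\frac12\sum_{k=1}^K\sum_{\ell=-N_k}^{N_k}B_{k,\ell}(t)\hat{h}(\xi-\ell\phi'_{k,1}(t))e^{i2\pi\phi_{k,\ell}(t)},
\end{equation*}
and to compute $|V^{(h)}_f|^\gamma$ by first peeling off the perturbation $\epsilon_2$ and then analyzing the $\gamma$-th power of the clean main term $M$. For the first step I would set $\epsilon_3:=|V^{(h)}_f|^\gamma-|M|^\gamma$ and invoke the elementary inequality $\big||a+b|^\gamma-|a|^\gamma\big|\le|b|^\gamma$, valid for $0<\gamma\le1$ (it follows by applying the subadditivity $(x+y)^\gamma\le x^\gamma+y^\gamma$, a consequence of Lemma~\ref{Lemma:SmallGammaExpansion} since $\gamma\le1$, in both directions). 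With $a=M$ and $b=\epsilon_2$ this immediately gives $|\epsilon_3(t,\xi)|\le|\epsilon_2(t,\xi)|^\gamma$, which is the stated control; since $\epsilon_2$ is of order $\epsilon$ by Lemma~\ref{Lemma:STFTexpansion2}, this also yields that $\epsilon_3$ is of order $\epsilon^\gamma$.

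The heart of the argument is then to show $|M|^\gamma$ equals the diagonal sum $\frac1{2^\gamma}\sum_{k,\ell}B_{k,\ell}^\gamma|\hat{h}(\xi-\ell\phi'_{k,1})|^\gamma$ up to the correction $\delta_3$. I would exploit that each summand of $M$ is supported on the interval $Z_{k,\ell}(t)$ from (\ref{Definition:Zkl}). Two facts about these supports drive everything: within a fixed component they are pairwise disjoint (because $\Delta\le\phi'_{1,1}(t)/4\le\phi'_{k,1}(t)/4$, as noted after Lemma~\ref{Lemma:STFTexpansion2}), and across components the recursive definition of $S_k$ guarantees that two intervals $Z_{k,i}$, $Z_{\ell,j}$ with good indices $i\notin S_k$, $j\notin S_\ell$ never overlap. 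Consequently the ``good'' intervals are globally pairwise disjoint, so on the complement of $Y_{\text{with-OL}}(t)$ at most one summand of $M$ is active at any $\xi$; there $|M(t,\xi)|=\frac12B_{k,\ell}(t)|\hat{h}(\xi-\ell\phi'_{k,1}(t))|$ for the unique active $(k,\ell)$, and its $\gamma$-th power coincides term-by-term with the diagonal sum, so $\delta_3=0$ off $Y_{\text{with-OL}}(t)$. When $K=1$ the whole line is of this type (no cross-component collisions are possible), which yields $\delta_3\equiv0$.

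On $Y_{\text{with-OL}}(t)$ several summands may overlap, and here the correction is genuinely nonzero. Writing the active summands at $\xi$ and separating the (at most one) good-index term from the bad-index terms $\ell\in S_k$, I would bound $|M|^\gamma$ above by the full diagonal sum using the subadditivity $\big|\sum_i a_i\big|^\gamma\le\sum_i|a_i|^\gamma$ (so that $\delta_3\le0$), and control the resulting deficit by the $\gamma$-mass carried by the overlapping multiples, estimating each $|\hat{h}|\le\|\hat{h}\|_{L^\infty}\le I_0$. Since $S_1=\emptyset$, only $k\ge2$ contribute, producing the envelope $\frac{I_0^\gamma}{2^\gamma}\sum_{k=2}^K\sum_{\ell\in S_k}B_{k,\ell}^\gamma(t)\chi_{Z_{k,\ell}}(\xi)$.

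The main obstacle I anticipate is precisely this overlap bookkeeping: one must verify that the greedy construction of the sets $S_k$ really does isolate every potentially colliding multiple, and that the loss of $|M|$ against the diagonal sum on $Y_{\text{with-OL}}(t)$ caused by possible destructive interference is absorbed into the overlapping-multiple envelope rather than charged to the isolated fundamental terms. The analytic input is only the $\gamma\le1$ subadditivity of $x\mapsto x^\gamma$ from Lemma~\ref{Lemma:SmallGammaExpansion}; the combinatorial input---tracking which multiples of which components can fall within a common window of width $2\Delta$---is the delicate part, and it is what ties the estimate to the sets $S_k$ and to the frequency-separation hypotheses of Assumption~\ref{Assumption:GeneralAssumption}.
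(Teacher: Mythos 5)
Your proposal follows the paper's proof in all essentials: the same splitting (first peel off $\epsilon_2$ from Lemma \ref{Lemma:STFTexpansion2} and set $\epsilon_3:=|V^{(h)}_f|^\gamma-|M|^\gamma$, controlled by $|\epsilon_2|^\gamma$ via the power inequality derived from Lemma \ref{Lemma:SmallGammaExpansion}; then compare $|M|^\gamma$ with the diagonal sum), the same support analysis (good windows are globally pairwise disjoint by the in-component spacing and the recursive definition of $S_k$, so $\delta_3$ vanishes off $Y_{\text{with-OL}}(t)$ and vanishes identically when $K=1$), and the same final envelope via subadditivity and $\|\hat{h}\|_{L^\infty}\le I_0$.

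The one substantive point concerns the overlap bookkeeping that you flag as your ``main obstacle,'' and your worry is justified: carried out honestly, your absorption argument does \emph{not} reproduce the constant in the statement. Because $S_k$ is defined greedily -- an index $i$ of component $k$ is marked bad only when $Z_{k,i}(t)$ meets a \emph{good} window $Z_{\ell,j}(t)$, $j\notin S_\ell(t)$, of an earlier component -- every bad window overlaps some good window, so at a generic $\xi\in Y_{\text{with-OL}}(t)$ the active summands of $M$ consist of one good term $a_g$ together with bad terms $a_{b_1},\dots,a_{b_m}$. The paper's own proof passes from (\ref{Proof:Lemma:Definition:delta3}) to (\ref{Proof:Lemma:Definition:delta3part2}) by discarding $a_g$, and that is precisely the unjustified step; with $a_g$ retained, the best available estimate is
\begin{equation*}
|\delta_3(t,\xi)|\;\le\;\Big(\sum_{j}|a_{b_j}|\Big)^{\gamma}+\sum_{j}|a_{b_j}|^{\gamma}\;\le\;\frac{2\,I_0^\gamma}{2^\gamma}\sum_{k=2}^K\sum_{\ell\in S_k}B^\gamma_{k,\ell}(t)\chi_{Z_{k,\ell}}(\xi),
\end{equation*}
and the factor $2$ is sharp: if a single bad term (nearly) cancels the good one, $a_{b}=-a_g$, then $\delta_3=-2|a_b|^\gamma$, which can exceed the stated envelope $\frac{I_0^\gamma}{2^\gamma}B^\gamma_{k,\ell}(t)\chi_{Z_{k,\ell}}(\xi)$ near the center of the bad window. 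So your route, completed correctly, proves the lemma with the $\delta_3$ envelope weakened by a factor $2$; the constant printed in the lemma is an artifact of the paper's oversight, not something your (more careful) bookkeeping can recover. The discrepancy is harmless downstream -- the $\delta_3$ bound only feeds the unoptimized constant in $E_1$ -- but in writing up you should either prove the factor-$2$ bound or point out explicitly the gap in (\ref{Proof:Lemma:Definition:delta3part2}).
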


{
\begin{proof}
Let $\delta_3(t,\xi)$ and ${\epsilon_3}(t,\xi)$ be defined as
\begin{equation}
\label{Proof:Lemma:Definition:delta3}
\delta_3(t,\xi) := \Big|\frac{1}{2} \sum_{k=1}^K \sum_{\ell = -N_k}^{N_k} B_{k,\ell}(t) \hat{h}(\xi - \ell \phi'_{k,1}(t))e^{i2\pi\phi_{k,\ell}(t)}\Big|^{\gamma}  - \frac{1}{2^{\gamma}} \sum_{k=1}^K \sum_{\ell = -N_k}^{N_k} B_{k,\ell}^{\gamma}(t) |\hat{h}(\xi - \ell \phi'_{k,1}(t))|^{\gamma}
\end{equation}
and
\begin{align}
\label{Theorem:Statement:STCT:epsilon4}
{\epsilon_3}(t,\xi) := |V^{(h)}_f(t,\xi)|^\gamma  -\Big|\frac{1}{2} \sum_{k=1}^K \sum_{\ell = -N_k}^{N_k} B_{k,\ell}(t) \hat{h}(\xi - \ell \phi'_{k,1}(t))e^{i2\pi\phi_{k,\ell}(t)}\Big|^{\gamma}.
\end{align}
That is,
\begin{equation}\label{Proof:Theorem:VfgammaPower:expansion}
 |V^{(h)}_f(t,\xi)|^\gamma  =  \frac{1}{2^{\gamma}} \sum_{k=1}^K \sum_{\ell = -N_k}^{N_k} B_{k,\ell}^{\gamma}(t) |\hat{h}(\xi - \ell \phi'_{k,1}(t))|^{\gamma} + \delta_3(t,\xi) + {\epsilon_3}(t,\xi).
\end{equation}
According to Lemmas \ref{Lemma:SmallGammaExpansion} and \ref{Lemma:STFTexpansion2}, when $\epsilon$ is small enough, by the triangular inequality that $\big||V^{(h)}_f(t,\xi)|- |\frac{1}{2} \sum_{k=1}^K \sum_{\ell = -N_k}^{N_k} B_{k,\ell}(t) \hat{h}(\xi - \ell \phi'_{k,1}(t))e^{i2\pi\phi_{k,\ell}(t)}|\big|\leq|\epsilon_2(t,\xi)|$, we have
\begin{equation}
|{\epsilon_3}(t,\xi)| \leq |\epsilon_2(t,\xi)|^{\gamma}.
\end{equation}

Note that when  $\xi \in Y_{\text{no-OL}}(t)$, $\delta_3(t,\xi) =0$ since the supports of all summands in $\sum_{k=1}^K\sum_{\ell =-N_k}^{N_k}B_{k,\ell}(t)|\hat{h}(\xi-\ell\phi'_{k,1}(t))|$ do not overlap for each $\xi\in Y_{\text{no-OL}}(t)$. 
Therefore, we have
\begin{equation}
\label{Proof:Lemma:Definition:delta3part2}
\delta_3(t,\xi) = \frac{1}{2^\gamma}  \left(\Big|\sum_{k=2}^K\sum_{\ell \in S_k} B_{k,\ell}(t) \hat{h}(\xi - \ell \phi'_{k,1}(t))e^{i2\pi\phi_{k,\ell}(t)}\Big|^{\gamma}  - \sum_{k=2}^K\sum_{\ell \in S_k}B_{k,\ell}^{\gamma}(t) |\hat{h}(\xi - \ell \phi'_{k,1}(t))|^{\gamma}\right)\,.
\end{equation}
Hence,
\begin{align*}
|\delta_3(t,\xi) |
&= \left| \Big|\frac{1}{2} \sum_{k=1}^K \sum_{\ell \in S_k(t)} B_{k,\ell}(t) \hat{h}(\xi - \ell \phi'_{k,1}(t))e^{i2\pi\phi_{k,\ell}(t)}\Big|^{\gamma}  - \frac{1}{2^{\gamma}} \sum_{k=1}^K \sum_{\ell \in S_k(t)} B_{k,\ell}^{\gamma}(t) |\hat{h}(\xi - \ell \phi'_{k,1}(t))|^{\gamma} \right| \\
& = \frac{1}{2^{\gamma}} \sum_{k=1}^K \sum_{\ell \in S_k(t)} B_{k,\ell}^{\gamma}(t) |\hat{h}(\xi - \ell \phi'_{k,1}(t))|^{\gamma}
- \left|\frac{1}{2} \sum_{k=1}^K \sum_{\ell \in S_k(t)} B_{k,\ell}(t) \hat{h}(\xi - \ell \phi'_{k,1}(t))e^{i2\pi\phi_{k,\ell}(t)}\right|^{\gamma}\,,
\end{align*}
since $\big|\frac{1}{2} \sum_{k=1}^K \sum_{\ell \in S_k(t) } B_{k,\ell}(t) \hat{h}(\xi - \ell \phi'_{k,1}(t))e^{i2\pi\phi_{k,\ell}(t)}\big|^{\gamma}  \leq \frac{1}{2^{\gamma}} \sum_{k=1}^K \sum_{\ell \in S_k(t) } B_{k,\ell}^{\gamma}(t) |\hat{h}(\xi - \ell \phi'_{k,1}(t))|^{\gamma}$  by Lemma \ref{Lemma:SmallGammaExpansion}.
Note that when $K=1$, $S_1(t) = \emptyset$. Putting these together, we have
\begin{equation}
\label{Proof:Lemma:Bound:delta3}
|\delta_3(t,\xi)| \leq \frac{1}{2^\gamma}\sum_{k=2}^K\sum_{\ell\in S_k}B^\gamma_{k,\ell}(t)\|\hat{h}\|^\gamma_{L^\infty}\chi_{Z_{k,\ell}}(\xi)\leq \frac{I^\gamma_0}{2^\gamma} \sum_{k=2}^K\sum_{\ell\in S_k}B^\gamma_{k,\ell}(t)\chi_{Z_{k,\ell}}(\xi)
\end{equation}
which completes the proof.
\end{proof}
}

Before finishing the proof, we need to control the error introduced by $\delta_3(t,\xi)$ in Lemma \ref{Lemma:STFTabsGammaPower} {when $K\geq 2$}. Note that $\delta_3(t,\xi)$ is supported on $Y_{\text{with-OL}}(t)$. We now control this set.

\begin{lem}\label{Lemma:ErdosTuranBound}
Suppose Assumption \ref{Assumption:GeneralAssumption} holds and $K>1$. For each $t\in\RR$, we have for each $k\in\{2,\ldots,K\}$ the following bound:
\begin{equation}
\frac{\#S_{k}(t)}{N_k}\leq { \sum_{\ell=1}^{k-1}}\Big[\frac{4\Delta}{\phi'_{\ell,1}(t)} +E^{(\ell)}(N_k)\Big],
\end{equation}
where $\#S_{k}(t)$ is the cardinal number of the set $S_{k}(t)$ and $E^{(\ell)}(N_k)\geq0$ is defined in (\ref{Theorem:Statement:STCT:Eell}). Clearly $\frac{\#S_{1}(t)}{N_1}=0$.
\end{lem}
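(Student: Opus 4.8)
The plan is to recast the overlap condition defining $S_k(t)$ as a question about how often the linear sequence $(n\alpha_\ell)_{n\ge1}$ lands near the integers modulo one, and then to control that count with a quantitative form of Weyl equidistribution, the Erd\"os--Tur\'an inequality.

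First I would reduce to positive indices and reformulate geometrically. Since $Z_{k,i}(t)\cap Z_{\ell,j}(t)\neq\emptyset$ exactly when $Z_{k,-i}(t)\cap Z_{\ell,-j}(t)\neq\emptyset$, and $S_k(t)$ is built from symmetric pairs $\{i,-i\}$ together with $\{0\}$, it suffices to count the positive multiples $n\in\{1,\ldots,N_k\}$ that overlap some earlier component; dropping the constraint $j\notin S_\ell(t)$ only enlarges this set. Writing $Z_{k,n}(t)=[n\phi'_{k,1}(t)-\Delta,\,n\phi'_{k,1}(t)+\Delta]$, two intervals meet iff their centers differ by at most $2\Delta$, so with $\alpha_\ell:=\phi'_{k,1}(t)/\phi'_{\ell,1}(t)$ the overlap with the $\ell$-th component is equivalent to $\mathrm{dist}(n\alpha_\ell,\ZZ)\le 2\Delta/\phi'_{\ell,1}(t)$. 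A union bound over $\ell=1,\ldots,k-1$ then gives
\[
\#S_k(t)\ \le\ 2\sum_{\ell=1}^{k-1}\#\Big\{n\in\{1,\ldots,N_k\}:\ \mathrm{dist}\big(n\alpha_\ell,\ZZ\big)\le \tfrac{2\Delta}{\phi'_{\ell,1}(t)}\Big\}+1 .
\]

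Next I would count each inner set by equidistribution. It is the number of residues of $(n\alpha_\ell)_{n=1}^{N_k}$ modulo one falling in a symmetric arc about $0$ of length $4\Delta/\phi'_{\ell,1}(t)$, hence is at most $N_k$ times the sum of this length and the discrepancy $D_{N_k}(\alpha_\ell)$. The length produces the main term $4\Delta/\phi'_{\ell,1}(t)$, and I would \emph{define} the error $E^{(\ell)}(N_k)$ to be the Erd\"os--Tur\'an majorant of the discrepancy: for a suitably chosen integer $M\ge1$,
\[
D_{N_k}(\alpha_\ell)\ \le\ \frac{1}{M+1}+\sum_{m=1}^{M}\frac{1}{m}\,\Big|\frac{1}{N_k}\sum_{n=1}^{N_k}e^{i2\pi m n\alpha_\ell}\Big|\ =:\ E^{(\ell)}(N_k),
\]
where each Weyl sum is geometric and therefore satisfies $\big|\sum_{n=1}^{N_k}e^{i2\pi m n\alpha_\ell}\big|\le 1/|\sin(\pi m\alpha_\ell)|$. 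Dividing by $N_k$, summing over $\ell$, and absorbing the (non-optimized) numerical factors coming from the symmetric pairs and the single index $0$ yields the stated inequality; the case $k=1$ is immediate since $S_1(t)=\{0\}$.

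The crux is the last display, and it is precisely where the \emph{non-multiple} condition of Definition \ref{DefBClassMultipleTimeSeries} is indispensable. If $\alpha_\ell$ were a positive integer, then $n\alpha_\ell\in\ZZ$ for every $n$, so \emph{every} multiple of the $k$-th component would overlap a multiple of the $\ell$-th one, $\#S_k(t)$ would be of order $N_k$, and the Weyl sums would never decay. The non-multiple condition forces $\alpha_\ell\notin\ZZ$, so $\sin(\pi m\alpha_\ell)\neq0$ and the sums are genuinely small: for irrational $\alpha_\ell$ the Weyl averages tend to zero and $E^{(\ell)}(N_k)\to0$, while for a non-integer rational $\alpha_\ell=p/r$ the first resonance in the Erd\"os--Tur\'an sum is postponed to $m=r$, so taking $M<r$ keeps $E^{(\ell)}(N_k)$ below one with a residual plateau of size $O(1/r)$. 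This is the sense in which the decay rate reflects the arithmetic nature of $\phi'_{k,1}(t)/\phi'_{\ell,1}(t)$; tracking the exact constants is then routine bookkeeping, which I would fold into the definition of $E^{(\ell)}(N_k)$.
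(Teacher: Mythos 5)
Your proof is correct and takes essentially the same route as the paper's: a union bound over the earlier components $\ell=1,\ldots,k-1$, recasting the overlap condition as the event that $n\,\phi'_{k,1}(t)/\phi'_{\ell,1}(t)$ lies within $2\Delta/\phi'_{\ell,1}(t)$ of an integer, and bounding that count by the main term $4\Delta/\phi'_{\ell,1}(t)$ plus an Erd\"os--Tur\'an discrepancy term, which is exactly how the paper defines $E^{(\ell)}(N_k)$. The only differences are cosmetic: you keep explicit the factor of $2$ from the symmetric pairs $\{i,-i\}$ and the index $0$ (the paper silently identifies $S_k(t)$ with its nonnegative part, having declared constants non-optimized), you make the role of the non-multiple condition explicit where the paper uses it implicitly, and you handle the rational non-integer ratio through Weyl-sum resonances rather than the paper's exact lattice argument for $m a \bmod b$.
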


This Lemma gives a bound of the set $S_k(t)$, which indicates that only a small fraction of the multiples of the $k$-th ANH function has the danger of overlapping with other ANH function.

\begin{proof}
Fix $k\in\{2,3,\ldots,K\}$ and $\ell\in\{1,\ldots,k-1\}$. Define a set
\begin{equation}
S_{k,\ell}(t):=\{m|\, { m\in \NN\cup\{0\}},\,Z_{k,m}(t)\cap Z_{\ell,j}\neq \emptyset,\,j\in { \NN\cup\{0\}}\},
\end{equation}
which is the set of multiples of $\phi'_{k,1}(t)$ that {overlap} some multiples of $\phi'_{\ell,1}(t)$.
Clearly, { $ S_{k}(t) \subset \cup_{\ell=1}^{k-1}S_{k,\ell}(t)$} and $S_{k,\ell_1}(t)$ and $S_{k,\ell_2}(t)$ might overlap when $\ell_1\neq \ell_2$. Thus, $\#S_{k}(t)\leq { \sum_{\ell=1}^{k-1}}\#S_{k,\ell}(t)$.
To evaluate the cardinality of the set $S_{k,\ell}(t)$, denote a sequence $s_{k,\ell}(m)$, $m\in \NN$, so that
\begin{equation}
s_{k,\ell}(m)=m\phi'_{k,1}(t)\,\, (\text{mod }\phi'_{\ell,1}(t)).
\end{equation}
By the compactly supported assumption of $\hat{h}$, when $s_{k,\ell}(m)$ lands in
\begin{equation*}
\mathcal{Z}_{k,\ell}:=[0,2\Delta]\cup [\phi'_{\ell,1}(t)-2\Delta,\phi'_{\ell,1}(t)),
\end{equation*}
we know that $Z_{k,m}(t)\cap Z_{\ell,j}\neq \emptyset$ for some $j$; that is,
\begin{equation*}
S_{k,\ell}(t)=\{ {0} \leq m \leq N_k \, | \,s_{k,\ell}(m)\in \mathcal{Z}_{k,\ell}  \}.
\end{equation*}

When $\phi'_{k,1}(t)/\phi'_{\ell,1}(t)$ is a rational number, that is, $\phi'_{k,1}(t)/\phi'_{\ell,1}(t)=a/b$, where $a,b\in \NN$ and are co-prime numbers, then the sequence $\{s_{k,\ell}(m)\}_{m\in \NN}$ only lands on $\{0,\phi'_{\ell,1}(t)/b,\ldots,(b-1)\phi'_{\ell,1}(t)/b\}$ uniformly on $[0,\phi'_{\ell,1}(t))$ since the integer $a$ has a multiplicative inverse modulo $b$; that is, there exists $n_0$ such that $an_0\,\, (\text{mod } b)=1$. Thus the claim holds with the worst bound
\begin{equation}
\frac{\#S_{k}(t)}{N_k}\leq \sum_{\ell=1}^{k-1}\frac{4\Delta}{\phi'_{\ell,1}(t)}.
\end{equation}

When $\phi'_{k,1}(t)/\phi'_{\ell,1}(t)$ is an irrational number, the sequence $\{s_{k,\ell}(m)\}$ is equidistributed on $[0,\phi'_{\ell,1}(t)]$ by Weyl's criterion.
We apply the following well-known Erd\"os-Tur\'an inequality \cite[Corollary 1.1]{Montgomery:1994} to {bound}  $\frac{\#S_{k,\ell}(t)}{N_k}$:
\begin{align}
& \Big|  \frac{\#S_{k,\ell}(t)}{N_k} - \frac{4\Delta}{\phi'_{\ell,1}(t)} \Big|
 \leq \frac{1}{J+1} + \frac{3}{N_k}\sum_{n=1}^{J} \frac{1}{n} \left| \sum_{m={0}}^{N_k} e^{i2 \pi n s_{k,\ell}(m) } \right| \label{Proof:Overlap:Bound:ErdosTuran}
\end{align}
for all positive $J$. Denote $E^{(\ell)}_J(N_k)$ to be the right hand side of (\ref{Proof:Overlap:Bound:ErdosTuran}). Then the best upper bound we could obtain from Erd\"os-Tur\'an inequality is
\begin{equation}
E^{(\ell)}(N_k):=\min_{J\in\NN}E^{(\ell)}_J(N_k), \label{Theorem:Statement:STCT:Eell}
\end{equation}
which goes to zero when $N_k\to \infty$; that is, when $N_k\to \infty$, the chance that $s_{k,\ell}(m)$ would land in $\mathcal{Z}_{k,\ell}$ is $\frac{4\Delta}{\phi'_{\ell,1}(t)}$.
Thus, in general we know that for the pair $(k,\ell)$,  we have
\begin{equation*}
\frac{\#S_{k,\ell}(t)}{N_k}\leq \frac{4\Delta}{\phi'_{\ell,1}(t)}+E^{(\ell)}(N_k)
\end{equation*}
and hence
\begin{equation}
\#S_{k,\ell}(t)\leq N_k\Big[\frac{4\Delta}{\phi'_{\ell,1}(t)}+E^{(\ell)}(N_k)\Big],
\end{equation}
which is the number of multiples of $\phi'_{k,1}(t)$ that are close to some multiples of $\phi'_{\ell,1}(t)$. In conclusion, we have
\begin{equation}
\frac{\#S_{k}(t)}{N_k}\leq \sum_{\ell=1}^{k-1}\Big[\frac{4\Delta}{\phi'_{\ell,1}(t)} +E^{(\ell)}(N_k)\Big].
\end{equation}

\end{proof}

By putting the above Lemmas together, we can prove Theorem \ref{Theorem:TimeVaryingCepstrum}, which shows that the STCT does provide the necessary information for the fundamental IF of the ANH function, even when there are more than one component.

\begin{proof}[Proof of Theorem \ref{Theorem:TimeVaryingCepstrum}]
Note that in general $|V^{(h)}_f(t,\cdot)|^\gamma$ is a tempered distribution, so we can define the Fourier transform in the distribution sense. {Define a $\ell^1$ sequence $b_k$, where $b_k(\ell)=B^\gamma_{k,\ell}(t)$ for all $\ell\in\{0,\ldots,N_k\}$,  $b_k(\ell)=0$ for all $\ell>N_k$, and $b_k(-\ell)=b_k(\ell)$ for all $\ell\in \NN\cup\{0\}$. By a direct calculation, for $q>0$, we have
\begin{align}
&\mathcal{F}(\sum_{\ell=-N_k}^{N_k}B^\gamma_{k,\ell}(t)\delta_{\ell\phi_{k,1}'(t)}\star |\hat{h}|^\gamma)(q)\nonumber\\
=&\widehat{|\hat{h}|^\gamma}(q)\sum_{\ell=-N_k}^{N_k}b_k(\ell)e^{i2\pi\ell\phi_{k,1}'(t)q}=\widehat{|\hat{h}|^\gamma}(q)\sum_{\ell=-\infty}^{\infty}b_k(\ell)e^{i2\pi\ell\phi_{k,1}'(t)q}=\widehat{|\hat{h}|^\gamma}(q){\hat{b}_k(q)},\label{Theorem:Statement:STCT:bk}
\end{align}
where $\hat{b}_k$ is the discrete-time Fourier transform of the $\ell^1$ sequence $b_k$, which is a continuous and real.} 

For the term $\delta_3$, since $\delta_3(t,\cdot)$ is compactly supported, continuous by (\ref{Proof:Lemma:Definition:delta3part2}) and is bounded by (\ref{Proof:Lemma:Bound:delta3}), $\delta_3(t,\cdot)\in L^1$ and its Fourier transform could be well defined as a function. {Since the support of $\delta_3$, which is determined by the overlapped multiples of different ANH functions, could not be controlled, we apply the Riemann-Lebesgue theorem to evaluate a simple} bound:
\begin{align}
&|\int  \delta_3(t,\xi)e^{-i2\pi\xi q}d \xi|\leq  { \frac{I_0^{\gamma}}{2^{\gamma}}}\sum_{k=2}^K \sum_{\ell\in S_k(t)}B^\gamma_{k,\ell}(t)\int \chi_{{Z}_{k,\ell}}(\xi)  d\xi \nonumber\\
\leq \,
& { 2\Delta I_0^{\gamma} \sum_{k=2}^K  \sum_{\ell\in S_k(t)}B^\gamma_{k,\ell}(t) \leq 2 \Delta I_0^{\gamma}\sum_{k=2}^K  B^\gamma_{k,1}(t)\sum_{\ell\in S_k(t)}c^\gamma_k(\ell) }\label{Proof:Lemma:Definition:delta3Fourier}
\end{align}
since {$|Z_{k,\ell}| =2\Delta $}.
To control $\sum_{\ell\in S_k(t)}c_k^\gamma(\ell)$, we apply the simple bound $c_k(\ell)\leq \|c_k\|_{\ell^\infty}$ for all $\ell=0,1,\ldots,N_k$. This leads to
\begin{align}
\sum_{\ell\in S_k(t)}c^\gamma_k(\ell)&\leq \#S_{k}(t)\|c_k^\gamma\|_{\ell^\infty}\leq \|c_k^\gamma\|_{\ell^\infty}N_k\sum_{\ell=1}^{k-1}\Big[\frac{4\Delta}{\phi'_{\ell,1}(t)}+E^{(\ell)}(N_k)\Big],\nonumber
\end{align}
where the last inequality holds by Lemma \ref{Lemma:ErdosTuranBound}. Thus, the first term
\begin{equation}
E_1:=\mathcal{F}[\delta_3(t,\cdot)]\label{Theorem:Statement:STCT:E1}
\end{equation}
is bounded by
{
\begin{equation*}
|E_1|\leq 2\Delta I_0^\gamma \sum_{k=2}^K  B^\gamma_{k,1}(t)\|c_k^\gamma\|_{\ell^\infty}N_k\sum_{\ell=1}^{k-1}\Big[\frac{4\Delta}{\phi'_{\ell,1}(t)}+E^{(\ell)}(N_k)\Big].
\end{equation*}
}
{ Note that} $K=1$, since {$\delta_3(t,\xi)=0$}, we know that $E_1=0$ and the bound holds trivially.

The error term ${\epsilon_3}(t,\xi)$ is of order $\epsilon^\gamma$ but in general it decays at the rate of $|\xi|^{-\gamma}$ as $|\xi|\to \infty$, so its Fourier transform is evaluated in the distribution sense. Denote $E_2:=\mathcal{F}[{\epsilon_3}(t,\cdot)]$. We have
\begin{equation}
|E_2(\psi)|=\big|\int {\epsilon_3}(t,\xi) \hat{\psi}(\xi)d\xi\big|\leq \|{\epsilon_3}(t,\cdot)\|_{L^\infty} \|\hat{\psi}\|_{L^1}\label{Theorem:Statement:STCT:E2}
\end{equation}
for all $\psi\in\mathcal{S}$. We have thus {obtained} the claim.

\end{proof}

\begin{rem}\label{Remark:BoundE1}
{Note that the bound for $E_1$, which is the Fourier transform of $\delta_3$, is the worst bound, since we could not control the locations of the overlaps between those multiples of different ANH components in the STFT. 
The problem we encounter could be simplified to the following analytic number theory problem: given an irrational number $\alpha$. Denote $\beta_n=n\alpha-[n\alpha]$, where $n\in\NN\cup\{0\}$ and $[x]$ means the integer part of $x$. Denote the set $I=\{n,-n| n\in\NN\cup\{0\},\,0\leq \beta_n<\zeta\}\cup\{n,-n|\beta_n>1-\zeta\}$, where $\zeta>0$ is a small number. Then, what is the spectral distribution of $\sum_{n\in I}\delta_n\star g$, where $g$ is a smooth and compact function supported on $[-\zeta/2,\zeta/2]$? }
\end{rem}

{
\begin{proof}[Proof of Corollary \ref{Corollary:TimeVaryingCepstrum}]

By (\ref{Assumption:ANH:B_ell}), $b_k(\ell)$ is non-zero for $\ell\in\{-N_k,\ldots,0,\ldots,N_k\}$. Thus $\hat{b}_k$ is  a continuous, real, and periodic function with the period equal to $1/\phi_{k,1}'(t)$. 
By (\ref{Proof:Bound:E1k}), (\ref{Proof:Bound:E2k}), and (\ref{Proof:Bound:E2}), $\epsilon_2(t,\xi)$ is bounded by $Q\epsilon$, where
\begin{align*}
Q:=\sum_{k=1}^K\Big[ &\big(\|c_k\|_{\ell^1}[\phi_{k,1}'(t)I_1+\frac{1}{2}\epsilon M_kI_2]+\pi C_k {\sqrt{\frac{1}{4}B^2_{k,0}(t)+\frac{1}{2}\sum_{\ell=1}^\infty B^2_{k,\ell}(t)}} (\phi_{k,1}'(t)I_2+\frac{1}{3}\epsilon M_kI_3) \big) \\
+& I_0\sqrt{\frac{1}{4}B^2_{k,0}(t)+\frac{1}{2}\sum_{\ell=1}^\infty B^2_{k,\ell}(t)} +
 2\pi I_1  \phi_{k,1}'(t)\sum_{\ell=-N_k}^{N_k} B_{k,\ell}(t)\chi_{\tilde{Z}_{k,\ell}}\Big].
\end{align*}
Thus, when $\sqrt{\frac{1}{4}B^2_{k,0}(t)+\frac{1}{2}\sum_{\ell=1}^\infty B^2_{k,\ell}(t)}$ is sufficiently large and $\epsilon$ is sufficiently small, $\frac{1}{2^{\gamma}} \sum_{k=1}^K \sum_{\ell = -N_k}^{N_k} B_{k,\ell}^{\gamma}(t) |\hat{h}(\xi - \ell \phi'_{k,1}(t))|^{\gamma}$ dominantes $|\epsilon_2(t,\xi)|^\gamma$, since $B^\gamma_{k,\ell}(t)>\epsilon^{\gamma/2}\big(\frac{1}{4}B^2_{k,0}(t)+\frac{1}{2}\sum_{\ell=1}^\infty B^2_{k,\ell}(t)\big)^{\gamma/2}$ and $\epsilon_3(t,\xi)$ is bounded by $Q^\gamma\epsilon^\gamma$.
Moreover, when $\Delta N_k$ is sufficiently small, $\frac{1}{2^{\gamma}} \sum_{k=1}^K \sum_{\ell = -N_k}^{N_k} B_{k,\ell}^{\gamma}(t) |\hat{h}(\xi - \ell \phi'_{k,1}(t))|^{\gamma}$ also dominates $\delta_3(t,\xi)$, and hence we finish the proof.
\end{proof}
}

\section{Acknowledgement}
Hau-tieng Wu's research is partially supported by Sloan Research Fellow FR-2015-65363. Part of this work was done during Hau-tieng Wu's visit to National Center for Theoretical Sciences, Taiwan, and he would like to thank NCTS for its hospitality. Hau-tieng Wu also thanks Dr. Ilya Vinogradov for the discussion of equidistribution sequences. The authors thank Professor Stephen W. Porges for sharing the non-contact PPG signal. {The authors acknowledge the anonymous reviewers for their valuable recommendations to improve the manuscript.}

\bibliographystyle{amsplain}
\bibliography{cepstrumBIB}

\end{document}